\documentclass[11pt]{amsart}

\usepackage{a4wide}

\usepackage{amscd}
\usepackage{amssymb}
\usepackage{amsthm}
\usepackage{amsfonts}
\usepackage{amsmath}
\usepackage{latexsym}
\usepackage{verbatim}
\usepackage{comment}
\usepackage{epsfig}
\usepackage{graphicx}
\usepackage{epstopdf}
\usepackage{stmaryrd}
\usepackage{booktabs}
\usepackage{float}

\makeatletter
\renewcommand\paragraph{\@startsection{paragraph}{4}{\z@}%
  {.8\baselineskip}   
  {.5\baselineskip}   
  {\normalfont\normalsize\bfseries}} 
\makeatother

\newtheorem{algorithm}{Algorithm}
\newtheorem{prop}{Proposition}
\newtheorem{theorem}{Theorem}
\newtheorem{lemma}{Lemma}
\newtheorem{assumption}{Assumption}
\newtheorem{corollary}[theorem]{Corollary}
\newtheorem{definition}{Definition}

\theoremstyle{remark}
\newtheorem{remark}{Remark}

\usepackage{graphicx}
\usepackage{subcaption}
\usepackage{float}

\newcommand{\R}{\ensuremath{\mathbb{R}}}
\newcommand{\E}{\ensuremath{\mathbb{E}}}

\newcommand{\C}{\ensuremath{\mathbb{C}}}
\def\e{{\mathrm{e}}}

\renewcommand{\Re}{\operatorname{Re}}

\allowdisplaybreaks

\usepackage{color}
\definecolor{darkgreen}{rgb}{0, .5, 0}
\definecolor{darkred}{rgb}{.5, 0, 0}

\title[Pricing Derivatives under Self-Exciting Dynamics]{Pricing Derivatives under Self-Exciting Dynamics: A Finite-Difference and Transform Approach}

\author{Aqib Ahmed and Heidar Eyjolfsson}

\thanks{The authors are grateful for funding from the Sustainability Institute and Forum at Reykjavík University, project number 224089 supported by Landsnet; and from the Energy Research fund of Landsvirkjun. The grants contributed to the funding of the project \emph{Climate change, finance, and sustainability} and enabled us to carry out the research that is presented in this article.
A. Ahmed would like to personally thank Imane Fakir for her invaluable support and many insightful discussions related to this article.}

\begin{document}

\begin{abstract}
We consider the pricing of derivatives written on accumulated marks, such as weather derivatives or aggregate loss claims, using a self-exciting marked point process. The jump intensity mean-reverts between events and increases at jump times by an amount proportional to the mark. The resulting state process, where the variable $U_t$ accumulates jump magnitudes, is a piecewise deterministic Markov process (PDMP). We derive the discounted pricing equation as a backward partial integro-differential equation (PIDE) in two spatial dimensions. To overcome the dimensionality, we propose an exponential (Laplace/Fourier) transform in the accumulated mark variable, which diagonalizes the translation operator and reduces the pricing problem to a family of one-dimensional PIDEs in the intensity variable along a Bromwich contour. For Gamma-mixture mark laws (under actuarial or Esscher-tilted measures), the nonlocal jump term is efficiently approximated by generalized Gauss--Laguerre quadrature. We solve the reduced PIDEs backward in time using a monotone IMEX finite difference scheme (implicit upwind drift and discounting, explicit jump operator) and recover option prices via numerical inversion. We provide a rigorous, term-by-term global error bound covering time and space discretization, quadrature, interpolation, and boundary effects, supported by numerical experiments and Monte Carlo benchmarks.
\end{abstract}

\maketitle

\section{Introduction}

The pricing of derivatives written on accumulated environmental or physical variables—such as rainfall, heating/cooling degree days, or aggregate insurance losses—requires models capable of capturing the erratic and clustered nature of the underlying events. Unlike traditional financial assets, these underlying variables are non-tradable and exhibit distinct periods of high activity followed by relative calm. Such clustering (or contagion) implies that the occurrence of an event, or a sudden large-magnitude mark, increases the probability of subsequent events. This makes self-exciting point processes a natural and highly effective candidate for modelling the underlying dynamics in weather and catastrophe risk management.

To capture self-excitation, Hawkes processes \cite{Hawkes1971a, Hawkes1971b} and their variants have been extensively applied in finance (see e.g. Bacry et al. \cite{Bacry2013}, Aït-Sahalia et al. \cite{AitSahalia2015}, Callegaro et al. \cite{Callegaro22}, Ba\~nos et al. \cite{Banos24}) and insurance (see e.g. Dassios and Zhao \cite{DaZh17}, Errais et al. \cite{Errais10}, Swishchuk et al. \cite{Swishchuk21}). Recently, Eyjolfsson and Tjøstheim \cite{ET18, ET21} introduced a flexible class of piecewise deterministic Markov processes (PDMPs) in the sense of Davis \cite{Davis93}, where the event intensity is driven by a stochastic differential equation (SDE). In this framework, the intensity mean-reverts during calm periods but jumps proportionally to the mark size at event times. While statistically powerful for capturing the feedback loop between event magnitudes and arrival rates, pricing contingent claims within this PDMP framework remains computationally challenging.

By Dynkin's formula, the discounted value of a payoff depending on the terminal accumulated mark $U_T$ satisfies a backward partial integro-differential equation (PIDE). In our setting, the state space is two-dimensional: the intensity $\lambda$ and the accumulated mark $u$. The jump operator involves a translation $u \mapsto u+x$, rendering the PIDE non-local in both variables. Solving this two-dimensional PIDE directly via finite differences is computationally prohibitive and prone to severe numerical diffusion, especially for non-smooth payoffs such as capped calls.

To overcome this curse of dimensionality, we propose a spectral reduction technique inspired by the Fourier pricing methods of Carr and Madan \cite{CarrMadan1999} and the exponential-Lévy framework (see Cont and Tankov \cite{ContTankov2004}). By applying an exponential Laplace/Fourier (Bromwich) transform with respect to the accumulation variable $u$, we exploit the fact that exponential functions diagonalize the translation operator. The ansatz $V(t,\lambda,u) = e^{\eta u} F(t,\lambda,\eta)$ effectively factors out $u$, reducing the 2D pricing problem to a family of independent 1D PIDEs in $\lambda$, parameterized by the complex frequency $\eta = \delta + iy$. The option price is then recovered by integrating along the Bromwich contour, equivalently written as a one-dimensional real-valued integral over the Fourier variable $y$.

For the numerical resolution of the reduced 1D PIDEs, we develop a bespoke Implicit-Explicit (IMEX) finite difference scheme, adapting the philosophy of Cont and Voltchkova \cite{ContVoltchkova2005}. The local mean-reverting drift and discounting are treated implicitly using an unconditionally stable upwind scheme, while the non-local jump integral is treated explicitly. To evaluate the jump integral over the semi-infinite domain without introducing artificial truncation, we leverage the structure of the mark distribution. By modelling the marks as a Gamma mixture-of-experts (which conveniently remains a Gamma mixture under the Esscher transform), the integral is computed efficiently and exactly using generalized Gauss--Laguerre quadrature. Off-grid evaluations are handled via shape-preserving Piecewise Cubic Hermite Interpolation (PCHIP) or linear interpolation to guarantee monotonicity and prevent spurious oscillations.

The main contributions of this paper are fourfold:
\begin{itemize}
    \item We formulate a rigorous PDMP pricing framework for cumulative payoffs under self-exciting SDE-driven intensities, defining the extended generator and the corresponding PIDE under both actuarial and Esscher-tilted pricing measures.
    \item We prove a superposition/inversion theorem that validates the reduction of the 2D PIDE to a family of 1D PIDEs via a Bromwich contour integral.
    \item We design a robust IMEX--Gauss--Laguerre numerical scheme and prove its unconditional implicit stability (M-matrix property) and explicit $L^\infty$-Lipschitz bounds.
    \item We provide a comprehensive, term-by-term global convergence analysis, bounding the errors arising from time-stepping, spatial upwinding, quadrature, interpolation, and Bromwich truncation.
\end{itemize}

The remainder of the paper is organized as follows. Section \ref{sec:notations} defines the PDMP framework, the extended generator, the Esscher pricing measure, and the Gamma mixture-of-experts mark law. Section \ref{sec:framework} presents the spectral reduction via the Bromwich transform and proves the inversion theorem. Section \ref{sec:IMEX} details the discrete IMEX--Gauss--Laguerre scheme and establishes the global convergence theorem. Section \ref{sec:global-reconstruction-error} discusses the discrete Fourier reconstruction and its error bounds. Finally, Section \ref{sec:numerics} provides numerical experiments illustrating the accuracy of the algorithm and the economic impact of self-excitation, benchmarked against exact Monte Carlo simulations.

\section{Notation and Setting}\label{sec:notations}

We work on a filtered probability space $(\Omega,\mathcal F,\mathbb F,\mathbb P)$ satisfying the usual hypotheses.
Let $N(dt,dx)$ be an $\mathbb F$--adapted marked point process on $\R_+\times\R_+$, with associated counting process
$N_t:=N((0,t]\times\R_+)$ and jump marks $(X_k)_{k\ge1}\subset\R_+$.
Define the cumulative mark (compound) process
\[
U_t := \int_{(0,t]\times\R_+} x\,N(ds,dx)=\sum_{k=1}^{N_t}X_k.
\]
We consider the space-time state process $Z_t:=(\lambda_t,N_t,U_t)$, where $\lambda_t$ is the event intensity.
Between jumps, $\lambda_t$ follows an ODE, and at each jump time it increases proportionally to the mark,
so that $Z_t$ is a piecewise deterministic Markov process (PDMP) with jumps; see, e.g., the PDMP framework of Davis \cite{Davis93} and the marked point process setting of Br\'emaud \cite{Bremaud1981}.

Building on \cite{DaZh17,ET18,ET21}, we model the intensity as
\begin{equation}\label{eq:intensity-dyn}
d\lambda_t = \mu(\lambda_t)\,dt + \beta\,dU_t,
\end{equation}
where $\mu:\R_+\to\R$ is locally Lipschitz with $\mu(0)\ge0$ (in applications $\mu(\lambda)=\kappa(\bar\lambda-\lambda)$) and $\beta\ge0$ is the self-excitation parameter.
Assume that $N(dt,dx)$ admits an $\mathbb F$--predictable intensity process $(\lambda_t)_{t\ge0}$ in the sense of \cite{Bremaud1981}.
Conditionally  on the pre-jump intensity $\lambda_{t-}=\lambda$, the mark distribution is given by a Borel probability kernel
\[
\nu(\lambda,dx)\quad \text{on }\R_+,
\]
and the $\mathbb F$--compensator of $N(dt,dx)$ under $\mathbb P$ is
\[
\lambda_{t-}\,\nu(\lambda_{t-},dx)\,dt.
\]
When switching to a pricing measure, we denote the corresponding mark kernel by $\tilde\nu(\lambda,dx)$ (and keep the same notation $\lambda_t$ for the intensity under that measure when no ambiguity arises). Fix a compact computational domain $\Lambda:=[\lambda_{\min},\lambda_{\max}]$.
For a bounded function $g:\Lambda\to\C$, set
\[
\|g\|_{\infty,\Lambda}:=\sup_{\lambda\in\Lambda}|g(\lambda)|.
\]
All functions we employ are continuous on \(\Lambda\), hence the essential supremum equals the supremum.

For $\delta>0$, define the exponential moment functional (under the chosen pricing kernel $\tilde\nu$)
\[
C_\delta(\lambda):=\int_{0}^{\infty} e^{\delta x}\,\tilde\nu(\lambda,dx),
\qquad
C_\delta^*(\Lambda):=\sup_{\lambda\in\Lambda} C_\delta(\lambda).
\]
For two finite signed Borel measures $\mu,\nu$ on $\R_+$, write $\sigma=\mu-\nu$ and denote by
$|\sigma|$ its total variation measure (Jordan decomposition). For $\delta\ge0$ such that
$\int_0^\infty e^{\delta x}\,|\sigma|(dx)<\infty$, define the weighted total variation norm
\[
\|\mu-\nu\|_{TV,\delta}:=\int_0^\infty e^{\delta x}\,|\mu-\nu|(dx)
=\sup_{|g(x)|\le e^{\delta x}}\left|\int_0^\infty g(x)\,(\mu-\nu)(dx)\right|.
\]
(Equivalently, $\|\cdot\|_{TV,\delta}$ is the usual total variation norm applied to the signed measure with density $e^{\delta x}$. In particular, it is a norm on the linear space of signed measures with finite $\|\cdot\|_{TV,\delta}$.)

\paragraph{Clamped extension (off-grid evaluation).}
We evaluate off--grid queries with the clamped extension
\[
(E_{\rm clamp}F)(\lambda') :=
\begin{cases}
F(\lambda'), & \lambda'\in\Lambda,\\
F(\lambda_{\min}), & \lambda'<\lambda_{\min},\\
F(\lambda_{\max}), & \lambda'>\lambda_{\max}.
\end{cases}
\]
This extension is non--expansive in sup--norm:
$\|E_{\rm clamp}F-E_{\rm clamp}G\|_{\infty,[0,\infty)}=\|F-G\|_{\infty,\Lambda}$.
Whenever an operator requires values outside $\Lambda$ (e.g.\ $N_{\exp}$),
we use either the clamped extension $E_{\rm clamp}$ (default) or a short linear extrapolation at the boundary for Gauss--Laguerre off--domain nodes;
see Appendix~\ref{app:PCHIP-boundary}.

\medskip
\noindent Consider the extended generator of the process $t \mapsto (t,\lambda_t,N_t,U_t)$. It has the form
\begin{equation}\label{def:Gen}
    \mathcal{A}f(t,\lambda,n,u) = \frac{\partial f}{\partial t} + \mu(\lambda)\frac{\partial f}{\partial \lambda} + \lambda \int_0^\infty \big(f(t,\lambda +\beta x,n+1,u+x) - f(t,\lambda,n,u)\big)\nu(\lambda,dx).
\end{equation}

\begin{prop}\label{prop:Generator}
    If for any $t > 0$ and a given measurable function $f:\R^4 \to \R$, it holds that $f$ has continuous partial derivatives with respect to $t$, $\lambda$, $u$ the map 
    $$
    (t,\lambda,n,u) \mapsto \int_0^\infty f(t,\lambda+\beta x,n+1,u+x)\nu(\lambda,dx)
    $$
    is measurable, with $\mathbb E_{\eta}$ the expectation under $\mathbb P$ given $X_0=\eta$
    \begin{equation}\label{f:condition}
    \E_\eta\bigg [ \int_0^t\lambda_s\int_0^\infty\{f(s,\lambda_{s-}+\beta x,N_{s-}+1,U_{s-}+x) - f(s,\lambda_{s-},N_{s-},U_{s-})\}\nu(\lambda,dx)ds\bigg] < \infty
    \end{equation}
    then $f$ is in the domain of the extended generator of the process $t \mapsto (t,\lambda_t,N_t,U_t)$, i.e. $f \in \mathcal{D}(\mathcal{A})$, where $\mathcal{A}$ is given by \eqref{def:Gen}. 
\end{prop}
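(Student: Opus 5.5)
The plan is to show directly that
\[
M^f_t := f(t,\lambda_t,N_t,U_t) - f(0,\lambda_0,N_0,U_0) - \int_0^t \mathcal{A}f(s,\lambda_s,N_s,U_s)\,ds
\]
is a local martingale, in fact a martingale under the integrability condition \eqref{f:condition}. This is the defining property of the extended generator in the sense of Davis \cite{Davis93}. The argument is pathwise, following Davis's characterisation of the extended generator of a PDMP (Theorem 26.14 in \cite{Davis93}): a chain rule between jumps, a jump decomposition, and a compensation step.

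\emph{Step 1 (pathwise decomposition).} Fix $\omega$ and let $T_1<T_2<\dots$ be the jump times of $N$. On each interval $[T_k,T_{k+1})$ the coordinates $N$ and $U$ are constant, and $\lambda$ solves the ODE $\dot\lambda=\mu(\lambda)$, which has a unique solution because $\mu$ is locally Lipschitz. Since $f$ is $C^1$ in $(t,\lambda)$, the map $s\mapsto f(s,\lambda_s,N_s,U_s)$ is absolutely continuous there. Its derivative is $\partial_t f+\mu(\lambda)\partial_\lambda f$, so no It\^o correction appears. At a jump time $T_k$ with mark $X_k$, the state moves from $(\lambda_{T_k-},N_{T_k-},U_{T_k-})$ to $(\lambda_{T_k-}+\beta X_k,N_{T_k-}+1,U_{T_k-}+X_k)$, by \eqref{eq:intensity-dyn}. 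Summing over intervals gives
\[
f(t,Z_t)-f(0,Z_0)=\int_0^t(\partial_s f+\mu\,\partial_\lambda f)(s,Z_s)\,ds+\int_{(0,t]\times\R_+}\big(f(s,\lambda_{s-}+\beta x,N_{s-}+1,U_{s-}+x)-f(s,Z_{s-})\big)N(ds,dx).
\]
This requires that only finitely many jumps occur in $[0,t]$. That holds a.s. as long as $\lambda$ does not explode; with linear $\mu$ and $\beta\ge0$ one would verify non-explosion by a Hawkes-type comparison. I would state it as a standing assumption implicit in the model.

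\emph{Step 2 (compensation).} Denote the jump integrand by $H(s,x)$; it is predictable because it depends only on left limits. By assumption, $(s,\lambda,n,u)\mapsto\int H\,\nu(\lambda,dx)$ is measurable. The compensator of $N(ds,dx)$ is $\lambda_{s-}\nu(\lambda_{s-},dx)ds$. By the integration theorem for marked point processes (Br\'emaud \cite{Bremaud1981}, Ch.~VIII, T3), $\int H\,(N-\hat N)(ds,dx)$ is a martingale provided $\E\int_0^t\int|H|\,\lambda_{s-}\nu\,ds<\infty$. Condition \eqref{f:condition} supplies this; I would read it with the absolute value inside, which is what the theorem needs. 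Replacing $N$ by its compensator, and using $\lambda_{s-}=\lambda_s$ for a.e.\ $s$, produces exactly the jump term of \eqref{def:Gen}. Hence $M^f$ is a martingale and $f\in\mathcal D(\mathcal A)$.

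\emph{Main obstacle.} The delicate point is Step 2. Integrability is needed to upgrade the local martingale to a true martingale, and the condition must be interpreted with absolute values. Without that, only a localisation argument works: take $\tau_n=T_n\wedge\inf\{s:\lambda_s>n\}$, obtain local martingales, and then pass to the limit by monotone convergence on the positive and negative parts of $H$.
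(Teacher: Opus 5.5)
Your argument is correct and follows essentially the same route as the paper, whose proof defers to Proposition~1 of \cite{ET21} and thus to Davis's characterisation of the extended generator of a PDMP \cite[Thm.~26.14]{Davis93}. You unpack the sufficiency direction of that theorem by hand (chain rule along the flow, then compensation of the jump sum) instead of quoting it. Reading \eqref{f:condition} with the absolute value inside and with $\nu(\lambda_{s-},dx)$ is the right interpretation. The non-explosion you flag corresponds to the finite-expected-jumps condition the paper only imposes later, in Assumption~\ref{ass:pdmp}(iii).
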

\begin{proof}
    The proof is analogous to the proof of Proposition 1 in Eyjolfsson and Tj\o stheim \cite{ET21}.
\end{proof}


\section{Framework and partial integro-differential equations}\label{sec:framework}

Notice that our class of self-exciting processes is a piecewise deterministic Markov process (PDMP) in the sense of Davis \cite{Davis93}. We need the following assumption.
\begin{assumption}\label{ass:pdmp}
\noindent We work on $E=\R_+\times\R_+$ with state $z=(\lambda,u)$ and process $Z_t=(\lambda_t,U_t)$. We will apply the extended generator \eqref{def:Gen} to test functions independent of $N_t$.
\begin{itemize}
    \item[(i)]\textbf{Flow between jumps}. Between jumps, $\lambda$ follows the ODE $\dot\lambda=\mu(\lambda)$, with $\mu$ locally Lipschitz and of at most linear growth; $u$ is constant between jumps.
    \item[(ii)]\textbf{Jump intensity and kernel}. At state $(\lambda,u)$, the jump rate is $\lambda\ge 0$. Conditional on a jump, draw $x>0$ with kernel $\nu(\lambda,dx)$ (measurable in $\lambda$), and update $(\lambda,u)\mapsto(\lambda+\beta x,\,u+x)$.
    \item[(iii)]\noindent\textbf{Local integrability}. On any finite horizon $[t,T]$, $\E\!\big[\int_t^T \lambda_s\,ds\big]<\infty$ (finite expected number of jumps).
    \item[(iv)]\textbf{ Integrable jump term}. For any $F=F(t,\lambda,u)$ used with the generator, the jump term is locally integrable: $\lambda\!\int_0^\infty \!\big|F(t,\lambda+\beta x,u+x)-F(t,\lambda,u)\big|\,\nu(\lambda,dx)<\infty.$
    \item[(v)]\textbf{Domain of the generator}. Functions, $V$, to which the generator is applied are differentiable along the flow (a.e.) and satisfy (iv); in particular $V \in \mathcal D(\mathcal A)$.
\end{itemize}
\end{assumption}

\begin{prop}[Discounted value function and Backward PIDE]\label{prop:PIDE_discount}
Fix $r\ge 0$ and suppose $f\in C_b(\R_+)$ (bounded and continuous) or satisfies a growth condition such that $\E[e^{-r(T-t)}|f(U_T)|]<\infty$ and $f$ is bounded on compact sets. Then under Assumption ~\ref{ass:pdmp}, for $t<T$ the function 
\begin{equation}\label{eq:V_discount}
V(t,\lambda,u):=\E\!\left[e^{-r(T-t)}\,f(U_T)\ \big|\ \lambda_t=\lambda,\ U_t=u\right],
\qquad V(T,\lambda,u)=f(u).
\end{equation}

\noindent is the unique absolutely continuous solution of the PIDE 
\begin{equation}\label{eq:PIDE_discount}
\boxed{\;
(\mathcal{A}-r)V = 0}
\end{equation}
where $\mathcal{A}$ is given by \eqref{def:Gen} with terminal condition \eqref{eq:V_discount}.

\end{prop}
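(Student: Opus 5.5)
The plan has two halves: existence, via a martingale argument based on Dynkin's formula, and uniqueness, via a verification argument applied to an arbitrary solution.

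\emph{Existence.} By the Markov property of $Z_t=(\lambda_t,U_t)$ under Assumption~\ref{ass:pdmp}, the process
\[
M_s:=e^{-r(s-t)}V(s,\lambda_s,U_s)=\E\big[e^{-r(T-t)}f(U_T)\,\big|\,\mathcal F_s\big],\qquad s\in[t,T],
\]
is a martingale. The growth hypothesis $\E[e^{-r(T-t)}|f(U_T)|]<\infty$ ensures $M$ is uniformly integrable on $[t,T]$.

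To show that $V$ is absolutely continuous along the flow and to identify its derivative, I would use the PDMP structure of Davis \cite{Davis93}. Let $\phi_s(\lambda)$ denote the flow of $\dot\lambda=\mu(\lambda)$; it is well defined by local Lipschitz continuity and linear growth of $\mu$. Conditioning on the first jump time $\tau$ after $t$ gives the renewal equation
\[
V(t,\lambda,u)=e^{-\int_t^T(r+\phi_{s-t}(\lambda))ds}f(u)
+\int_t^T e^{-\int_t^s(r+\phi_{v-t}(\lambda))dv}\,\phi_{s-t}(\lambda)\int_0^\infty V\big(s,\phi_{s-t}(\lambda)+\beta x,u+x\big)\nu\big(\phi_{s-t}(\lambda),dx\big)\,ds .
\]
Evaluating this along the characteristic $s\mapsto(s,\phi_{s-t}(\lambda),u)$ and differentiating in $s$, I expect to obtain
\[
\partial_t V+\mu(\lambda)\partial_\lambda V+\lambda\int_0^\infty\big(V(t,\lambda+\beta x,u+x)-V(t,\lambda,u)\big)\nu(\lambda,dx)-rV=0
\]
almost everywhere along the flow, with $V(T,\cdot,u)=f(u)$. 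This is $(\mathcal A-r)V=0$ with $\mathcal A$ from \eqref{def:Gen}, where the flow derivative plays the role of $\partial_t+\mu\partial_\lambda$.

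The jump integral is finite by Assumption~\ref{ass:pdmp}(iv). Condition \eqref{f:condition} follows from (iii) together with integrability of $f(U_T)$, so Proposition~\ref{prop:Generator} places $V\in\mathcal D(\mathcal A)$.

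\emph{Uniqueness.} Let $W$ be any absolutely continuous solution of \eqref{eq:PIDE_discount} in $\mathcal D(\mathcal A)$ with $W(T,\cdot,u)=f(u)$, satisfying the same growth class. By the extended-generator property (Proposition~\ref{prop:Generator}), the process
\[
e^{-r(s-t)}W(s,\lambda_s,U_s)-\int_t^s e^{-r(v-t)}(\mathcal A-r)W(v,\lambda_v,U_v)\,dv
\]
is a local martingale. Since the integrand vanishes, $e^{-r(s-t)}W(s,Z_s)$ is itself a local martingale. I would localize with $\tau_n:=\inf\{s:N_s\ge n \text{ or } \lambda_s\ge n\}\wedge T$. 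Assumption (iii) gives finitely many jumps on $[t,T]$ almost surely, and the linear growth of $\mu$ rules out explosion, so $\tau_n\uparrow T$. Optional stopping then yields
\[
W(t,\lambda,u)=\E\big[e^{-r(\tau_n-t)}W(\tau_n,Z_{\tau_n})\big].
\]
Letting $n\to\infty$ by dominated or uniform-integrability convergence gives $W=V$.

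\emph{Main obstacle.} The hardest step is justifying this limit, i.e.\ showing that $\{e^{-r(\tau_n-t)}W(\tau_n,Z_{\tau_n})\}_n$ is uniformly integrable. For $f\in C_b$ I would first show $W$ is bounded, using a maximum-principle or comparison argument for the renewal equation. In the growth case, uniqueness has to be understood within the class of solutions dominated by the growth function $g$ bounding $|f|$, where $\E[g(U_T)]<\infty$. Using monotonicity of $U$, I would bound $|W(\tau_n,Z_{\tau_n})|$ by a constant times $\E[g(U_T)\mid\mathcal F_{\tau_n}]$, which is a uniformly integrable family.
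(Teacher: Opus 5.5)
Your architecture is sound and close to the paper's: a martingale/Dynkin argument for existence and a verification argument for uniqueness. However, one step of the uniqueness half fails as stated.

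\textbf{The gap.} For $f\in C_b$ you propose to \emph{prove} that an arbitrary absolutely continuous solution $W$ with $W(T,\cdot)=f$ is bounded, by a maximum principle or a comparison argument for the renewal equation. No such argument can exist, because uniqueness is false without an a priori growth restriction. Take $\mu\equiv0$ and $\beta=0$, so the intensity $\lambda>0$ is frozen, and take $\nu(\lambda,\cdot)=\delta_1$. The PIDE then becomes the forward recursion $W(t,\lambda,u+1)=W(t,\lambda,u)+\lambda^{-1}\bigl(rW-\partial_tW\bigr)(t,\lambda,u)$, so $W$ may be prescribed freely on $u\in[0,1)$. Prescribe $a(t)\chi(u)$ there, with $a(t)=e^{-1/(T-t)}$ (flat at $t=T$) and $0\neq\chi\in C_c^\infty((0,1))$, and propagate. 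This gives a smooth solution $W_0\not\equiv0$ with $W_0(T,\cdot)\equiv0$, because every propagated coefficient is a linear combination of derivatives of $a$, and these all vanish at $T$. Hence $V+W_0$ is a second, necessarily unbounded, solution with terminal data $f$.

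\textbf{The repair.} Uniqueness has to be asserted within a class, exactly as the paper does (``uniqueness holds in $C_b$''). Your phrase ``satisfying the same growth class'' already points to the fix. You should \emph{assume} $|W|\le C$, respectively $|W(t,\lambda,u)|\le C\,g(u)$ with $g$ nondecreasing and $\E[g(U_T)]<\infty$, rather than try to derive it.

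Once the class is imposed, your localisation is unnecessary. Without the class it would not even work, since stopping when $N\ge n$ or $\lambda\ge n$ does not control $W$ just after the jump at $\tau_n$. With the class, monotonicity of $U$ and $g$ gives $|e^{-r(s-t)}W(s,Z_s)|\le C\,g(U_T)\in L^1$ for all $s\in[t,T]$. The local martingale is therefore dominated, hence a true martingale, and $W(t,\lambda,u)=\E[e^{-r(T-t)}f(U_T)]=V(t,\lambda,u)$ at once.

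\textbf{A smaller point on existence.} Proposition~\ref{prop:Generator} requires continuous partial derivatives in $t,\lambda,u$, which you have not established for $V$; you only obtain absolute continuity along the flow. To get $V\in\mathcal D(\mathcal A)$, use Davis's characterisation of the extended-generator domain instead: absolute continuity along the flow plus local integrability of $\sum_{T_i\le\cdot}|V(Z_{T_i})-V(Z_{T_i-})|$. Alternatively, rely on Assumption~\ref{ass:pdmp}(v). The integrability is immediate for bounded $f$, since the sum is at most $2\|f\|_\infty\E[N_T-N_t]<\infty$ by (iii). In the growth case it needs a localising argument, so ``follows from (iii) together with integrability of $f(U_T)$'' is too quick there.

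\textbf{Comparison with the paper.} With these repairs, your first-jump renewal equation gives a more self-contained existence proof than the paper's. The paper takes $V\in\mathcal D(\mathcal A)$ for granted and reads off $(\mathcal A-r)V=0$ from the Dynkin decomposition of the martingale $e^{-r(s-t)}V(s,Z_s)$. You instead derive absolute continuity along the flow, and the equation itself, from the first-jump decomposition.
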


\begin{proof}
This follows from Dynkin's formula for PDMPs (see Davis~\cite[Sec.~32]{Davis93}). Applying the extended generator to the discounted process $s \mapsto e^{-r(s-t)}V(s, \lambda_s, U_s)$ and equating the martingale part to zero yields $(\mathcal A-r)V=0$. Uniqueness holds in $C_b$.
\end{proof}


\medskip
\subsection{Pricing measure}
\noindent Because the market is incomplete, there is no canonical risk–neutral measure.
We consider two pricing conventions that both lead to the backward template
$\partial_t V+\mu\,\partial_\lambda V+\mathsf J[V]-rV=0$,
with different jump operators $\mathsf J$.
\smallskip
\begin{enumerate}
  \item \textbf{Actuarial pricing under $\mathbb P$ :} Define $V$ by \eqref{eq:V_discount} with expectation under $\mathbb P$.
The jump operator is
\[
\mathsf J_{\mathbb P}[V](t,\lambda,u)
:=\lambda\!\int_0^\infty\!\big[V(t,\lambda+\beta x,u+x)-V(t,\lambda,u)\big]\nu(\lambda,dx),
\]
so $V$ solves \eqref{eq:PIDE_discount} with $\nu$.
\smallskip
\item \textbf{Esscher pricing on marks :} Fix $\theta$ in the domain of exponential moments such that $k_\theta(\lambda):=\int_0^\infty e^{\theta x}\,\nu(\lambda,dx)<\infty$ for all $\lambda \in \Lambda$.
Following the Esscher approach for jump models \cite{GerberShiu1994,ContTankov2004,BeBe12}, we define the tilted (normalised) mark kernel
\[
\nu_\theta(\lambda,dx)\ :=\ \frac{e^{\theta x}}{k_\theta(\lambda)}\,\nu(\lambda,dx).
\]
By the Girsanov theorem for marked point processes (see, e.g., Cohen \& Elliott~\cite[Cor.~15.3.7]{CohenElliott2015} or Brémaud~\cite{Bremaud1981}), introducing the standard Radon-Nikodym density defined by this tilt yields an equivalent martingale measure $\mathbb Q_\theta$. Assuming standard integrability conditions so that this density is a true martingale on $[0,T]$, the compensator of the jump measure under $\mathbb Q_\theta$ becomes $\lambda_s\,\nu_\theta(\lambda_{s-},dx)\,ds$. 

\noindent Consequently, the value function $V_\theta(t,\lambda,u):=\E_{\mathbb Q_\theta}\!\big[e^{-r(T-t)}f(U_T)\mid \lambda_t=\lambda,U_t=u\big]$ satisfies the same backward PIDE template, but with the jump operator evaluated under the tilted kernel:
\[
\mathsf J_\theta[V](t,\lambda,u)\ =\
\lambda\!\int_0^\infty\!\big[V_\theta(t,\lambda+\beta x,u+x)-V_\theta(t,\lambda,u)\big]\,\nu_\theta(\lambda,dx).
\]
\smallskip
\end{enumerate}

\noindent In both cases the backward structure is identical; only the jump operator changes
(from $\nu$ to $\nu_\theta$). We now state the corresponding PIDE under the Esscher convention.

\medskip
\begin{assumption}[Esscher admissibility]\label{ass:esscher}
There exists $\theta_0>0$ such that for every compact $K\subset\R_+$ and $|\theta|<\theta_0$,
\[
\sup_{\lambda\in K}\int_0^\infty e^{\theta x}\,\nu(\lambda,dx)<\infty,
\qquad
k_\theta(\lambda):=\int_0^\infty e^{\theta x}\,\nu(\lambda,dx)\in(0,\infty)\ \text{for }\lambda\in K.
\]
Moreover $t\mapsto\lambda_{t-}$ is predictable and locally bounded (up to localization). We assume $k_\theta(\lambda)\in(0,\infty)$ and locally bounded and bounded away from 0 on compact $\lambda$-sets.
\end{assumption}

\begin{prop}[Backward PIDE under Esscher pricing]\label{prop:PIDE_Esscher}
Under Assumptions~\ref{ass:pdmp} and \ref{ass:esscher}, $V_\theta(t,\lambda,u):=\E_{\mathbb Q_\theta}[e^{-r(T-t)}f(U_T)\mid \lambda_t=\lambda,U_t=u]$
solves, for $t<T$,
\begin{equation}\label{eq:PIDE_Esscher}
\boxed{\;
\partial_t V_\theta
+\mu(\lambda)\,\partial_\lambda V_\theta
+\lambda\!\int_0^\infty\!\big[V_\theta(t,\lambda+\beta x,u+x)-V_\theta(t,\lambda,u)\big]\nu_\theta(\lambda,dx)
- r\,V_\theta\;=\;0.}
\end{equation}
Equivalently, $(\mathcal A_\theta-r)V_\theta=0$, where $\mathcal A_\theta$ is the extended generator
defined as in \eqref{def:Gen} with $\nu$ replaced by $\nu_\theta$, acting on the common domain
$\mathcal D:=\mathcal D(\mathcal A)\cap\mathcal D(\mathcal A_\theta)$.
\end{prop}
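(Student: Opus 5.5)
The plan is to reduce the statement to Proposition~\ref{prop:PIDE_discount} by showing that, under $\mathbb Q_\theta$, the process $(\lambda_t,U_t)$ is again a PDMP of the type in Assumption~\ref{ass:pdmp}. It has the same flow $\dot\lambda=\mu(\lambda)$, the same jump rate $\lambda$, and the same jump map $(\lambda,u)\mapsto(\lambda+\beta x,u+x)$. The only change is that the mark kernel $\nu$ is replaced by $\nu_\theta$.

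\textbf{Step 1 (change of measure).} Define the density process
\[
L_t=\exp\Big(\int_0^t\!\!\int_0^\infty \theta x\,N(ds,dx)-\int_0^t \lambda_{s-}\big(k_\theta(\lambda_{s-})-1\big)\,ds\Big)
\cdot\prod_{k\le N_t}\frac{1}{k_\theta(\lambda_{\tau_k-})}\cdot\prod_{k\le N_t}k_\theta(\lambda_{\tau_k-}),
\]
which is equivalently the Doléans--Dade exponential of $\int\!\!\int (e^{\theta x}/k_\theta(\lambda_{s-})-1)\,(N-\lambda\nu\,ds)(ds,dx)$. Because the tilt is normalised, the predictable jump rate stays $\lambda_{s-}$. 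By Assumption~\ref{ass:esscher}, $k_\theta$ is finite, locally bounded and bounded away from zero on compacts, and $\lambda_{t-}$ is locally bounded. Hence $L$ is a local martingale. We take it to be a true martingale on $[0,T]$, as in the text. The Girsanov theorem for marked point processes (Cohen \& Elliott, Brémaud) then gives the $\mathbb Q_\theta$-compensator $\lambda_{s-}\nu_\theta(\lambda_{s-},dx)\,ds$.

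\textbf{Step 2 (Markov/PDMP structure under $\mathbb Q_\theta$).} The compensator depends only on the current state, and the deterministic flow is unaffected by an absolutely continuous change of measure. Therefore $(\lambda,U)$ is under $\mathbb Q_\theta$ a PDMP with characteristics $(\mu,\lambda,\nu_\theta)$. Its extended generator is $\mathcal A_\theta$: this is Proposition~\ref{prop:Generator} with $\nu$ replaced by $\nu_\theta$, whose proof uses only the form of the compensator.

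Next we check Assumption~\ref{ass:pdmp}(iii)--(iv) for $\nu_\theta$. For (iv), write $\int|\cdot|\,\nu_\theta=k_\theta^{-1}\int e^{\theta x}|\cdot|\,\nu$. This is where we need exponential moments for $|\theta|<\theta_0$, combined with the growth of $f$ (or boundedness if $f\in C_b$). We also need the domain condition $V_\theta\in\mathcal D$, stated as $\mathcal D(\mathcal A)\cap\mathcal D(\mathcal A_\theta)$.

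\textbf{Step 3 (Dynkin).} We then apply Proposition~\ref{prop:PIDE_discount} verbatim under $\mathbb Q_\theta$. The process $s\mapsto e^{-r(s-t)}V_\theta(s,\lambda_s,U_s)$ is a $\mathbb Q_\theta$-martingale by the tower property. Dynkin's formula for PDMPs (Davis) writes it as a local martingale plus $\int e^{-r(s-t)}(\mathcal A_\theta-r)V_\theta\,ds$. The finite-variation part of a martingale must vanish, so $(\mathcal A_\theta-r)V_\theta=0$ a.e. along trajectories, hence on the reachable state space. This is exactly \eqref{eq:PIDE_Esscher}.

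\textbf{Main obstacle.} I expect the hardest part to be the integrability bookkeeping in Steps 1--2. One issue is the true-martingale property of $L$, which is delicate because $\lambda$ self-excites and may blow up under the tilted law. I would first try a localisation by $\tau_n=\inf\{t:\lambda_t\ge n\}$, applying a Novikov/Lépingle--Mémin type criterion on each $[0,\tau_n]$. I would then show $\tau_n\to\infty$ $\mathbb Q_\theta$-a.s. using the linear-growth bound on $\mu$ and a Gronwall estimate on $\E_{\mathbb Q_\theta}[\lambda_t]$. That estimate needs $\beta\int x\,\nu_\theta<\infty$, which follows from the exponential moments. The second issue is verifying that $\lambda\int|V_\theta(\cdot+\beta x,\cdot+x)-V_\theta|\,\nu_\theta$ is locally integrable, so that the Dynkin local martingale is a genuine martingale.
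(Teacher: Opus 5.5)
Your route is the paper's: Girsanov for marked point processes with the normalised predictable tilt $Y_s(x)=e^{\theta x}/k_\theta(\lambda_{s-})$ gives the $\mathbb Q_\theta$-compensator $\lambda_{s-}\nu_\theta(\lambda_{s-},dx)\,ds$. You then apply the Dynkin argument of Proposition~\ref{prop:PIDE_discount} under $\mathbb Q_\theta$. Your discussion of the true-martingale property (localisation plus non-explosion under $\mathbb Q_\theta$) goes beyond the paper's sketch, which simply assumes it.

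One concrete error must be fixed. In your displayed $L_t$ the two products cancel identically, so what you actually wrote is
\[
L_t=\exp\Big(\theta U_t-\int_0^t\lambda_{s-}\big(k_\theta(\lambda_{s-})-1\big)\,ds\Big).
\]
This is the Dol\'eans--Dade exponential of $\int\!\!\int(e^{\theta x}-1)\,(N-\lambda\nu\,ds)(ds,dx)$, i.e.\ the \emph{unnormalised} Esscher density. Under that measure the compensator is $\lambda_{s-}e^{\theta x}\nu(\lambda_{s-},dx)\,ds=\lambda_{s-}k_\theta(\lambda_{s-})\,\nu_\theta(\lambda_{s-},dx)\,ds$. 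The jump rate therefore becomes $\lambda k_\theta(\lambda)$, which is exactly the case the paper's remark on the unnormalised tilt sets aside. Your Step~3 would then produce a PIDE with $\lambda k_\theta(\lambda)$ in front of the jump integral, not the stated equation. It also contradicts your own sentence that the jump rate stays $\lambda_{s-}$.

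The exponential you describe in words is the right object. It equals
\[
L_t=\prod_{k\le N_t}\frac{e^{\theta X_k}}{k_\theta(\lambda_{\tau_k-})},
\]
with no $ds$-compensating term, because $\int_0^\infty\big(e^{\theta x}/k_\theta(\lambda)-1\big)\,\nu(\lambda,dx)=0$. With this density in place of your display, Steps~1--3 go through as in the paper.
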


\begin{proof}[Proof sketch]
Girsanov for marked point processes with predictable $Y_s(x)=e^{\theta x}/k_\theta(\lambda_{s-})$
(Brémaud~\cite[Ch.~II]{Bremaud1981}; Jacod--Shiryaev~\cite[III.3]{JacodShiryaev2003}) yields the new compensator
$\lambda_s\nu_\theta(\lambda_s,dx)\,ds$ under $\mathbb Q_\theta$.
Applying Dynkin/Feynman--Kac as in Proposition~\ref{prop:PIDE_discount}
gives \eqref{eq:PIDE_Esscher}.
\end{proof}

\begin{remark}[Stability under OU drift]
An unnormalised tilt $Y(x)=e^{\theta x}$ yields compensator
$\lambda_s e^{\theta x}\nu(\lambda_{s-},dx)\,ds$, i.e. an effective intensity
$\lambda_s k_\theta(\lambda_{s-})$ and a normalised kernel $\nu_\theta$.
We adopt the normalised form to keep the intensity $\lambda$ unchanged.
\end{remark}

\subsection{Interpretation and calibration of $\theta$.}
\noindent The Esscher tilt $\nu_\theta(\lambda,dx)\propto e^{\theta x}\nu(\lambda,dx)$ increases the marks in the likelihood–ratio order,
hence in the increasing convex order; in particular, for convex payoffs one expects prices to increase with~$\theta$
(see, e.g., Shaked–Shanthikumar, \cite{ShakedShanthikumar2007}).
Differentiating the log–partition $k_\theta(\lambda)$ yields
$\partial_\theta\log k_\theta(\lambda)=\E_{\nu_\theta(\lambda)}[X]$ and
$\partial_{\theta\theta}\log k_\theta(\lambda)=Var_{\nu_\theta(\lambda)}[X]\ge0$,
confirming heavier tails as $\theta$ increases. In practice, $\theta$ is calibrated to liquid instruments (e.g., swaps/caps) by matching observed prices under the model with kernel $\nu_\theta$.
This requires a parametric/semiparametric estimate of the actuarial mark law $\nu$
(e.g., a $\lambda$–dependent Gamma mixture via EM), which we detail in the next section.

\subsection{Gamma mixture-of-experts for marks.}

\noindent Marks $X>0$ are positive, skewed, with heavier right tails.
Gamma (and mixtures) are standard for precipitation,
and the Esscher tilt preserves the Gamma family, which makes pricing tractable.
Allowing the mixture weights to depend on the pre-jump intensity $\lambda$ yields
a flexible yet interpretable semi-parametric model $\nu(\lambda,\cdot)$. See Benth \& Šaltytė Benth~\cite[Ch.~VIII]{BeBe12} for precipitations; and Cont-Tankov, §10~\cite{ContTankov2004}, for Esscher in jump models.

\begin{definition}[Gamma mixture-of-experts for the mark law]\label{def:moe}
For $\lambda\in\R_+$, we model
\[
\nu(\lambda,dx)=\sum_{m=1}^{M} \pi_m(\lambda)\,\Gamma(k_m,b_m)(dx),
\qquad
\pi_m(\lambda)=\frac{\exp\{\beta_m^\top B(\lambda)\}}{\sum_{\ell=1}^{M}\exp\{\beta_\ell^\top B(\lambda)\}},
\]
where $B(\lambda)\in\R^p$ is a user-chosen feature map (e.g. $p$ spline basis functions) and $\beta_m\in\R^p$ are gating coefficients with $\beta_M\equiv 0$ (identifiability). Here $p$ (feature dimension) is independent of
$M$ (number of mixture components).
$\Gamma(k_m,b_m)$ is  the Gamma distribution with shape $k_m>0$, rate $b_m>0$, and probability density function of $\Gamma(k_m,b_m)(dx)=\frac{b_m^{k_m}}{\Gamma(k_m)}x^{k_m-1}e^{-b_m x}\,(dx)$ with mean $k_m/b_m$ and Laplace transform
$\hat g_\Gamma(u)=(b_m/(b_m+u))^{k_m}$ for $u\ge0$.
\end{definition}

\paragraph{Estimation by Expectation-Maximization (EM).}
\noindent Given data $\{(\lambda_k,X_k)\}_{k=1}^n$ (pre-jump intensities and marks), parameters
$\Theta=\{(k_m,b_m,\beta_m)\}_{m=1}^M$ are estimated by EM:
(i) E-step: responsibilities $r_{km}\propto \pi_m(\lambda_k)\,g_\Gamma(X_k;k_m,b_m)$ with $g_\Gamma(x;k,b)=b^k x^{k-1}e^{-bx}/\Gamma(k)$;
(ii) M-step: weighted MLE for $(k_m,b_m)$ (closed form for $b_m$ given $k_m$; Newton update for $k_m$ via
$\log k_m-\psi(k_m)=\log\bar X_m-\overline{\log X}_m$, $\psi$ the digamma function), and multinomial logistic regression
for $\beta$ (IRLS/Newton, concave objective). EM increases the log-likelihood and converges to
a stationary point \cite{DempsterLairdRubin1977,McLachlanPeel2000,JordanJacobs1994,HTF2009, Bishop2006}.
\emph{Implementation details} (initialisation, regularisation, stopping rule) follow standard practice and are omitted; see \cite{DempsterLairdRubin1977,McLachlanPeel2000,JordanJacobs1994,Bishop2006,HTF2009}.

\begin{lemma}[Esscher transform of a Gamma MoE]\label{lem:esscher-gamma}
For $\theta$ with $b_m>\theta$ for all $m$, the Esscher-tilted kernel
$\nu_\theta(\lambda,dx)\propto e^{\theta x}\nu(\lambda,dx)$ remains a Gamma mixture:
\[
\nu_\theta(\lambda,dx)=\sum_{m=1}^M \pi_m^\theta(\lambda)\,\Gamma(k_m,b_m-\theta)(dx),
\qquad
\pi_m^\theta(\lambda)=\frac{\pi_m(\lambda)\,(b_m/(b_m-\theta))^{k_m}}{\sum_{\ell=1}^M \pi_\ell(\lambda)\,(b_\ell/(b_\ell-\theta))^{k_\ell}}.
\]
\emph{Proof.} Multiply each component by $e^{\theta x}$ and renormalise using
$\int_0^\infty e^{\theta x}\Gamma(k_m,b_m)(dx)=(b_m/(b_m-\theta))^{k_m}$.
\end{lemma}

\begin{lemma}[Domain of $\theta$]\label{lem:theta-domain}
If the component rates $\{b_m\}_{m=1}^M$ are constant (independent of $\lambda$), then
$\sup_{\lambda\in\Lambda} M_\nu(\lambda,\theta) = \sup_{\lambda\in\Lambda} \int_0^\infty e^{\theta x}\nu(\lambda,dx)<\infty$ if and only if
\[
\theta\ <\ \min_{1\le m\le M} b_m.
\]
\emph{Proof.} $M_\nu(\lambda,\theta)=\sum_m \pi_m(\lambda)\,(b_m/(b_m-\theta))^{k_m}$ is finite iff
$\theta<b_m$ for every active component $m$.
\end{lemma}

\begin{prop}[Swap under Esscher]\label{prop:swap-esscher}
Let $t\le t_1<t_2\le T$ and $r\ge0$. Assuming the payoff is settled at $t_2$, under $\mathbb Q_\theta$ the time-$t$ price of the swap
with payoff $\int_{t_1}^{t_2}\!dU_s$ equals
\[
\mathrm{Swap}_\theta(t)\;=\;e^{-r(t_2-t)}\E_{\mathbb Q_\theta}\!\left[\int_{t_1}^{t_2}\lambda_s\,m_1^{(\theta)}(\lambda_s)\,ds\ \Big|\ \mathcal F_t\right],
\]
where $m_1^{(\theta)}(\lambda)=\int_0^\infty x\,\nu_\theta(\lambda,dx)
=\sum_m \pi_m^\theta(\lambda)\,(k_m/(b_m-\theta))$.
If $(\lambda,U)$ is Markov and $t\le t_1$, then $\mathrm{Swap}_\theta(t)=F(t,\lambda_t)$ for some deterministic $F$.
\end{prop}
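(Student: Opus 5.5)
The plan is to write the swap payoff as a stochastic integral against the marked point process and then use the $\mathbb Q_\theta$-compensator to replace the jump measure by its predictable intensity. Since the payoff is settled at $t_2$ and $r$ is deterministic, the time-$t$ price is
\[
\mathrm{Swap}_\theta(t)=e^{-r(t_2-t)}\,\E_{\mathbb Q_\theta}\Big[U_{t_2}-U_{t_1}\,\Big|\,\mathcal F_t\Big],
\qquad
U_{t_2}-U_{t_1}=\int_{(t_1,t_2]\times\R_+}x\,N(ds,dx).
\]

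The first step invokes the Girsanov statement recalled before Proposition~\ref{prop:PIDE_Esscher}. Under $\mathbb Q_\theta$ the compensator of $N(ds,dx)$ is $\lambda_{s-}\nu_\theta(\lambda_{s-},dx)\,ds$. For the predictable integrand $H(s,x)=x\,\mathbf 1_{(t_1,t_2]}(s)$, the compensated integral
\[
\int H\,(N-\lambda_{s-}\nu_\theta\,ds)
\]
is then a $\mathbb Q_\theta$-martingale on $[0,T]$, provided
\[
\E_{\mathbb Q_\theta}\Big[\int_{t_1}^{t_2}\lambda_s\,m_1^{(\theta)}(\lambda_s)\,ds\Big]<\infty.
\]
This integrability is the main obstacle, and I would handle it in one of two ways.
\begin{itemize}
\item Under Assumption~\ref{ass:esscher} and Lemma~\ref{lem:theta-domain}, $m_1^{(\theta)}(\lambda)=\sum_m\pi_m^\theta(\lambda)k_m/(b_m-\theta)\le \max_m k_m/(b_m-\theta)$ is bounded. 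Combined with Assumption~\ref{ass:pdmp}(iii), transported to $\mathbb Q_\theta$, this gives finiteness directly.
\item Alternatively, localize with stopping times $\tau_n=\inf\{s:\lambda_s\ge n\}$. Apply the martingale identity to the stopped processes, then pass to the limit by monotone convergence. Both sides are nondecreasing in $n$ because the integrands $x\ge0$ and $\lambda m_1^{(\theta)}\ge0$ are nonnegative.
\end{itemize}
This route avoids any a priori integrability hypothesis and yields the identity in $[0,\infty]$. It is finite whenever the swap price is.

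The second step identifies the conditional expectation. Given the martingale property, for $t\le t_1$,
\[
\E_{\mathbb Q_\theta}\Big[\int_{(t_1,t_2]}\!\!\int x\,N(ds,dx)\,\Big|\,\mathcal F_t\Big]
=\E_{\mathbb Q_\theta}\Big[\int_{t_1}^{t_2}\lambda_{s-}\int x\,\nu_\theta(\lambda_{s-},dx)\,ds\,\Big|\,\mathcal F_t\Big].
\]
Replacing $\lambda_{s-}$ by $\lambda_s$ is harmless, since they differ on at most countably many $s$ and hence on a Lebesgue-null set. The explicit formula for $m_1^{(\theta)}$ follows from Lemma~\ref{lem:esscher-gamma} and the Gamma mean $k_m/(b_m-\theta)$. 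Multiplying by $e^{-r(t_2-t)}$ gives the stated formula. The same argument covers $t\in(t_1,t_2]$ once one replaces $t_1$ by $t$ and adds the already-realized $\mathcal F_t$-measurable term $U_t-U_{t_1}$; only the claimed formula for $t\le t_1$ is needed here.

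Finally, for the Markov statement, I would use the fact that $\lambda$ is itself a Markov PDMP under $\mathbb Q_\theta$. Its flow is $\dot\lambda=\mu(\lambda)$, its jump rate is $\lambda$, and its jump kernel $\nu_\theta(\lambda,\cdot)$ depends only on $\lambda$, with no dependence on $u$. The integrand $\lambda_s m_1^{(\theta)}(\lambda_s)$ is therefore a measurable functional of the path of $\lambda$ on $[t,t_2]$. By the Markov property,
\[
F(t,\ell):=e^{-r(t_2-t)}\,\E_{\mathbb Q_\theta}\Big[\int_{t_1}^{t_2}\lambda_s m_1^{(\theta)}(\lambda_s)\,ds\,\Big|\,\lambda_t=\ell\Big],
\]
which is deterministic and measurable by Fubini. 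This gives $\mathrm{Swap}_\theta(t)=F(t,\lambda_t)$.
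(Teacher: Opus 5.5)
Your proposal is correct and follows the same route as the paper: write $U_{t_2}-U_{t_1}=\int_{(t_1,t_2]}\int x\,N(ds,dx)$, then replace $N$ by its $\mathbb Q_\theta$-compensator $\lambda_s\,\nu_\theta(\lambda_{s-},dx)\,ds$ for the predictable integrand $x$. You also fill in details the paper's short proof leaves implicit: the localization/monotone-convergence step supplies integrability (preferable to transporting Assumption~\ref{ass:pdmp}(iii) to $\mathbb Q_\theta$, which is not automatic because the tilt changes the law of $\lambda$), and the argument that $\lambda$ alone is Markov under $\mathbb Q_\theta$ gives the Markov claim.
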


\begin{proof}[Proof]
Notice that $\int_{t_1}^{t_2} dU_s=\int_{t_1}^{t_2}\!\!\int_0^\infty x\,N(ds,dx)$ with jump measure $N$.
Under $\mathbb Q_\theta$, the compensator of $N$ is $\lambda_s\,\nu_\theta(\lambda_{s-},dx)\,ds$; for predictable integrand 
$H(s,x)=x$, the compensated integral is a martingale, hence the formula (compensation for marked point processes; see Brémaud~\cite[Ch.~II]{Bremaud1981} or Jacod--Shiryaev~\cite[III.3]{JacodShiryaev2003}).
\[
\E_{\mathbb Q_\theta}\!\left[\int_{t_1}^{t_2}\!\!\int_0^\infty x\,N(ds,dx) \;\Bigg|\;\mathcal F_t\right]
=\E_{\mathbb Q_\theta}\!\left[\int_{t_1}^{t_2}\!\!\int_0^\infty x\,\lambda_s\,\nu_\theta(\lambda_{s-},dx)\,ds \;\Bigg|\;\mathcal F_t\right].
\]
This yields the formula. 
\end{proof}

\subsection{Calibration of $\theta$.}
By Lemma~\ref{lem:esscher-gamma} and Proposition~\ref{prop:swap-esscher}, model prices under
$\nu_\theta$ are smooth (and typically monotone) in $\theta$.
We calibrate $\theta$ to liquid swaps/caps by least-squares fit of model to market prices, computed by Monte Carlo simulation
under the normalised Esscher tilt. Jump times are generated by adaptive thinning for the conditional intensity
(Lewis--Shedler~\cite{LewisShedler1979}; Ogata~\cite{Ogata1981}); under the OU flow
$\dot\lambda=\kappa(\bar\lambda-\lambda)$ one may use the dominating rate
$M=(1+\varepsilon)\max\{\lambda_s,\bar\lambda\}$ between candidate events.
Variance reduction via Common Random Numbers follows Glasserman~\cite[Ch.~IV/V]{Glasserman2004},
and the one-dimensional optimisation is performed by Brent’s method~\cite{Brent1973}.

\begin{remark}[Stability under OU drift]\label{rq:stability-cdtn}
For $\mu(\lambda)=\kappa(\bar\lambda-\lambda)$, a sufficient stability/ergodicity condition is
\[
\kappa\ >\ \beta\,\sup_{\lambda\ge K} m_1^{(\theta)}(\lambda)
\]
for some $K\ge \bar\lambda$, i.e. mean-reversion dominates the expected self-excitation at large $\lambda$.
This matches standard PDMP stability criteria for linear drift (see, e.g., Example~2 in~\cite{ET18} or related results). For the Gamma MoE under normalised Esscher,
$m_1^{(\theta)}(\lambda)=\sum_m \pi_m^\theta(\lambda)\,k_m/(b_m-\theta)\le \max_m k_m/(b_m-\theta)$,
yielding the practical bound $\alpha>\beta\max_m k_m/(b_m-\theta)$.
\end{remark}

\subsection{Spectral reduction in the accumulation variable $u$}

We now reduce the $(t,\lambda,u)$ backward equation to a family of $(t,\lambda)$
equations by exploiting that exponentials are eigenfunctions of the jump–translation $u\mapsto u+x$ (cf. Folland \cite[§7.3]{Folland1992}  and Cont--Tankov~\cite[Ch.~12, §12.2.2]{ContTankov2004}).Throughout this subsection, let $\tilde\nu(\lambda,dx)$ denote the pricing kernel (either actuarial $\nu$ or Esscher $\nu_\theta$). The discounted backward PIDE reads, for $t<T$,
\begin{equation}\label{eq:PIDE_u_lambda_generic}
\partial_t V+\mu(\lambda)\,\partial_\lambda V
+\lambda\!\int_0^\infty\!\big[V(t,\lambda+\beta x,u+x)-V(t,\lambda,u)\big]\tilde\nu(\lambda,dx)
- r\,V=0,\qquad V(T,\lambda,u)=f(u).
\end{equation}

\begin{assumption}[Laplace admissibility]\label{ass:laplace}
There exists $s_0>0$ such that for every compact $K\subset\R_+$, the moment generating function of the mark kernel is finite:
\[
\sup_{\lambda\in K}\int_0^\infty e^{s x}\,\tilde\nu(\lambda,dx) < \infty\qquad\text{for all }s\in[0,s_0).
\]
Furthermore, for the given payoff function $f$, there exists a damping parameter $\delta\in(0,s_0)$ such that
\[
\int_0^\infty e^{-\delta u}\,|f(u)|\,du < \infty.
\]
\end{assumption}

\begin{remark}
Assumption~\ref{ass:laplace} is local in $\lambda$ and is verified by our Gamma–mixture model
(on compacts, the MGF is finite uniformly in $\lambda$). The damping condition on $f$ is the
Laplace–analogue of the Carr--Madan Fourier damping used in option pricing
\cite{CarrMadan1999,ContTankov2004}.
\end{remark}

\begin{prop}[Rank-one exponential ansatz]\label{prop:rankone}
Fix $\eta\in\C$ with $\Re(\eta)<s_0$ and set $V_\eta(t,\lambda,u):=e^{\eta u}\,F(t,\lambda,\eta)$.
Under Assumption~\ref{ass:laplace}, $V_\eta$ solves \eqref{eq:PIDE_u_lambda_generic} with terminal condition $V_\eta(T,\lambda,u)=e^{\eta u}$ if and only if $F(\cdot,\cdot,\eta)$ solves the $(t,\lambda)$ PIDE :
\begin{equation}\label{eq:PIDE_F_eta}
\boxed{\;
\partial_t F+\mu(\lambda)\,\partial_\lambda F
+\lambda\!\int_0^\infty\!\big[e^{\eta x}\,F(t,\lambda+\beta x,\eta)-F(t,\lambda,\eta)\big]\tilde\nu(\lambda,dx)
- r\,F=0,\qquad F(T,\lambda,\eta)=1.}
\end{equation}
\end{prop}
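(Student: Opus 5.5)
The argument is a direct substitution: plug the ansatz $V_\eta(t,\lambda,u)=e^{\eta u}F(t,\lambda,\eta)$ into each term of \eqref{eq:PIDE_u_lambda_generic} and factor out $e^{\eta u}$, which never vanishes for $u\in\R_+$ and $\eta\in\C$. Because $e^{\eta u}$ does not depend on $(t,\lambda)$, the local terms transform immediately:
\[
\partial_t V_\eta=e^{\eta u}\partial_t F,\qquad
\mu(\lambda)\partial_\lambda V_\eta=e^{\eta u}\mu(\lambda)\partial_\lambda F,\qquad
rV_\eta=e^{\eta u}rF .
\]
These identities hold on the same set of $(t,\lambda)$ where $F$ is differentiable, in the a.e.-along-the-flow sense of Assumption~\ref{ass:pdmp}(v). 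For the jump term I use the translation identity $e^{\eta(u+x)}=e^{\eta u}e^{\eta x}$, which gives
\[
\lambda\int_0^\infty\big[V_\eta(t,\lambda+\beta x,u+x)-V_\eta(t,\lambda,u)\big]\tilde\nu(\lambda,dx)
=e^{\eta u}\,\lambda\int_0^\infty\big[e^{\eta x}F(t,\lambda+\beta x,\eta)-F(t,\lambda,\eta)\big]\tilde\nu(\lambda,dx).
\]
Summing these pieces, the left side of \eqref{eq:PIDE_u_lambda_generic} evaluated at $V_\eta$ equals $e^{\eta u}$ times the left side of \eqref{eq:PIDE_F_eta}. For the terminal conditions, $V_\eta(T,\lambda,u)=e^{\eta u}$ holds for all $u$ if and only if $F(T,\lambda,\eta)=1$. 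Dividing by $e^{\eta u}\neq0$ then gives both directions of the equivalence at once.

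The one step that needs care is justifying the jump integrals, so that the equality above is an identity between finite quantities and is not a formal manipulation. I will argue as follows. On any compact $\lambda$-set $K$, Assumption~\ref{ass:laplace} gives $|e^{\eta x}|=e^{\Re(\eta)x}$, so $\int_0^\infty e^{\Re(\eta)x}\tilde\nu(\lambda,dx)<\infty$ whenever $0\le\Re(\eta)<s_0$. If $\Re(\eta)<0$ the weight is bounded by $1$, so the condition is trivial. In both cases the integral is bounded uniformly for $\lambda\in K$.

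I take the admissible class of $F$ to consist of functions with $F(t,\cdot,\eta)$ bounded on the range $\{\lambda+\beta x\}$, or more generally with $x\mapsto F(t,\lambda+\beta x,\eta)$ integrable against $e^{\Re(\eta)x}\tilde\nu(\lambda,dx)$. For such $F$ the integral in \eqref{eq:PIDE_F_eta} converges absolutely. By the pointwise identity $|V_\eta(t,\lambda+\beta x,u+x)|=e^{\Re(\eta)u}e^{\Re(\eta)x}|F(t,\lambda+\beta x,\eta)|$, absolute integrability of the $F$-integrand is equivalent to absolute integrability of the $V_\eta$-integrand for each fixed $u$. This is exactly Assumption~\ref{ass:pdmp}(iv) for $V_\eta$, so linearity of the integral may be applied and the factorisation is legitimate.

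Under this convention the two solution notions correspond exactly, and the equivalence follows. I expect this integrability bookkeeping, together with the choice of admissible class, to be the only real obstacle; the algebra itself is immediate.
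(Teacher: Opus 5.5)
Your proposal is correct and follows the paper's proof: substitute $V_\eta=e^{\eta u}F$, factor $e^{\eta u}$ out of every term (using $e^{\eta(u+x)}=e^{\eta u}e^{\eta x}$ in the jump term), and divide by $e^{\eta u}\neq0$. Your extra observation that the two jump integrands are absolutely integrable together, since they differ only by the positive constant factor $e^{\Re(\eta)u}$, is a useful detail the paper leaves implicit.
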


\begin{proof}[Proof]
Plug $V_\eta=e^{\eta u}F$ into \eqref{eq:PIDE_u_lambda_generic}; the jump term factors as
$e^{\eta u}\lambda\int[e^{\eta x}F(t,\lambda+\beta x,\eta)-F(t,\lambda,\eta)]\tilde\nu(\lambda,dx)$.
Divide by $e^{\eta u}\neq 0$.
\end{proof}

\begin{lemma}[Bromwich inversion for the payoff]\label{lem:bromwich}
Let $\delta\in(0,s_0)$ be as in Assumption~\ref{ass:laplace} and define
\[
\widehat f_\delta(y):=\int_0^\infty e^{-(\delta+iy)u}\,f(u)\,du,\qquad y\in\R.
\]

\noindent If $e^{-\delta u}f(u)\in L^1(\R_+)$ and $f$ is piecewise $C^1$ on $\R_+$ with
$e^{-\delta u}f'(u)\in L^1(\R_+)$, then for every $u>0$
\begin{equation}\label{eq:bromwich}
\frac{f(u^-)+f(u^+)}{2}\;=\;\lim_{Y\to\infty}\frac{1}{2\pi}\int_{-Y}^{Y}\widehat f_\delta(y)\,e^{(\delta+iy)u}\,dy,
\end{equation}
In particular, the integral evaluates exactly to $f(u)$ wherever $f$ is continuous. If $f$ has a jump at $u$, \eqref{eq:bromwich} yields the Dirichlet--Jordan midpoint value (see Folland~\cite[Thm.~7.6]{Folland1992}).

\end{lemma}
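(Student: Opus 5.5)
The plan is to reduce the statement to the classical Dirichlet--Jordan pointwise Fourier inversion theorem on $\R$. Extend $f$ by zero to $u<0$ and set
\[
g(u):=e^{-\delta u}f(u)\mathbf 1_{\{u\ge 0\}},\qquad u\in\R .
\]
By Assumption~\ref{ass:laplace}, $g\in L^1(\R)$, and the definition of $\widehat f_\delta$ gives
\[
\widehat f_\delta(y)=\int_\R e^{-iyu}g(u)\,du=\widehat g(y),
\]
the ordinary Fourier transform of $g$. Since $e^{(\delta+iy)u}=e^{\delta u}e^{iyu}$, the right-hand side of \eqref{eq:bromwich} equals
\[
e^{\delta u}\lim_{Y\to\infty}\frac1{2\pi}\int_{-Y}^{Y}\widehat g(y)\,e^{iyu}\,dy .
\]
It therefore suffices to show that the symmetric partial Fourier integrals of $g$ converge at each $u>0$ to $\tfrac12\big(g(u^-)+g(u^+)\big)$. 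Multiplying back by the continuous factor $e^{\delta u}$ then gives exactly $\tfrac12\big(f(u^-)+f(u^+)\big)$.

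The main step is to write the partial integral as a Dirichlet-kernel convolution and apply a local regularity criterion. By Fubini, which is allowed since $g\in L^1$ and $[-Y,Y]$ is bounded,
\[
\frac1{2\pi}\int_{-Y}^{Y}\widehat g(y)e^{iyu}\,dy=\int_\R g(u-s)\,\frac{\sin(Ys)}{\pi s}\,ds
=\int_0^\infty\big(g(u+s)+g(u-s)\big)\frac{\sin(Ys)}{\pi s}\,ds .
\]
Split this at $s=\varepsilon$. On $\{s>\varepsilon\}$ the function $s\mapsto(g(u+s)+g(u-s))/s$ is integrable, so the Riemann--Lebesgue lemma sends that piece to $0$. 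On $[0,\varepsilon]$, choose $\varepsilon$ with $\varepsilon<u$ and such that $f$ is $C^1$ on $(u-\varepsilon,u)$ and on $(u,u+\varepsilon)$ with one-sided limits of $f$ and $f'$ at $u$. Piecewise $C^1$ provides this, interpreted as one-sided $C^1$ up to the break points. Then $g$ is Lipschitz on each side of $u$, so
\[
\frac{g(u+s)-g(u^+)}{s}\quad\text{and}\quad \frac{g(u-s)-g(u^-)}{s}
\]
are bounded on $(0,\varepsilon]$. Riemann--Lebesgue again kills those contributions. What remains is $\big(g(u^+)+g(u^-)\big)\int_0^{\varepsilon}\frac{\sin(Ys)}{\pi s}\,ds$, and this tends to $\tfrac12\big(g(u^+)+g(u^-)\big)$ because $\int_0^\infty \frac{\sin t}{t}\,dt=\pi/2$. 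This is Folland's Thm.~7.6, which could simply be cited once $g\in L^1$ and the local one-sided Lipschitz (Dini) condition are checked.

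The hypothesis $e^{-\delta u}f'\in L^1$ is only needed in a weak sense. The argument uses it only locally, near $u$, to get one-sided difference-quotient bounds. It would matter globally only if one wanted uniform convergence or a decay rate $\widehat f_\delta(y)=O(1/|y|)$ by integrating by parts, and that is not required here. The restriction $u>0$ is what guarantees that the point $u=0$, where the zero extension may create a jump, is not the evaluation point.

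The step I expect to need most care is checking the local Dini condition uniformly through the piecewise-$C^1$ hypothesis. One has to ensure that the break points of $f$ do not accumulate at $u$, so that a punctured neighbourhood with one-sided $C^1$ behaviour exists. I would state this as the meaning of ``piecewise $C^1$'', namely locally finitely many break points with one-sided limits of $f'$. The final sentence of the lemma follows at once, since at continuity points $f(u^-)=f(u^+)=f(u)$.
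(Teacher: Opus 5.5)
Your proposal is correct and follows the paper's route. The paper gives no separate argument: it identifies $\widehat f_\delta$ with the Fourier transform of the damped, zero-extended function $e^{-\delta u}f(u)\mathbf 1_{\{u\ge 0\}}$ and cites the Dirichlet--Jordan pointwise inversion theorem (Folland, Thm.~7.6), whereas you also spell out the Dirichlet-kernel and Riemann--Lebesgue argument behind that citation and correctly observe that $e^{-\delta u}f'\in L^1$ is only needed locally near $u$.
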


\begin{remark}
\eqref{eq:bromwich} is the Laplace/Fourier inversion formula written on the vertical contour $s=\delta+iy$, i.e. \ on the line $\Re(s)=\delta$ in the complex plane, parameterized by $y\in\mathbb R$. 
The assumptions $e^{-\delta u}f(u)\in L^1(\mathbb R_+)$, piecewise $C^1$ regularity of $f$, and
$e^{-\delta u}f'(u)\in L^1(\mathbb R_+)$ ensure the usual Dirichlet--Jordan midpoint convergence at jump points.
Truncated spectral inversion of non-smooth payoffs (e.g., digitals) exhibits Gibbs oscillations; Fejér/Cesàro averaging converges to the Dirichtlet-Jordan value at jumps, and spectral filters (e.g. exponential filters) reduce ringing.
See \cite{Zygmund2002, Trefethen2000, Boyd2001, GottliebShu1997, Vandeven1991}.
\end{remark}

\begin{theorem}[Superposition/inversion across the Bromwich line]\label{thm:inversion}
Suppose Assumption~\ref{ass:laplace} holds and fix $\delta\in(0,s_0)$. For each $y\in\R$, let $F(\cdot,\cdot,\delta+iy)$
be a classical solution of \eqref{eq:PIDE_F_eta} with terminal condition equal to $1$, and assume the polynomial growth bound, for every compact $K\subset\R_+$ and some $p\ge0$, 
\[
\sup_{(t,\lambda)\in [0,T]\times K}\,\big(|F| + |\partial_t F| + |\partial_\lambda F|\big)(t,\lambda,\delta+iy)\ \le\ C_K\,(1+|y|)^p
\]
and the weighted integrability, assuming for $\delta$, $\forall K \subset \R $ $\sup_{\lambda\in K}\sup_{u\ge0}\E_{\lambda,u}\!\left[e^{\delta(U_T-u)}\right]<\infty$ 
\[
\int_{\R}\!|\widehat f_\delta(y)|(1+|y|)^p\,dy<\infty.
\]
Then the function
\begin{equation}\label{eq:V_reconstruction}
\boxed{\;
V(t,\lambda,u)\ =\ \frac{1}{2\pi}\int_{\R}\widehat f_\delta(y)\,e^{(\delta+iy)u}\,F\big(t,\lambda,\delta+iy\big)\,dy}
\end{equation}
is well defined, belongs to $\mathcal D(\mathcal A)$ with growth $|V|\le C\,e^{\delta u}$, solves
\eqref{eq:PIDE_u_lambda_generic} in the classical sense, and satisfies $V(T,\lambda,u)=f(u)$ for $u\ge0$.
Moreover, $V$ is the unique solution in the class $\{\,|V|\le C\,e^{\delta u}\,\}$.
\end{theorem}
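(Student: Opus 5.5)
The plan is to prove well-definedness and the growth bound first, then verify the PIDE by differentiating under the integral sign and using Proposition~\ref{prop:rankone} pointwise in $y$, then recover the terminal condition from Lemma~\ref{lem:bromwich}, and finally obtain uniqueness by a Dynkin/Feynman--Kac argument in the weighted class.

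\textbf{Well-definedness and growth.} Fix a compact $K$ and $(t,\lambda)\in[0,T]\times K$. The integrand in \eqref{eq:V_reconstruction} satisfies
\[
|\widehat f_\delta(y)\,e^{(\delta+iy)u}F(t,\lambda,\delta+iy)|\le C_K\,e^{\delta u}\,|\widehat f_\delta(y)|(1+|y|)^p .
\]
By the weighted integrability hypothesis this is integrable in $y$. Hence $V$ is well defined, continuous by dominated convergence, and obeys $|V|\le C e^{\delta u}$ locally uniformly in $\lambda$.

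\textbf{Regularity.} The same domination applies to $\partial_tF$ and $\partial_\lambda F$, and $\partial_u$ only produces the factor $(\delta+iy)$. It is therefore cleanest to differentiate only in $t$ and $\lambda$. Along $u$, $V$ is only needed to be measurable, because the flow does not move $u$. Differentiation under the integral then gives
\[
\partial_tV+\mu\,\partial_\lambda V=\frac1{2\pi}\int\widehat f_\delta\,e^{\eta u}\big(\partial_tF+\mu\,\partial_\lambda F\big)\,dy,\qquad \eta=\delta+iy .
\]

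\textbf{Jump term.} This is the main obstacle. I want to exchange $\int\tilde\nu(\lambda,dx)$ with $\int dy$ in
\[
\lambda\int\big[V(t,\lambda+\beta x,u+x)-V(t,\lambda,u)\big]\tilde\nu(\lambda,dx).
\]
Fubini needs a bound on $|F(t,\lambda+\beta x,\eta)|$ that is uniform in $x\in\R_+$, whereas the hypothesis only controls compacts. The first remedy I would try is to derive a global bound from the Feynman--Kac representation of the reduced problem,
\[
F(t,\lambda,\eta)=\E_{\lambda}\big[e^{-r(T-t)}e^{\eta(U_T-U_t)}\big],
\]
which gives $|F|\le \E_\lambda[e^{\delta(U_T-u)}]$, uniformly bounded by the hypothesis $\sup_{\lambda\in K}\sup_u\E_{\lambda,u}[e^{\delta(U_T-u)}]<\infty$. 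I would control $\lambda+\beta x$ over large $x$ via the linear-growth drift, possibly needing this sup over all $\lambda$. If that fails, the fallback is to absorb the growth of $F$ in $\lambda$ into the moment $e^{\delta x}$: the majorant $|\widehat f_\delta|(1+|y|)^p\,e^{\delta x}\,C(\lambda+\beta x)$ is integrable against $\tilde\nu$ as soon as $C$ grows at most exponentially with rate below $s_0-\delta$. Assumption~\ref{ass:laplace} then gives Fubini. After the exchange, the factor $e^{\eta(u+x)}=e^{\eta u}e^{\eta x}$ reproduces exactly the jump term of \eqref{eq:PIDE_F_eta}. Summing, $(\mathcal A-r)V$ equals the $y$-integral of $e^{\eta u}$ times the left side of \eqref{eq:PIDE_F_eta}, which is $0$. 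These bounds also give Assumption~\ref{ass:pdmp}(iv),(v), so $V\in\mathcal D(\mathcal A)$ by Proposition~\ref{prop:Generator}.

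\textbf{Terminal condition.} Since $F(T,\lambda,\eta)=1$, we get $V(T,\lambda,u)=\frac1{2\pi}\int\widehat f_\delta(y)\,e^{(\delta+iy)u}\,dy$. Because $\widehat f_\delta\in L^1$, this absolutely convergent integral coincides with the symmetric limit in Lemma~\ref{lem:bromwich}. It equals $f(u)$ at continuity points, and $f$ is continuous, since an $L^1$ Fourier transform forces the inverse to be continuous.

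\textbf{Uniqueness.} Let $W$ be the difference of two solutions in the class $|W|\le Ce^{\delta u}$, so $W(T,\cdot)=0$. Apply Dynkin's formula to $s\mapsto e^{-r(s-t)}W(s,\lambda_s,U_s)$, as in Proposition~\ref{prop:PIDE_discount}, with localization by jump count and a bound on $\lambda$. The local martingale is a true martingale, or at least the localized values converge, because $|W(s,\cdot)|\le Ce^{\delta U_s}$. The family $e^{\delta U_{s\wedge\tau_n}}$ is dominated by $e^{\delta U_T}$ since $U$ is nondecreasing, and $e^{\delta U_T}$ is integrable by the hypothesis on $\E[e^{\delta(U_T-u)}]$. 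Letting $n\to\infty$ gives $W(t,\lambda,u)=\E[e^{-r(T-t)}W(T,\cdot)]=0$. This also identifies $V$ with the expectation \eqref{eq:V_discount}.
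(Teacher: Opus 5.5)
Your argument follows the paper's proof essentially step for step. You dominate the integrand by $C_K e^{\delta u}|\widehat f_\delta(y)|(1+|y|)^p$ for well-definedness and for differentiating in $t,\lambda$ under the integral. You use Fubini to move the jump integral inside the $y$-integral and apply Proposition~\ref{prop:rankone} pointwise in $y$. You invoke Lemma~\ref{lem:bromwich} at $t=T$; your remark that $\widehat f_\delta\in L^1$ forces the inverse to be continuous, so the midpoint value is $f(u)$, slightly sharpens the paper here. For uniqueness you use Dynkin's formula with localization and domination by $Ce^{\delta U_T}$.

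The step you flag as the main obstacle is, however, a genuine gap, and neither of your remedies closes it from the stated hypotheses. The paper does not close it either: it only asserts that Fubini is ``justified by the absolute convergence of the integral against $\tilde\nu$''. The exchange needs a majorant $|F(t,\lambda',\delta+iy)|\le\Phi(t,\lambda')(1+|y|)^p$ valid for \emph{all} $\lambda'\ge0$, with $\int_0^\infty e^{\delta x}\Phi(t,\lambda+\beta x)\,\tilde\nu(\lambda,dx)<\infty$. The hypothesis only bounds $F$ on compacts, with uncontrolled constants $C_K$.

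Your first remedy fails on two counts. First, $F$ is only assumed to be \emph{a} classical solution. Identifying it with $\E_\lambda[e^{-r(T-t)}e^{\eta(U_T-U_t)}]$ is itself an unproved verification (uniqueness) statement for the modal PIDE, and it runs into the same jump-integrability problem. Second, a bound uniform over all $\lambda$ is false in this model. For the OU drift, $\lambda_s\ge\lambda e^{-\kappa(s-t)}$, so the number of jumps on $[t,T]$ dominates a Poisson variable with mean proportional to $\lambda$. Hence $\E_\lambda[e^{\delta(U_T-u)}]$ grows exponentially in $\lambda$.

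Your fallback does justify Fubini. It assumes growth $\Phi(t,\lambda')\lesssim e^{\gamma\lambda'}$ with $\delta+\beta\gamma<s_0$, but that is an additional hypothesis, not a consequence of the theorem's assumptions.

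The same looseness reappears in your uniqueness step. To dominate $e^{-r(\tau_n-t)}|W(\tau_n,X_{\tau_n})|$ by $Ce^{\delta U_T}$ uniformly in $n$, you need a constant independent of $\lambda$. Note that $\lambda_{\tau_n}$ can jump out of $K_n$, and $C_{K_n}$ may grow with $n$. You have only shown $|V|\le C_Ke^{\delta u}$ locally uniformly in $\lambda$, so the constructed $V$ is not yet known to lie in the uniqueness class. This mismatch is also present in the paper's proof.
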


\begin{proof}
Let $V_y(t,\lambda,u) := \widehat f_\delta(y)\,e^{(\delta+iy)u}\,F(t,\lambda,\delta+iy)$ denote the integrand.
Recall the linear operator $L$ associated with \eqref{eq:PIDE_u_lambda_generic}: $L\phi := \partial_t\phi+\mu(\lambda)\partial_\lambda\phi +\lambda\int \Delta_x \phi\, \tilde\nu(\lambda,dx) - r\phi$,
where $\Delta_x \phi = \phi(t,\lambda+\beta x,u+x)-\phi(t,\lambda,u)$.
By Proposition~\ref{prop:rankone}, for each fixed $y$,  $e^{(\delta+iy)u}F(\cdot,\cdot,\delta+iy)$ is a solution, hence $L(V_y) = 0$.

\emph{(i) Well-definedness and Regularity :} By the polynomial bound in $y$ and the weighted integrability assumption, the integrand $V_y$ and its partial derivatives $\partial_t V_y, \partial_\lambda V_y$ are dominated by an integrable function of $y$ (locally in $(t,\lambda)$). By the Dominated Convergence Theorem, $V$ is well-defined, continuous, and differentiable with respect to $t$ and $\lambda$, with derivatives obtained by differentiating under the integral sign: $\partial_t V = \frac{1}{2\pi}\int_\R \partial_t V_y\,dy, \quad
\partial_\lambda V = \frac{1}{2\pi}\int_\R \partial_\lambda V_y\,dy.$

\emph{(ii) Equation :} 
The nonlocal term is handled similarly using Fubini's theorem (justified by the absolute convergence of the integral against $\tilde\nu$). By linearity of the integral, we compute $L V$:
\begin{align*}
L V(t,\lambda,u)
&= L \left( \frac{1}{2\pi}\int_\R V_y(t,\lambda,u)\,dy \right) \\
&= \frac{1}{2\pi}\int_\R L \big( V_y(t,\lambda,u) \big)\,dy \quad \text{(by dominated convergence and linearity)} \\
&= \frac{1}{2\pi}\int_\R 0 \,dy \;=\; 0.
\end{align*}

\emph{(iii) Terminal Condition and Uniqueness :} At $t=T$, $F\equiv1$ and Lemma \ref{lem:bromwich} yields
$V(T,\lambda,u)=(2\pi)^{-1}\int \widehat f_\delta(y)\,e^{(\delta+iy)u}\,dy=f(u)$. 

\noindent For uniqueness, let $\widetilde V$ be another classical solution of
\eqref{eq:PIDE_u_lambda_generic} with the same terminal data and such that
$|\widetilde V(t,\lambda,u)|\le C e^{\delta u}$ on $[0,T]\times K\times\R_+$ for every compact $K\subset\R_+$.
Set $H:=V-\widetilde V$. Then $H(T,\cdot)=0$ and $H$ solves $L H=0$.
Let $(X_s)_{s\in[t,T]}=(\lambda_s,U_s)$ be the state process and define a localisation
$\tau_n:=\inf\{s\ge t:\lambda_s\notin K_n \text{ or } U_s\ge n\}\wedge T$, with $K_n\uparrow\R_+$ compact.
By Dynkin's formula for PDMPs applied to the stopped process
$s\mapsto e^{-r(s-t)}H(s,X_s)$ (Davis~\cite[Ch.~3, \S 32]{Davis93}), $H(t,x)=\E_x\!\left[e^{-r(\tau_n-t)}H(\tau_n,X_{\tau_n})\right], x=(\lambda,u)$.
Since $U$ is nondecreasing, $|H(\tau_n,X_{\tau_n})|\le C e^{\delta U_{\tau_n}}\le C e^{\delta U_T}$.
Moreover, under Assumption~\ref{ass:laplace} with $\delta<s_0$ and on each compact $K_n$,
$\sup_{x\in K_n\times\R_+}\E_x[e^{\delta(U_T-u)}]<\infty$ (Laplace/Campbell functional; see
Br\'emaud~\cite[Ch.~II]{Bremaud1981} or Cont--Tankov~\cite[\S10.2]{ContTankov2004}),
so $e^{-r(\tau_n-t)}H(\tau_n,X_{\tau_n})$ is uniformly integrable.
Letting $n\to\infty$ and using $\tau_n\uparrow T$ a.s., dominated convergence yields $H(t,x)=\E_x\!\left[e^{-r(T-t)}H(T,X_T)\right]=0$.
Hence $V=\widetilde V$, i.e. uniqueness holds in the class $\{|V|\le C e^{\delta u}\}$.


\end{proof}

\begin{remark}[Why exponentials?]
For the translation operator $(T_x\phi)(u):=\phi(u+x)$ one has
$T_x\big(e^{\eta u}\big)=e^{\eta x}e^{\eta u}$. Hence exponentials diagonalise translations:
the jump term in $u$ becomes a multiplication by $e^{\eta x}$ after a Laplace/Fourier transform in $u$.
In exponential‐Lévy models this is precisely $L=\psi(\partial)$ with $\psi$ the
characteristic exponent (Cont--Tankov~\cite[Ch.~12, §12.2.2]{ContTankov2004}).
The damping/inversion on a vertical line is the Laplace analogue of Carr--Madan’s Fourier method
\cite{CarrMadan1999,ContTankov2004}.
\end{remark}

\paragraph{Numerical plan (Bromwich--IMEX--Gauss--Laguerre).}
Fix $\delta\in(0,s_0)$ and consider the Bromwich line $\eta=\delta+i y$.
For each $y\in[0,Y_{\max}]$ we solve the reduced $(t,\lambda)$ PIDE \eqref{eq:PIDE_F_eta}
backward in time on a uniform grid $\lambda_i=i\Delta\lambda$:
the OU transport term and discounting are treated implicitly (IMEX in time),
while the nonlocal jump operator is treated explicitly.
The $x$--integral is approximated by generalized Gauss--Laguerre quadrature for each Gamma component of the mark mixture.
Off--grid evaluations at $\lambda_i+\beta x$ are computed by piecewise--linear interpolation and clamped to $[0,\lambda_{\max}]$.

\smallskip
\noindent Once $F(0,\lambda_0,\delta+i y)$ is obtained on a grid $y_j=j\Delta y$, $j=0,\dots,N_y-1$,
the option price is recovered using the inversion formula of Theorem~\ref{thm:inversion} in the one-sided form 
\[
V(0,\lambda_0,u_0)\;\approx\;\frac{e^{\delta u_0}}{\pi}\int_{0}^{Y_{\max}}
\Re\!\Big(\widehat f_\delta(y)\,F(0,\lambda_0,\delta+i y)\,e^{i y u_0}\Big)\,dy,
\]
by conjugate symmetry (since the payoff and PIDE coefficients are real), where the integral is truncated to $[0,Y_{\max}]$ and evaluated by a composite Simpson rule on the $y$--grid.

\paragraph{Choice of $\delta$, $Y_{\max}$ and grids.}
Under the Gamma mixture model (actuarial or Esscher-tilted), exponential moments are finite provided
$0<\delta<\min_m b_m^*$ (with $b_m^*=b_m$ actuarial and $b_m^*=b_m-\theta$ Esscher),
so we choose such a $\delta$ to ensure the finiteness of the jump term in \eqref{eq:PIDE_F_eta}.
The truncation level $Y_{\max}$ and the number of Simpson points $N_y$ are selected to strictly control both the oscillatory tail and the quadrature errors, as empirically validated in our convergence diagnostics (see Section~\ref{subsec:bromwich_diagnostics}).
Finally, the intensity domain $[0,\lambda_{\max}]$ is chosen sufficiently large so that off-grid queries for $\lambda+\beta x$ remain negligible; this is continuously monitored and validated \emph{a posteriori} via the boundary-hit diagnostic introduced in our experimental setup.

\section{IMEX discretization for the $(t,\lambda)$ PIDE at fixed frequency}\label{sec:IMEX}

Fix $\eta=\delta+iy$ with $\delta>0$ from Assumption~\ref{ass:laplace}.
We solve the PIDE \eqref{eq:PIDE_F_eta} for $F=F(t,\lambda,\eta)$ backwards from $T$ to $0$. Exploiting the fact that $\tilde\nu(\lambda,\cdot)$ is a probability kernel (\(\int_0^\infty \tilde\nu(\lambda,dx)=1\)), we adopt an Implicit-Explicit (IMEX) split with the local drift/reaction treated implicitly and the nonlocal jump operator treated explicitly (see Cont-Voltchkova~\cite{ContVoltchkova2005}):

\begin{equation}\label{eq:PIDE_IMEX_form}
\partial_t F\;+\;L_{\mathrm{imp}}F\;+\;N_{\mathrm{exp}}[F]\;=\;0,\qquad F(T,\lambda,\eta)=1.
\end{equation}
with 
\begin{equation}\label{eq:IMEX_split}
L_{\mathrm{imp}}F \ :=\ \mu(\lambda)\,\partial_\lambda F\ -\ r\,F,
\qquad
N_{\mathrm{exp}}[F]\ :=\ \lambda\!\int_0^\infty\!\Big(e^{\eta x}F(t,\lambda+\beta x,\eta)-F(t,\lambda,\eta)\Big)\,\tilde\nu(\lambda,dx).
\end{equation}
By doing so, if $\eta=0$ and $F \equiv 1$ spatially, $N_{\mathrm{exp}}[F]$ vanishes exactly, avoiding any spurious spatial numerical diffusion for the zero-coupon bond mode.

\subsection{Grids and discrete operators}
Let $t_n := n\,\Delta t$ $(n=0,\dots,N)$ with $t_N=T$ and $\lambda_i:=\lambda_{\min}+i\,h$ ($i=0,\dots,I$). We solve backward from $t_N=T$ to $t_0=0$.
We denote $F_i^n:=F(t_n,\lambda_i,\eta)$ and set $\mu_i=\mu(\lambda_i)$, 
$\mu_i^+=\max(\mu_i,0)$, $\mu_i^-=\min(\mu_i,0)$.   

\noindent To handle the convection term $\mu(\lambda)\partial_\lambda$, we employ a first-order upwind difference scheme and because the flow of time is reversed in the backward PDE ($\tau = T-t$), the effective advection velocity is $-\mu(\lambda)$. Although only first-order accurate, the upwind scheme is monotone and prevents spurious oscillations (such as negative prices) that typically arise with central differences in convection-dominated regimes or with non-smooth initial data. It also gives us an M-matrix, useful for the discrete maximum principle (see LeVeque~\cite[§3.1–3.3]{LeVeque2002}) :
\[
(D^-F)^n_i:=\frac{F_i^n-F_{i-1}^n}{\Delta\lambda},\qquad
(D^+F)^n_i:=\frac{F_{i+1}^n-F_i^n}{\Delta\lambda}.
\]
and the implicit local operator is :
\begin{equation}\label{eq:Limp_discrete}
(L_{\mathrm{imp}}^\Delta F^n)_i
:=\ \mu_i^+\,(D^+F)^n_i\;+\;\mu_i^-\,(D^-F)^n_i\;-\;r\,F_i^n.
\end{equation}

\noindent The explicit nonlocal operator is $(N_{\mathrm{exp}}^\Delta[F^{n+1}])_i \;=\; \lambda_i\big(Q_i^{n+1}-F_i^{n+1}\big)$,
where $Q_i^{n+1}$ is defined by \eqref{eq:Q_def}.
\begin{equation}\label{eq:Q_def}
Q_i^{n+1}
:= \int_0^\infty e^{\eta x}\,
\Pi_{\Delta\lambda}\!\left[F^{n+1}\right](\lambda_i+\beta x)\,\tilde\nu(\lambda_i,dx),
\qquad \eta=\delta+iy.
\end{equation}
and $\Pi_{\Delta\lambda}$ denotes a monotone interpolation on the $\{\lambda_i\}$ grid (PCHIP to preserve positivity and avoid spurious oscillations) and we evaluated the integral by a generalized Gauss-Laguerre scheme.

\paragraph{Boundary conditions in $\lambda$.}
At $\lambda_{\min}=0$ we impose a homogeneous Neumann condition $\partial_\lambda F=0$,
implemented by a ghost point $(D^-F)^n_0=0$.
At $\lambda_{\max}$ we extrapolate monotonically.
The truncation level $\lambda_{\max}$ is chosen so that evaluations of the shifted argument
$\lambda_i+\beta x$ in \eqref{eq:Q_def} rarely fall outside $[0,\lambda_{\max}]$:
we monitor the (quadrature-weighted) boundary-hit ratio and increase $\lambda_{\max}$ until it is below a prescribed tolerance.

\subsection{The backward IMEX step}
A first-order IMEX–Euler step from $t_{n+1}$ to $t_n$ reads
\begin{equation}\label{eq:IMEX_step}
\frac{F_i^{n+1}-F_i^{n}}{\Delta t} + (L_{\mathrm{imp}}^\Delta F^n)_i + (N_{\mathrm{exp}}^\Delta[F^{n+1}])_i = 0.
\end{equation}
Equivalently, for $n=N-1,\dots,0$ we have the IMEX scheme :
\begin{equation}\label{eq:linear_system}
\boxed{\big(I-\Delta t\,L_{\mathrm{imp}}^\Delta\big)F^n \;=\; F^{n+1}\;+\;\Delta t\,N_{\mathrm{exp}}^\Delta[F^{n+1}]}
\end{equation}
with terminal data $F^N\equiv\mathbf{1}$ at $t_N=T$.

\paragraph{Tri-diagonal structure.}
\noindent Writing \eqref{eq:Limp_discrete} out, \eqref{eq:linear_system} is a tri-diagonal system \(\mathbf A F^n = \mathbf b^{n+1}\) with coefficients below and $F^n:=(F_0^n,\dots,F_I^n)^\top$ the vector of nodal values at time $t_n$:

\[
A_{i,i-1}\,F_{i-1}^n + A_{i,i}\,F_i^n + A_{i,i+1}\,F_{i+1}^n = \mathrm{RHS}_i^{n+1}\qquad (1\le i\le I-1),
\]
with coefficients
\begin{equation}\label{eq:coeff_A}
A_{i,i}=1+\Delta t\!\left(\frac{\mu_i^+-\mu_i^-}{\Delta\lambda}+r\right),\quad
A_{i,i-1}=\Delta t\,\frac{\mu_i^-}{\Delta\lambda}\le 0,\quad
A_{i,i+1}=-\Delta t\,\frac{\mu_i^+}{\Delta\lambda}\le 0,
\end{equation}
and right-hand side
\begin{equation}\label{eq:coeff_RHS}
\mathrm{RHS}_i^{n+1}
= F_i^{n+1} + \Delta t\,\lambda_i\big(Q_i^{n+1}-F_i^{n+1}\big)
= \bigl(1 - \Delta t\,\lambda_i\bigr)F_i^{n+1} + \Delta t\,\lambda_i Q_i^{n+1}.
\end{equation}

We solve this system by the Thomas (LU tridiagonal) algorithm in $\mathcal O(I)$ per time step and per frequency, see, e.g., Press and al.\cite[§2.4]{PressTeukolskyVetterlingFlannery2007}. For M-matrix diagonal dominant from the implicit upwind, under the CFL condition we're going to see below the process is numerically stable.


\subsection{Gauss--Laguerre approximation of the jump integral}

With the Gamma mixture (see Definition~\ref{def:moe}),
$\tilde\nu(\lambda_i,dx)=\sum_{m=1}^M \pi_m(\lambda_i)\,\Gamma(k_m,b_m)(dx)$, after the change of variable $z=b_m x$ in \eqref{eq:Q_def} we obtain the representation
\begin{equation}\label{eq:Q_after_change}
Q_i^{n+1}
= \sum_{m=1}^M \pi_m(\lambda_i)\,\frac{1}{\Gamma(k_m)}
\int_0^\infty e^{\eta z/b_m} \Pi_{\Delta\lambda}\!\left[F^{n+1}\right]\Bigl(\lambda_i+\beta \tfrac{z}{b_m},\eta\Bigr)
z^{k_m-1} e^{-z}\,dz.
\end{equation}

\noindent Since $b_m>\Re(\eta)$ for all $m$ under the chosen damping (cf. Lemma~\ref{lem:theta-domain}), we have
$|e^{\eta z/b_m} e^{-z}| = e^{-(1-\Re(\eta)/b_m)z}$, so the integrand decays exponentially.
Moreover $k_m-1>-1$, hence the weight $z^{k_m-1}e^{-z}$ is integrable on $(0,\infty)$.
Therefore the integral in \eqref{eq:Q_after_change} can be approximated by the
Generalized Q-point Gauss--Laguerre rule with parameter $\alpha_m=k_m-1$ :

\begin{equation}\label{eq:Q_discrete_GL}
Q_i^{n+1}
\;\approx\;
\sum_{m=1}^M \pi_m(\lambda_i)\,\frac{1}{\Gamma(k_m)}
\sum_{q=1}^Q w_{m,q}^{(k_m-1)}\,
e^{\eta z_{m,q}/b_m}\,
\Pi_{\Delta\lambda}\!\left[F^{n+1}\right]\Bigl(\lambda_i+\beta \tfrac{z_{m,q}}{b_m},\eta\Bigr),
\end{equation}
where $\{z_{m,q},w_{m,q}^{(k_m-1)}\}_{q=1}^Q$ are the Gauss--Laguerre (GL) nodes/weights for parameter $\alpha=k_m-1$, computed by Golub–Welsch~\cite{GolubWelsch1969} (i.e. by diagonalising the
$Q\times Q$ Jacobi matrix associated with the Laguerre weight
$z^{k_m-1}e^{-z}$); cf. 
Davis–Rabinowitz~\cite[Ch.~3.6]{DavisRabinowitz1984}. In practice, $Q=16$ or $32$ are enough; we control $\|Q^{(Q)}-Q^{(2Q)}\|$ and refine if necessary (see \cite{DavisRabinowitz1984} for error bounds). If $\lambda_i+\beta z_{m,q}/b_m\notin[\lambda_{\min},\lambda_{\max}]$ we apply the monotone extrapolation and monitor the boundary-hit ratio.

This choice is motivated by three factors : (i) It handles the semi-infinite domain $[0,\infty)$ without artificial truncation and with the exponential decrease it is optimal; (ii) The weights are positive so no spurious oscillations when combined with a monotone interpolant (PCHIP); (iii) The nodes/weights are  independent of $t$ and of the $\lambda$-grid so they can be pre-computed per components. The structure \eqref{eq:Q_def} still holds for a general kernel $\tilde\nu$ (Cf. Rem \ref{rem:generic-kernel})

\begin{remark}[Kernel-agnostic variant]\label{rem:generic-kernel}
If $\tilde\nu(\lambda,dx)$ isn't a Gamma-mixture, we keep \eqref{eq:Q_def} and we replace the GL with
a positive quadrature adapted to the support (e.g. Gauss–Jacobi after reparametrization, or adaptive Gauss–Kronrod).
The IMEX scheme and tri-diagonal structure do not change; only the integration routine varies.
\end{remark}

\subsection{Shape-preserving interpolation in $\lambda$ (PCHIP)}

In the Gauss--Laguerre approximation \eqref{eq:Q_discrete_GL}, we must evaluate $F^{n+1}(\lambda_i+\beta z_{m,q}/b_m,\eta)$ at off-grid points $\lambda' := \lambda_i + \beta \frac{z_{m,q}}{b_m}$, while it is only known on the spatial grid $\{\lambda_j\}_{j=0}^I$. We use the shape-preserving Piecewise Cubic Hermite interpolant (PCHIP) $\Pi_{\Delta\lambda}[F^{n+1}]$ built on $\{(\lambda_j, F_j^{n+1})\}_{j=0}^I$ with the Fritsch-Carlson slope \cite{FritschCarlson1980,FritschButland1984} (with monotone endpoint slope limiting \cite{FritschButland1984}).

\medskip 
We employ PCHIP because unlike standard cubic splines whose interpolant is $C^2$, which are prone to overshoot on monotone data (and may violate nonnegativity), PCHIP constructs a $C^1$ piecewise-cubic interpolant and enforces
local monotonicity via the Fritsch–Carlson slope limiting
\cite{FritschCarlson1980,FritschButland1984}. Concretely, nodal derivatives are set as
weighted harmonic means of adjacent secant slopes, which guarantees that on any cell
where the nodal data are monotone the interpolant is also monotone and stays between
the two endpoint values. In particular, if the nodal values are nonnegative, the
interpolant remains nonnegative on each monotone cell. This avoids the overshoots
observed with classical cubic splines \cite[§VI]{deBoor1978} and is substantially
less diffusive than piecewise linear interpolation.

\noindent From a numerical standpoint, this choice pairs well with the positive Gauss–Laguerre
weights used in the jump integral: “no-overshoot” interpolation plus positive quadrature
helps prevent spurious creation of mass in the nonlocal term and improves the robustness
of the IMEX time stepping. For complex-valued modes $F(\cdot,\delta+iy)$, we apply PCHIP
componentwise to the real and imaginary parts, which preserves the shape constraints
on each component \cite{Hyman1983}. While PCHIP provides $\mathcal{O}(\Delta\lambda^2)$ spatial accuracy, simpler piecewise-linear interpolation may also be used in practice to ensure strict 1-Lipschitz stability at the cost of first-order accuracy.

\bigskip

\subsection{Spatial interpolation error}

We quantify the error introduced by the off-grid evaluation of $F^{n+1}$. While standard for monotone Hermite interpolants (see \cite{FritschCarlson1980,Hyman1983,FritschButland1984}; cf. also
\cite[Ch.~II]{deBoor1978}), this estimate is required for the global convergence analysis.

\begin{assumption}[Smoothness and mesh]\label{ass:PCHIP_smooth}
Let $F\in C^2([\lambda_{\min},\lambda_{\max}])$ and let
$\lambda_0<\dots<\lambda_I$ be a partition with steps
$h_j:=\lambda_{j+1}-\lambda_j$ and $h:=\max_j h_j$.
Denote $F_j:=F(\lambda_j)$ and let $\Pi_{\Delta\lambda}$ be the PCHIP interpolant built from $\{(\lambda_j,F_j)\}_{j=0}^I$ by the Fritsch--Carlson slope
selection.
\end{assumption}

\begin{lemma}[Interpolation error]\label{lem:PCHIP_error}
Under Assumption~\ref{ass:PCHIP_smooth}, there exists a constant $C>0$ depending only on $\|F''\|_{\infty,\Lambda}$ and mesh regularity and $\widetilde F :=\Pi_{\Delta\lambda}[F]$ satisfies
\begin{equation}\label{eq:PCHIP_error}
\|\widetilde F - F\|_{\infty,\Lambda}
\;\le\; C\,h^p.
\end{equation}
\noindent where $p=2$ for PCHIP and $p=1$ for piecewise-linear interpolation.
\end{lemma}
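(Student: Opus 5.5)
Both interpolants reproduce the nodal data, so I will work cell by cell. On each cell $I_j=[\lambda_j,\lambda_{j+1}]$ I will write the interpolant in Hermite form, compare it with a reference interpolant whose error is classical, and bound the difference. The piecewise-linear case ($p=1$, or in fact $p=2$) is immediate. On $I_j$ the linear interpolant $L_jF$ satisfies the standard remainder formula $F(\lambda)-L_jF(\lambda)=\tfrac12 F''(\xi)(\lambda-\lambda_j)(\lambda-\lambda_{j+1})$. Hence $\|F-L_jF\|_{\infty,I_j}\le \tfrac18 h_j^2\|F''\|_{\infty,\Lambda}$, which already gives $Ch^2\le C' h$ on a bounded domain.

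For PCHIP, on $I_j$ I write
\[
\widetilde F(\lambda)=F_jH_{00}(s)+F_{j+1}H_{01}(s)+h_j d_jH_{10}(s)+h_j d_{j+1}H_{11}(s),\qquad s=(\lambda-\lambda_j)/h_j .
\]
Here the $d_j$ are the Fritsch--Carlson slopes and the $H_{\cdot\cdot}$ are the cubic Hermite basis functions, with $|H_{10}|,|H_{11}|\le 4/27$. I will compare $\widetilde F$ with $L_jF$. Since $H_{10}+H_{11}$ and the linear combination reproduce linear functions, $\widetilde F-L_jF=h_j\big[(d_j-\Delta_j)H_{10}(s)+(d_{j+1}-\Delta_j)H_{11}(s)\big]$, where $\Delta_j=(F_{j+1}-F_j)/h_j$ is the secant slope. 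So it suffices to show $|d_j-\Delta_j|\le C h\,\|F''\|_\infty$ and the same for $d_{j+1}$. Combined with the linear bound this gives $\|\widetilde F-F\|_{\infty,I_j}\le \tfrac18h^2\|F''\|+\tfrac{8}{27}h\cdot Ch\|F''\|=O(h^2)$.

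The key step, and the main obstacle, is the slope estimate. Fritsch--Carlson sets $d_j=0$ when $\Delta_{j-1}\Delta_j\le0$, and otherwise uses a weighted harmonic mean of $\Delta_{j-1}$ and $\Delta_j$. The endpoints use a limited one-sided three-point formula. In the monotone case a weighted harmonic mean lies between $\min(\Delta_{j-1},\Delta_j)$ and $\max(\Delta_{j-1},\Delta_j)$. By the mean value theorem both secants equal $F'$ at points within distance $h_{j-1}+h_j$, so $|\Delta_{j-1}-\Delta_j|\le (h_{j-1}+h_j)\|F''\|$. Hence $|d_j-\Delta_j|\le |\Delta_{j-1}-\Delta_j|\le 2h\|F''\|$. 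When the signs differ or one secant vanishes, $d_j=0$ and $0$ lies between the two secants, so the same bound $|\Delta_j|\le|\Delta_j-\Delta_{j-1}|\le 2h\|F''\|$ holds. For $d_{j+1}$ the same argument applies with $\Delta_j,\Delta_{j+1}$. Mesh regularity (a bound on $h_{j\pm1}/h_j$) enters only through the weights and the endpoint formula, so the constant depends on $\|F''\|_{\infty,\Lambda}$ and the mesh ratio.

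For the limited endpoint formula I will check separately that the unlimited three-point slope is within $Ch\|F''\|$ of $\Delta_0$, again by mean-value arguments. The limiting step either sets the slope to $0$, which is only triggered when a sign change forces $|\Delta_0|\le Ch\|F''\|$, or sets it to $3\Delta_0$. The latter is triggered only when the secants differ by at least a multiple of $|\Delta_0|$, which again forces $|\Delta_0|=O(h\|F''\|)$. Taking the maximum over all cells gives \eqref{eq:PCHIP_error} with $p=2$ for PCHIP and, a fortiori, $p=1$ for linear interpolation.
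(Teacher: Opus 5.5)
Your proof is correct and follows essentially the same route as the paper's (Appendix~\ref{app:PCHIP_error_detail}). Both work in Hermite form on each cell, reduce to first-order accuracy of the Fritsch--Carlson slopes, and split into the harmonic-mean case and the zero-slope case. The differences are cosmetic: you measure slope errors against the secant $\Delta_j$ by comparing with the linear interpolant, rather than against $F'(\lambda_j)$ via Taylor expansion, and your observation that each slope ``lies between the two secants'' yields constants that need no mesh-ratio assumption. You also handle the limited endpoint slope inline instead of deferring it to Appendix~\ref{app:PCHIP-boundary}.
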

\begin{proof}[Proof Sketch]
For the PCHIP scheme, the error of the cubic Hermite representation is driven by the accuracy of the local slope approximations $m_j$. In Appendix~\ref{app:PCHIP_error_detail}, we prove that the Fritsch--Carlson slope limiters guarantee $|m_j - F'(\lambda_j)| = \mathcal{O}(h)$ in both monotone and non-monotone regions, which yields the global $\mathcal{O}(h^2)$ bound.
For piecewise-linear interpolation, a Taylor expansion on each cell yields
$\| \Pi_{\Delta\lambda}^{\rm lin}F - F\|_{\infty,\Lambda}\le C h\,\|F'\|_{\infty,\Lambda}$, hence \eqref{eq:PCHIP_error} holds with $p=1$.

\end{proof}

\begin{lemma}[Sup–norm Lipschitz property of the nonlocal operator]\label{lem:Q-Lipschitz}
Let $N_{\exp}$ the non-local operator defined in (\ref{eq:IMEX_split}). Under Assumptions \ref{ass:laplace} ($\Re(\eta)=\delta < s_0$), for any bounded $F,G:[0,\infty)\to\C$ on $\Lambda$, always with the clamping convention :
\[
\|N_{\exp}[F]-N_{\exp}[G]\|_{\infty,\Lambda}
\ \le\ \lambda_{\max}\,(C_\delta^*(\Lambda)+1)\ \|F-G\|_{\infty,\Lambda}.
\]
\end{lemma}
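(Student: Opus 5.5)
The plan is a direct pointwise estimate, using linearity of $N_{\exp}$ and the clamped extension. Set $H:=F-G$, extended to $[0,\infty)$ by $E_{\rm clamp}$. Since clamping is applied to $F$ and $G$ separately and is linear in the data, we have $E_{\rm clamp}F-E_{\rm clamp}G=E_{\rm clamp}H$. The non-expansiveness of $E_{\rm clamp}$ stated in Section~\ref{sec:notations} then gives
\[
\sup_{\lambda'\ge0}|(E_{\rm clamp}H)(\lambda')|=\|H\|_{\infty,\Lambda}.
\]
For fixed $\lambda\in\Lambda$, linearity of the integral gives
\[
N_{\exp}[F](\lambda)-N_{\exp}[G](\lambda)=\lambda\int_0^\infty e^{\eta x}(E_{\rm clamp}H)(\lambda+\beta x)\,\tilde\nu(\lambda,dx)\;-\;\lambda H(\lambda)\int_0^\infty\tilde\nu(\lambda,dx).
\]
We will bound the two pieces separately with the triangle inequality.

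For the first piece, we use $|e^{\eta x}|=e^{\Re(\eta)x}=e^{\delta x}$. Combined with the sup bound on $E_{\rm clamp}H$, this gives a bound of $\lambda\,\|H\|_{\infty,\Lambda}\int_0^\infty e^{\delta x}\,\tilde\nu(\lambda,dx)=\lambda\,C_\delta(\lambda)\,\|H\|_{\infty,\Lambda}$. Here Assumption~\ref{ass:laplace}, with $\delta<s_0$ and $\Lambda$ compact, ensures that $C_\delta^*(\Lambda)<\infty$, so the integral converges absolutely and the bound is meaningful. For the second piece, $\tilde\nu(\lambda,\cdot)$ is a probability kernel, so the term is bounded by $\lambda|H(\lambda)|\le\lambda\|H\|_{\infty,\Lambda}$.

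Adding the two bounds and taking the supremum over $\lambda\in\Lambda$, with $0\le\lambda\le\lambda_{\max}$ and $C_\delta(\lambda)\le C_\delta^*(\Lambda)$, yields the claimed constant $\lambda_{\max}(C_\delta^*(\Lambda)+1)$.

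The only delicate point is the off-domain evaluation $\lambda+\beta x\notin\Lambda$. That is the step where the clamping convention is essential, because it keeps the sup of the extended difference equal to $\|F-G\|_{\infty,\Lambda}$ rather than a sup over $[0,\infty)$. If the linear-extrapolation variant from Appendix~\ref{app:PCHIP-boundary} were used instead, the same proof would go through with an extra factor equal to the extrapolation operator's sup-norm constant. Measurability of $x\mapsto(E_{\rm clamp}H)(\lambda+\beta x)$ holds because $F$ and $G$ are continuous on $\Lambda$, so the clamped extensions are continuous.
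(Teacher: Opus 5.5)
Your proposal is correct and follows essentially the same route as the paper. Both arguments split $N_{\exp}[F]-N_{\exp}[G]$ pointwise into gain and loss terms via the triangle inequality, use $|e^{\eta x}|=e^{\delta x}$ together with the probability normalisation of $\tilde\nu(\lambda,\cdot)$, and invoke the non-expansiveness of $E_{\rm clamp}$ before taking the supremum over $\Lambda$. (Your measurability remark relies on the paper's standing convention that all functions are continuous on $\Lambda$, which is slightly more than the lemma's stated boundedness hypothesis, but this is harmless.)
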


\begin{proof}
For $\lambda\in\Lambda$ and since $\tilde\nu(\lambda, \cdot)$ is a probability measure (integrating to 1),
\[
|(N_{\exp}[F]-N_{\exp}[G])(\lambda)|
\le \lambda\!\int_0^\infty\! e^{\delta x}\,|F(\lambda+\beta x)-G(\lambda+\beta x)|\,\tilde\nu(\lambda,dx) \;+\; \lambda\,|F(\lambda)-G(\lambda)|\!\int_0^\infty\!\tilde\nu(\lambda,dx).
\]
Taking the supremum over \(\lambda\in\Lambda\) and using the clamping convention in Section \ref{sec:notations} with $\|f-g\|_{\infty,\Lambda}=\|E_{\rm clamp}f-E_{\rm clamp}g\|_{\infty,[0,\infty)}$ yields the claim.
\end{proof}
\begin{corollary}[Error propagation in the jump operator]\label{cor:PCHIP_jump_error}
Let $\Pi_{\Delta\lambda}F$ be the spatial interpolant (e.g., PCHIP or linear) of $F$ and let $N_{\exp}[F]$ be defined as in \eqref{eq:IMEX_split}, with $\Re(\eta)=\delta<s_0$ and $\tilde\nu(\lambda,dx)$ satisfying the Laplace admissibility in Assumption~\ref{ass:laplace}. Then on the computational domain $\Lambda$,
\[
\bigl\|\,N_{\exp}[\Pi_{\Delta\lambda}F]-N_{\exp}[F]\,\bigr\|_{\infty,\Lambda}
\;\le\; \lambda_{\max}\big(C_\delta^*(\Lambda) + 1\big)\;\|\Pi_{\Delta\lambda}F-F\|_{\infty,\Lambda}
\;=\; \mathcal O(h^p),
\]
uniformly on $\Lambda$, where $C_\delta^*(\Lambda)<\infty$ by Assumption \ref{ass:laplace} and $p$ is the interpolation order.
\end{corollary}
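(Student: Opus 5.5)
The plan is to obtain the corollary by combining Lemma~\ref{lem:Q-Lipschitz} with Lemma~\ref{lem:PCHIP_error}. Since $N_{\exp}$ is linear in its argument, $N_{\exp}[\Pi_{\Delta\lambda}F]-N_{\exp}[F]=N_{\exp}[\Pi_{\Delta\lambda}F-F]$. It therefore suffices to apply the Lipschitz estimate with the pair $(\Pi_{\Delta\lambda}F,\,F)$. This requires only that both functions be bounded on $\Lambda$ and extended off $\Lambda$ by the clamping convention $E_{\rm clamp}$. For $F$ this holds because $F$ is continuous on the compact set $\Lambda$. For $\Pi_{\Delta\lambda}F$ it holds because the interpolant is piecewise polynomial and continuous on $\Lambda$.

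Concretely, fix $\lambda\in\Lambda$ and split $N_{\exp}[\Pi_{\Delta\lambda}F-F](\lambda)$ into the shifted term and the local term. For the shifted term, $|e^{\eta x}|=e^{\delta x}$. Non-expansiveness of $E_{\rm clamp}$ gives
\[
|(\Pi_{\Delta\lambda}F-F)(\lambda+\beta x)|\le\|\Pi_{\Delta\lambda}F-F\|_{\infty,\Lambda}
\]
for every $x\ge0$. Integrating against $\tilde\nu(\lambda,dx)$ then produces the factor $C_\delta(\lambda)\le C_\delta^*(\Lambda)$. For the local term, $\tilde\nu(\lambda,\cdot)$ is a probability measure, so it contributes a factor $1$. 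Multiplying by $\lambda\le\lambda_{\max}$ and taking the supremum over $\Lambda$ gives the first inequality.

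Two facts remain to be checked. First, $C_\delta^*(\Lambda)<\infty$: this is Assumption~\ref{ass:laplace} applied with $K=\Lambda$ and $s=\delta<s_0$. Second, the rate $\mathcal O(h^p)$ follows from Lemma~\ref{lem:PCHIP_error}, whose constant depends only on $\|F''\|_{\infty,\Lambda}$ and mesh regularity. Hence the $\mathcal O(h^p)$ constant is $\lambda_{\max}(C_\delta^*(\Lambda)+1)\,C$, independent of $\lambda$, which gives the stated uniformity.

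The main point needing care is consistency between the two sides of the comparison. One must use the same off-domain extension for $F$ and for $\Pi_{\Delta\lambda}F$: either apply $E_{\rm clamp}$ to both, or compare both to the clamped extension of $F$. Otherwise the discrepancy outside $\Lambda$ would not be controlled by $\|\Pi_{\Delta\lambda}F-F\|_{\infty,\Lambda}$. I will state explicitly that $N_{\exp}[F]$ is understood with $E_{\rm clamp}F$, as in Lemma~\ref{lem:Q-Lipschitz}. Under that convention the bound is exact.

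If the linear-extrapolation alternative of Appendix~\ref{app:PCHIP-boundary} is used instead, the extension is no longer non-expansive with constant one. In that case I would allow an extra mesh-independent constant factor and leave the $\mathcal O(h^p)$ rate unchanged.
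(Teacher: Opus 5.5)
Your proposal is correct and is the paper's proof: Lemma~\ref{lem:Q-Lipschitz} applied to the pair $(\Pi_{\Delta\lambda}F,F)$ under the clamping convention, followed by Lemma~\ref{lem:PCHIP_error} for the $\mathcal O(h^p)$ rate; your point that both functions must be extended by the same $E_{\rm clamp}$ makes explicit a convention the paper leaves implicit. Only your final aside is loose: the extrapolation of Appendix~\ref{app:PCHIP-boundary} uses a slope $m_b$ computed from cells of width $h$, so its sup-norm amplification grows like $d/h$ rather than being mesh-independent, and the paper does not fold it into this corollary but handles it separately through Corollary~\ref{cor:short_extrap} and the term $\varepsilon_{\rm bord}$ of Theorem~\ref{thm:consistency}.
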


\begin{proof}
    Follows directly from Lemma~\ref{lem:PCHIP_error} and Lemma~\ref{lem:Q-Lipschitz}.
\end{proof}

\begin{prop}[Boundedness of the exponential modes]\label{prop:mode-bounded}
Fix $\eta=\delta+iy$ with $\delta\in[0,s_0)$ as in Lemma~\ref{lem:Q-Lipschitz}. Consider the modal
value function
\[
F(t,\lambda,\eta)
:= \E\Bigl[ e^{-r(T-t)} \prod_{t< T_k \le T} e^{\eta X_k} \,\Big|\, \lambda_t=\lambda \Bigr],
\]
where $\{(T_k,X_k)\}$ is the marked point process of jumps of $(\lambda,U)$ with compensator
$\lambda_s \tilde\nu(\lambda_s,dx)\,ds$ under the chosen pricing measure (actuarial or Esscher). Working on the truncated domain $\Lambda$ (or after localization) so that $\lambda_s\le\lambda_{\max}$ for $s\in[t,T]$, we have the uniform bound

\begin{equation}\label{eq:mode-bound}
|F(t,\lambda,\eta)|
\;\le\;
\exp\!\Bigl( (T-t)\bigl[\lambda_{\max}(M_\eta^\ast - 1) - r \bigr] \Bigr),
\qquad
M_\eta^\ast := \sup_{\lambda\in\Lambda} \int_0^\infty e^{\delta x}\,\tilde\nu(\lambda,dx),
\end{equation}
\end{prop}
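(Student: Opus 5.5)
We bound the modulus inside the expectation and then control an exponential functional of the marked point process with an exponential-martingale (Campbell/Laplace) argument. Since $|e^{\eta X_k}|=e^{\delta X_k}$, we have
\[
|F(t,\lambda,\eta)|\le e^{-r(T-t)}\,\E\Bigl[\prod_{t<T_k\le T}e^{\delta X_k}\,\Big|\,\lambda_t=\lambda\Bigr]
= e^{-r(T-t)}\,\E\bigl[e^{\delta(U_T-U_t)}\mid\lambda_t=\lambda\bigr].
\]
It therefore suffices to show $\E[e^{\delta(U_T-U_t)}]\le \exp\bigl((T-t)\lambda_{\max}(M_\eta^*-1)\bigr)$. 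Here $M_\eta^*\ge1$ because $e^{\delta x}\ge1$ and $\tilde\nu(\lambda,\cdot)$ is a probability measure.

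To get this bound we introduce
\[
Y_s:=\exp\Bigl(\delta(U_s-U_t)-\int_t^s\lambda_v\bigl(C_\delta(\lambda_v)-1\bigr)\,dv\Bigr),\qquad s\in[t,T].
\]
This is the Doléans-Dade exponential associated with the compensator $\lambda_{s}\tilde\nu(\lambda_{s-},dx)\,ds$. Equivalently, applying the extended generator of Proposition~\ref{prop:Generator} to $e^{\delta u}$ gives $\mathcal A e^{\delta u}=e^{\delta u}\lambda\,(C_\delta(\lambda)-1)$. By the exponential formula for marked point processes (Brémaud, Ch.~II; Jacod--Shiryaev III.3), $Y$ is a nonnegative local martingale, hence a supermartingale, so $\E[Y_T]\le 1$. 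On the truncated domain (or after localization) we have $\lambda_v\le\lambda_{\max}$ and $C_\delta(\lambda_v)\le M_\eta^*$, and the integrand $\lambda_v(C_\delta(\lambda_v)-1)$ is nonnegative. This gives the deterministic bound $\int_t^T\lambda_v(C_\delta(\lambda_v)-1)\,dv\le (T-t)\lambda_{\max}(M_\eta^*-1)$, and therefore
\[
\E\bigl[e^{\delta(U_T-U_t)}\bigr]\le e^{(T-t)\lambda_{\max}(M_\eta^*-1)}\,\E[Y_T]\le e^{(T-t)\lambda_{\max}(M_\eta^*-1)}.
\]
Combining this with the discount factor yields \eqref{eq:mode-bound}.

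The main technical point is justifying the supermartingale step rigorously. Nonnegativity makes this essentially free, because a nonnegative local martingale is automatically a supermartingale via Fatou. It remains to check the local martingale property of $Y$, and we would do this by localizing with $\tau_n=\inf\{s:U_s\ge n\}\wedge T$ and using Assumption~\ref{ass:laplace}, which guarantees $C_\delta<\infty$ uniformly on $\Lambda$ since $\delta<s_0$.

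As a cross-check, there is an alternative PDE route. Write $F=e^{-r(T-t)}G$ and compare $|G|$ with the function $\bar G(t)=e^{(T-t)\lambda_{\max}(M^*-1)}$, which depends only on $t$. This $\bar G$ is a supersolution of the reduced equation \eqref{eq:PIDE_F_eta} with $\eta$ replaced by $\delta$ on $\Lambda$, by the same inequality $\lambda(C_\delta-1)\le\lambda_{\max}(M^*-1)$. A maximum principle would then give the bound, but the probabilistic argument above avoids needing regularity of $F$, so we would take that as the main route.
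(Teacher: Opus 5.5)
Your proposal is correct and follows the paper's proof essentially step for step. You bound $|e^{\eta X_k}|$ by $e^{\delta X_k}$, use that the Dol\'eans--Dade exponential $Z_s e^{-A_s}$ is a positive local martingale (hence a supermartingale, so $\E[M_T]\le 1$), and bound $A_T\le (T-t)\lambda_{\max}(M_\eta^\ast-1)$ deterministically on the truncated domain; the localization remark and the PDE supersolution cross-check are fine extras but not needed.
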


\begin{proof}
Taking absolute values and using $|e^{\eta X_k}|=e^{\delta X_k}$, we bound
\[
|F(t,\lambda,\eta)| 
\;\le\; e^{-r(T-t)}\, \E\Bigl[ \prod_{t< T_k \le T} e^{\delta X_k} \,\Big|\, \lambda_t=\lambda \Bigr].
\]
Let $Z_s := \prod_{t< T_k \le s} e^{\delta X_k}$ for $s \ge t$. Denote $A_s := \int_t^s \lambda_v \int_0^\infty \big(e^{\delta x}-1\big)\,\tilde\nu(\lambda_v,dx)\,dv, s\in[t,T]$.
By the marked point process exponential formula (see, e.g., Brémaud~\cite[Ch.~II]{Bremaud1981}), the Doléans--Dade exponential $M_s := Z_s \exp(-A_s)$
is a \emph{positive local martingale}, hence a supermartingale. Therefore
\[
\E[M_T \mid \mathcal F_t] \le M_t = 1,
\qquad\text{and in particular}\qquad
\E[M_T \mid \lambda_t=\lambda]\le 1.
\]

\noindent On the truncated domain (or after localization), $\lambda_v\le \lambda_{\max}$ for $v\in[t,T]$.
Moreover, since $\delta<s_0$, $M_\eta^\ast:=\sup_{\lambda\in\Lambda}\int_0^\infty e^{\delta x}\tilde\nu(\lambda,dx)<\infty$.
Thus for all $v\in[t,T]$,
\[
\lambda_v \int_0^\infty (e^{\delta x}-1)\tilde\nu(\lambda_v,dx)
\;\le\; \lambda_{\max}\Big(\int_0^\infty e^{\delta x}\tilde\nu(\lambda_v,dx)-1\Big)
\;\le\; \lambda_{\max}(M_\eta^\ast-1),
\]
so that $A_T\le (T-t)\lambda_{\max}(M_\eta^\ast-1)$. Hence $Z_T = M_T e^{A_T}
\;\le\; M_T \exp\!\big((T-t)\lambda_{\max}(M_\eta^\ast-1)\big)$.
Taking conditional expectations given $\lambda_t=\lambda$ and using $\E[M_T\mid \lambda_t=\lambda]\le 1$ gives
\[
\E[Z_T\mid \lambda_t=\lambda]
\;\le\; \exp\!\big((T-t)\lambda_{\max}(M_\eta^\ast-1)\big).
\]
Multiplying by the discount factor $e^{-r(T-t)}$ yields \eqref{eq:mode-bound}.

\end{proof}

\begin{remark}[Why this matters]
(i) The bound in Proposition~\ref{prop:mode-bounded} gives \emph{uniform} (in $\lambda\in\Lambda$) control of each modal solution at fixed $\eta$, which we use for dominated convergence and to justify exchanging the $y$–integral with $\partial_t$, $\partial_\lambda$ and the jump integral in Theorem~\ref{thm:inversion}.
(ii) Combined with Lemma~\ref{lem:Q-Lipschitz} and the PCHIP error $\|\Pi_{\Delta\lambda}[F]-F\|_{\infty,\Lambda}=O(h^2)$, it yields a clean $O(h^2)$ perturbation bound for the nonlocal term, feeding into the global convergence estimate.
(iii) Numerically, bounded modal amplitudes help keep the IMEX right-hand side stable across frequencies.
\end{remark}

\subsection{CFL and stability of the IMEX scheme}

Recall the backward IMEX step at a fixed frequency $\eta=\delta+iy$ in the equation (\ref{eq:linear_system}) with $L_{\mathrm{imp}}^\Delta$ given by \eqref{eq:Limp_discrete} (upwind implicit
for $\mu(\lambda)\partial_\lambda-r$) and $N_{\mathrm{exp}}^\Delta$
by \eqref{eq:Q_def} (explicit jump gain via Gauss--Laguerre\,+\,PCHIP).

\noindent To control the backward sweep $t_{n+1}\to t_n$ we need (i) an $\ell^\infty$–bound for the inverse of the implicit matrix $A := I - \Delta t\,L_{\mathrm{imp}}^\Delta$,
and (ii) a CFL–type condition on the explicit jump part to ensure positivity and bounded Lipschitz propagation.
We start with the implicit part.

\begin{prop}[M--matrix property of the implicit part]\label{prop:M-matrix}
Let $A=I - \Delta t\,L_{\mathrm{imp}}^\Delta$ using the coefficients defined in \eqref{eq:coeff_A}. For any time step $\Delta t > 0$, $A$ is a tri-diagonal Z-Matrix that is strictly diagonally dominant by rows. Hence, it is a nonsingular M--matrix (see, e.g., \cite{Varga2000,BermanPlemmons1994}), and its inverse satisfies the Varah bound \cite{Varah1975}:
\begin{equation}\label{eq:Varah}
\boxed{\;
\|A^{-1}\|_{\infty}
\;\le\; \frac{1}{\displaystyle \min_{0\le i\le I}\bigl(1+\Delta t\,r\bigr)}
\;=\; \frac{1}{1+\Delta t\,r} \;\le\; 1.
\;}
\end{equation}
\end{prop}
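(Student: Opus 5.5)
The argument is direct: read the signs off the coefficients in \eqref{eq:coeff_A}, compute each row's diagonal dominance margin, and then apply Varah's bound. Everything happens row by row and needs no smallness condition on $\Delta t$.

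\textbf{Step 1 (Z-matrix).} By construction $\mu_i^-\le 0\le \mu_i^+$, so \eqref{eq:coeff_A} gives $A_{i,i-1}=\Delta t\,\mu_i^-/\Delta\lambda\le 0$ and $A_{i,i+1}=-\Delta t\,\mu_i^+/\Delta\lambda\le 0$. All other off-diagonal entries vanish because the matrix is tridiagonal, so $A$ is a Z-matrix.

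\textbf{Step 2 (Row margins).} For an interior row $1\le i\le I-1$, the diagonal entry can be written as $A_{i,i}=1+\Delta t\,r+\Delta t(|\mu_i^+|+|\mu_i^-|)/\Delta\lambda$, because $\mu_i^+-\mu_i^-=|\mu_i^+|+|\mu_i^-|$. The off-diagonal row sum of absolute values is exactly $\Delta t(|\mu_i^+|+|\mu_i^-|)/\Delta\lambda$. Hence
\[
A_{i,i}-\sum_{j\ne i}|A_{i,j}| = 1+\Delta t\,r \;\ge 1>0 .
\]

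\textbf{Step 3 (Boundary rows).} These are the step I expect to need care, since the paper specifies them only informally.
\begin{itemize}
\item At $i=0$, the Neumann ghost-point convention sets $(D^-F)_0=0$, which drops the $\mu_0^-$ coupling. Row $0$ then reads $(1+\Delta t(\mu_0^+/\Delta\lambda+r))F_0-\Delta t\,\mu_0^+/\Delta\lambda\,F_1$, with margin $1+\Delta t\,r$. Note also that $\mu(\lambda_{\min})\ge 0$ when $\lambda_{\min}=0$, since $\mu(0)\ge 0$.
\item At $i=I$, I will take the monotone extrapolation to mean that the outgoing upwind difference $D^+$ is suppressed there (for instance $F_{I+1}:=F_I$ or a one-sided closure). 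This gives $A_{I,I}=1+\Delta t(-\mu_I^-/\Delta\lambda+r)$ with the single off-diagonal $\Delta t\,\mu_I^-/\Delta\lambda$, and again margin $1+\Delta t\,r$.
\end{itemize}
If a different extrapolation were used, I would need to check it keeps nonpositive off-diagonals and a margin $\ge 1+\Delta t\,r$. I would state this explicitly as the interpretation of the boundary closure.

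\textbf{Step 4 (Conclusion).} A strictly diagonally dominant Z-matrix with positive diagonal is a nonsingular M-matrix (Varga, Berman--Plemmons). Varah's bound states
\[
\|A^{-1}\|_\infty\le 1\Big/\min_i\Big(|A_{ii}|-\sum_{j\ne i}|A_{ij}|\Big).
\]
For completeness I can rederive it. Given $x$, pick $i$ with $|x_i|=\|x\|_\infty$. Then
\[
|(Ax)_i|\ge |A_{ii}||x_i|-\sum_{j\ne i}|A_{ij}||x_j|\ge \big(1+\Delta t\,r\big)\|x\|_\infty ,
\]
so $\|A^{-1}\|_\infty\le (1+\Delta t\,r)^{-1}\le 1$ because $r\ge 0$. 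The bound holds for every $\Delta t>0$, which is the unconditional stability claimed.
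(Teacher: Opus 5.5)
Your proof is correct and follows the same route as the paper: check the off-diagonal signs to get a Z-matrix, show each row has margin $A_{ii}-\sum_{j\neq i}|A_{ij}|=1+\Delta t\,r$, then conclude nonsingular M-matrix and apply Varah's bound. Two of your steps go beyond the paper's proof: you treat rows $0$ and $I$ explicitly, which the paper skips even though it states the minimum over $0\le i\le I$, and you derive Varah's bound directly. Your right-boundary reading is also harmless, because for the OU drift with $\lambda_{\max}>\bar\lambda$ one has $\mu_I^+=0$, so no ghost value $F_{I+1}$ ever enters $A$.
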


\begin{proof}
By construction, $A$ is tri-diagonal with nonpositive off-diagonal entries ($A_{i,i-1} \le 0$ and $A_{i,i+1} \le 0$), making it a Z-matrix. For the diagonal dominance, we sum their absolute values:
\[
|A_{i,i-1}| + |A_{i,i+1}|
= \Delta t\,\frac{\mu_i^-}{\Delta\lambda} + \Delta t\,\frac{\mu_i^+}{\Delta\lambda}
= \Delta t\,\frac{\mu_i^+ - \mu_i^-}{\Delta\lambda}.
\]
Subtracting this from the diagonal entry $A_{i,i} = 1 + \Delta t\Bigl(\frac{\mu_i^+ - \mu_i^-}{\Delta\lambda} + r\Bigr)$ yields the margin of diagonal dominance:
\begin{align*}
A_{i,i} - \bigl(|A_{i,i-1}| + |A_{i,i+1}|\bigr)
&= 1 + \Delta t\,r.
\end{align*}
Since $r \ge 0$, this margin is strictly positive (and $\ge 1$) for any $\Delta t > 0$. A strictly diagonally dominant Z-matrix is a nonsingular M--matrix, and the $\ell^\infty$–bound \eqref{eq:Varah} follows directly from the inverse of the minimal row margin.
\end{proof}

\begin{prop}[Explicit CFL for the jump gain]\label{prop:CFL-exp}
Let $\eta=\delta+iy$ with $\delta\in[0,s_0)$ and let $N_{\mathrm{exp}}^\Delta$ be the discrete
jump operator defined in \eqref{eq:Q_def}.
Assume the exponential moment condition (Assumption~\ref{ass:laplace}) and define the Lipschitz constant:
\[
L_{\exp}(\delta)
:= \sup_{0\le i\le I} \lambda_i \left( \int_0^\infty e^{\delta x}\,\tilde\nu(\lambda_i,dx) + 1 \right) \;\le\; \lambda_{\max} \big(C_\delta^*(\Lambda) + 1\big)
<\infty.
\]
Then, for every pair of bounded grid functions $V,W$,
\begin{equation}\label{eq:CFL-explicit}
\| N_{\mathrm{exp}}^\Delta[V] - N_{\mathrm{exp}}^\Delta[W]\|_{\infty,\Lambda}
\;\le\; L_{\exp}(\delta)\,\|V-W\|_{\infty,\Lambda}.
\end{equation}
In particular, if the time step satisfies the explicit CFL condition 
\begin{equation}\label{eq:CFL-exp-cond}
\boxed{\Delta t\,L_{\exp}(\delta) \;\le\; c \;<\; 1}
\end{equation}
then $\Delta t\,N_{\mathrm{exp}}^\Delta$ is a contraction in $\ell^\infty$ with contraction
factor at most $c$.
\end{prop}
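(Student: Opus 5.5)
The argument mirrors Lemma~\ref{lem:Q-Lipschitz} at the discrete level; the only new point is to check that the quadrature and interpolation steps inside $Q_i$ do not inflate the constant. Fix a grid node $\lambda_i$ and write, by linearity of \eqref{eq:Q_def} in $F$,
\[
(N_{\mathrm{exp}}^\Delta[V]-N_{\mathrm{exp}}^\Delta[W])_i=\lambda_i\Big(\int_0^\infty e^{\eta x}\,\Pi_{\Delta\lambda}[V-W](\lambda_i+\beta x)\,\tilde\nu(\lambda_i,dx)-(V_i-W_i)\Big).
\]
This uses linearity of $\Pi_{\Delta\lambda}$, which holds exactly for the piecewise-linear interpolant and for the clamped extension $E_{\rm clamp}$. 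Taking moduli and using $|e^{\eta x}|=e^{\delta x}$ together with the triangle inequality gives
\[
|(\cdot)_i|\le\lambda_i\Big(\|\Pi_{\Delta\lambda}[V-W]\|_{\infty,[0,\infty)}\int_0^\infty e^{\delta x}\tilde\nu(\lambda_i,dx)+\|V-W\|_{\infty,\Lambda}\Big).
\]

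The next step is to bound the interpolant in sup norm. For piecewise-linear interpolation, each value is a convex combination of nodal values, so $\|\Pi_{\Delta\lambda}G\|_\infty\le\max_j|G_j|$. The clamped extension is non-expansive, as noted in Section~\ref{sec:notations}. Together these give $\|\Pi_{\Delta\lambda}[V-W]\|_{\infty,[0,\infty)}\le\|V-W\|_{\infty,\Lambda}$. Substituting this into the estimate above and taking the supremum over $i$ yields $L_{\exp}(\delta)\|V-W\|_{\infty,\Lambda}$. The inequality $L_{\exp}(\delta)\le\lambda_{\max}(C_\delta^*(\Lambda)+1)<\infty$ follows from $\lambda_i\le\lambda_{\max}$, the definition of $C_\delta^*$, and Assumption~\ref{ass:laplace} with $\delta<s_0$. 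The contraction statement is then immediate: $\|\Delta t\,N^\Delta_{\exp}[V]-\Delta t\,N^\Delta_{\exp}[W]\|_\infty\le\Delta t\,L_{\exp}\|V-W\|_\infty\le c\|V-W\|_\infty$.

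The main obstacle is the interpolation and quadrature step. PCHIP is nonlinear, because of the Fritsch--Carlson slope limiter, and not strictly sup-norm non-expansive: it is bounded by the nodal values only on monotone cells. Two points therefore need care. First, when PCHIP is used, I would either state the result for the linear interpolant, or replace the PCHIP bound by its known stability constant $\|\Pi G\|_\infty\le C_\Pi\max_j|G_j|$, which inflates $L_{\exp}$ by a factor $C_\Pi$. The paper itself flags linear interpolation as the choice giving strict 1-Lipschitz stability. Second, if the exact integral is replaced by the Gauss--Laguerre sum \eqref{eq:Q_discrete_GL}, I would use the positivity of the weights and bound $\sum_q w_{m,q}e^{\delta z_{m,q}/b_m}/\Gamma(k_m)$ by the exact moment $(b_m/(b_m-\delta))^{k_m}$ up to a quadrature error that is absorbed in the later error analysis. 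This keeps the constant at $C_\delta^*$ to leading order.
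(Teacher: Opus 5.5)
Your argument is correct and is essentially the paper's proof: a grid-level replay of Lemma~\ref{lem:Q-Lipschitz} using positivity of the weights. The difference is that you make explicit the sup-norm non-expansiveness of $\Pi_{\Delta\lambda}$ and $E_{\rm clamp}$, which the paper's one-line proof uses tacitly; this holds for piecewise-linear interpolation with clamping but not for PCHIP, so your caveat is justified.

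Two refinements to your fallback remarks.

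\emph{PCHIP.} The obstruction is nonlinearity, not overshoot. With Fritsch--Carlson slopes every cell is monotone, so $\|\Pi_{\Delta\lambda}G\|_\infty\le\max_j|G_j|$ already holds for real data. What you need is a Lipschitz bound $\|\Pi_{\Delta\lambda}V-\Pi_{\Delta\lambda}W\|_\infty\le C_\Pi\|V-W\|_\infty$, and indeed $C_\Pi>1$. For example, on a uniform grid, perturbing the data $(0,0,1,2)$ by $(-\varepsilon,\varepsilon,\varepsilon,\varepsilon)$ moves the interpolant by about $\tfrac{43}{27}\varepsilon$ at one third of the middle cell.

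\emph{Gauss--Laguerre.} Nothing needs to be absorbed into the later error analysis. Since $z\mapsto e^{\delta z/b_m}$ has positive derivatives of all orders, the Gauss rule underestimates the exact moment, so $L^\Delta_{\exp}(\delta)\le L_{\exp}(\delta)$ holds exactly.
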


\begin{proof}
This is a direct application of Lemma~\ref{lem:Q-Lipschitz}, noting that the discrete
Gauss--Laguerre version inherits positivity of the weights and therefore the same
$\ell^\infty$–Lipschitz constant. Multiplying both sides of \eqref{eq:CFL-explicit} by
$\Delta t$ yields \eqref{eq:CFL-exp-cond}.
\end{proof}

\begin{remark}[Discrete version under Gauss--Laguerre + PCHIP]\label{rem:CFL-exp}
When $N_{\mathrm{exp}}^\Delta$ is implemented via the Gamma–mixture kernel, generalised
Gauss--Laguerre nodes $(x_{m,q},w_{m,q}^{(k_m-1)})$, and PCHIP interpolation, the constant
$L_{\exp}(\delta)$ in \eqref{eq:CFL-exp-cond} can be replaced by the explicit computable bound
\[
L_{\exp}^\Delta(\delta)
:= \max_{0\le i\le I}
\lambda_i\Bigg(
\sum_{m=1}^M \frac{\pi_m(\lambda_i)}{\Gamma(k_m)}
  \sum_{q=1}^Q w_{m,q}^{(k_m-1)}\,e^{\delta x_{m,q}}
\;+\;1\Bigg),
\]
so that a sufficient CFL condition is
\[
\Delta t \, L_{\exp}^\Delta(\delta) \;<\; 1
\]
\end{remark}

\begin{theorem}[One–step $L^\infty$–stability of the IMEX scheme]\label{thm:IMEX-stability}
Let $F^{n}$ and $G^{n}$ be two IMEX solutions of \eqref{eq:IMEX_step} corresponding to two
terminal data $F^{N},G^{N}$ at $t_N=T$. Suppose that the the explicit CFL-type \eqref{eq:CFL-exp-cond} holds. Then, for every
$n=0,\dots,N-1$,
\begin{equation}\label{eq:IMEX-stability}
\|F^{n} - G^{n}\|_{\infty,\Lambda}
\;\le\;
\frac{1+\Delta t\,L_{\exp}(\delta)}{1+\Delta t\,r}\;
\|F^{n+1} - G^{n+1}\|_{\infty,\Lambda}.
\end{equation}
In particular, iterating from $n=N-1$ down to $0$ gives the global stability bound
\[
\|F^{0} - G^{0}\|_{\infty,\Lambda}
\;\le\; \exp\!\big(L_{\exp}(\delta)\,T\big)\,\|F^{N} - G^{N}\|_{\infty,\Lambda}.
\]
\end{theorem}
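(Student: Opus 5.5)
The plan is to subtract the two IMEX steps and use linearity of the scheme. Set $E^n:=F^n-G^n$. Since \eqref{eq:linear_system} is affine in the unknowns, we have $A E^n = E^{n+1}+\Delta t\,(N_{\mathrm{exp}}^\Delta[F^{n+1}]-N_{\mathrm{exp}}^\Delta[G^{n+1}])$, where $A=I-\Delta t\,L_{\mathrm{imp}}^\Delta$. The nonlocal operator is itself linear in its argument. This holds because PCHIP is applied componentwise and is not linear in general, so here I would either restrict to the linear interpolant or rely only on the Lipschitz bound of Proposition~\ref{prop:CFL-exp}, which is all that is needed. In either case the right-hand side is controlled without using linearity.

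The one-step bound then follows from three ingredients. First, invert $A$ using Proposition~\ref{prop:M-matrix}, which gives $\|E^n\|_\infty\le (1+\Delta t\,r)^{-1}\|E^{n+1}+\Delta t(N_{\mathrm{exp}}^\Delta[F^{n+1}]-N_{\mathrm{exp}}^\Delta[G^{n+1}])\|_\infty$. Second, apply the triangle inequality. Third, apply \eqref{eq:CFL-explicit} to get the factor $1+\Delta t\,L_{\exp}(\delta)$. Together these yield \eqref{eq:IMEX-stability}.

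Two caveats need care. The first concerns the boundary rows $i=0$ and $i=I$. The Neumann ghost point sets $(D^-F)_0=0$, so row $0$ has $A_{0,0}=1+\Delta t(\mu_0^+/\Delta\lambda+r)$ and $A_{0,1}=-\Delta t\mu_0^+/\Delta\lambda$, and its margin is still $1+\Delta t r$. For the extrapolated last row, I would argue that it retains the same Z-matrix, diagonally-dominant structure; if needed, I would treat it as upwind with $\mu_I\le0$, which is typical for OU drift at $\lambda_{\max}$. The second caveat is that the discrete Lipschitz constant must account for the clamped or extrapolated off-grid values. Clamping is non-expansive, as noted in Section~\ref{sec:notations}. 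Linear extrapolation is not, and this is where I expect the main obstacle: I would restrict the statement to the clamped convention, or absorb the extrapolation factor into $L_{\exp}$.

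For the global bound, iterate from $n=N-1$ down to $0$ to obtain the factor $\big((1+\Delta t L_{\exp})/(1+\Delta t r)\big)^N$. Since $r\ge0$, the denominator is at least $1$. Using $1+x\le e^x$, this gives $(1+\Delta t L_{\exp})^N\le e^{N\Delta t L_{\exp}}=e^{TL_{\exp}(\delta)}$. Note that the CFL condition \eqref{eq:CFL-exp-cond} is not actually needed for this $\ell^\infty$ estimate. It matters only for positivity of the coefficient $1-\Delta t\lambda_i$ in \eqref{eq:coeff_RHS}, which the sharper argument would use to replace $L_{\exp}$ by $\lambda_{\max}(C_\delta^*-1)$. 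I would mention this but prove the stated, cruder bound.
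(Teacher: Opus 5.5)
Your proposal is correct and is essentially the paper's proof: you subtract the two steps, invert $A$ via the Varah bound $\|A^{-1}\|_\infty\le(1+\Delta t\,r)^{-1}$ of Proposition~\ref{prop:M-matrix}, apply the Lipschitz estimate \eqref{eq:CFL-explicit} (linearity of $N_{\mathrm{exp}}^\Delta$ is indeed not needed), and iterate with $(1+\Delta t\,L_{\exp}(\delta))^N\le e^{T L_{\exp}(\delta)}$. Your side remarks are accurate points the paper leaves implicit: the treatment of the boundary rows, the reliance on the clamped rather than linearly extrapolated extension, and the fact that the CFL condition \eqref{eq:CFL-exp-cond} is never actually used.
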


\begin{proof}
Subtract the two IMEX steps :
\[
A(F^{n} - G^{n})
= (F^{n+1} - G^{n+1}) + \Delta t\bigl(
N_{\mathrm{exp}}^\Delta[F^{n+1}] - N_{\mathrm{exp}}^\Delta[G^{n+1}]
\bigr).
\]
Multiplying by $A^{-1}$, taking the supremum norm $\|\cdot\|_{\infty,\Lambda}$, and using the matrix norm compatibility $\|Mx\| \le \|M\|\|x\|$, we get:
\begin{align*}
\|F^{n} - G^{n}\|_{\infty,\Lambda} &\le \|A^{-1}\|_{\infty} \Bigl( \|F^{n+1} - G^{n+1}\|_{\infty,\Lambda} + \Delta t \|N_{\mathrm{exp}}^\Delta[F^{n+1}] - N_{\mathrm{exp}}^\Delta[G^{n+1}]\|_{\infty,\Lambda} \Bigr) \\
&\le \|A^{-1}\|_{\infty} \Bigl( 1 + \Delta t\,L_{\exp}(\delta) \Bigr) \|F^{n+1} - G^{n+1}\|_{\infty,\Lambda},
\end{align*}
using Proposition~\ref{prop:CFL-exp}.
Finally Proposition~\ref{prop:M-matrix} gives $\|A^{-1}\|_\infty\le (1+\Delta t r)^{-1}$,
which yields \eqref{eq:IMEX-stability}.
Iterating gives
\[
\|F^{0} - G^{0}\|_{\infty,\Lambda}
\le \prod_{n=0}^{N-1}\frac{1+\Delta t\,L_{\exp}(\delta)}{1+\Delta t\,r}\,
\|F^{N} - G^{N}\|_{\infty,\Lambda}
\le \exp\!\big(L_{\exp}(\delta)\,T\big)\,\|F^{N} - G^{N}\|_{\infty,\Lambda},
\]
since $(1+\Delta t\,L_{\exp})^N\le e^{L_{\exp}T}$ and $(1+\Delta t r)^{-N}\le 1$ for $r\ge0$.
\end{proof}

\begin{assumption}[Regularity and a-priori bounds for the modal PIDE]\label{ass:regularity}
Fix $\eta=\delta+iy$ with $\delta\in(0,s_0)$. On the truncated domain $\Lambda$,
assume that the modal PIDE \eqref{eq:PIDE_F_eta} admits a classical solution
$F(\cdot,\cdot,\eta)\in C^{2,2}([0,T]\times\Lambda)$.
Moreover, there exist constants $C,c>0$ (independent of $t$) such that for all $t\in[0,T]$,
\[
\|F(t,\cdot)\|_{\infty,\Lambda}
+\|\partial_t F(t,\cdot)\|_{\infty,\Lambda}
+\|\partial_\lambda F(t,\cdot)\|_{\infty,\Lambda}
+\|\partial_{tt}F(t,\cdot)\|_{\infty,\Lambda}
+\|\partial_{\lambda\lambda}F(t,\cdot)\|_{\infty,\Lambda}
\le C\,e^{c(T-t)}.
\]
Sufficient conditions ensuring these bounds (in terms of exponential moments of $\tilde\nu$ and weighted-TV regularity of $\lambda\mapsto \tilde\nu(\lambda,\cdot)$) are given in Appendix~\ref{app:regularity}.
\end{assumption}

\begin{theorem}[\bf IMEX consistency with PCHIP, Gauss--Laguerre and estimated measure]
\label{thm:consistency}
Fix $\eta=\delta+i y$ with $\delta\in(0,s_0)$ and consider the modal PIDE (\ref{eq:PIDE_F_eta}).

\noindent On the backward time grid $t_n=T-n\Delta t$ and spatial grid $\lambda_i=\lambda_{\min}+i h$ (where $h \equiv \Delta\lambda$),
the IMEX step reads
\begin{equation}\label{eq:scheme}
\frac{F_i^{n+1}-F_i^{n}}{\Delta t}
+\mu_i\,D_h^{\rm up}F^n(\lambda_i)
- r\,F_i^n
+\lambda_i\Big(\widehat Q_i^{\,n+1}-F_i^{n+1}\Big)
\;=\;0.
\end{equation}
where $D_h^{\rm up}$ is the \emph{implicit} upwind for $\mu\partial_\lambda$ and the explicit gain is evaluated by a $Q$-point
generalised Gauss--Laguerre rule under the \emph{estimated} measure $\tilde\nu(\lambda_i,dx)$ (Gamma mixture after Esscher),
with PCHIP interpolation $\Pi_{\Delta\lambda}$ and the calibrated parameter in GL $\widehat k_m, \widehat b_m, \widehat\pi_m$ and the weights/nodes $\widehat w_{m,q}, \widehat y_{m,q}$ :
\[
\widehat Q_i^{\,n+1}
=\sum_{m=1}^M \frac{\widehat \pi_m(\lambda_i)}{\Gamma(\widehat k_m)}
\sum_{q=1}^Q \widehat w_{m,q}^{(\widehat k_m-1)}\,
e^{\eta \widehat x_{m,q}}\,
\Pi_{\Delta\lambda}[F^{\,n+1}]\!\big(\lambda_i+\beta \widehat x_{m,q}\big),
\qquad \widehat x_{m,q}=\frac{\widehat z_{m,q}^{(\widehat k_m-1)}}{\widehat b_m}.
\]

\emph{Assumptions :}
(H1) $\exists s_0>0$ such that
$\int_0^\infty e^{s x}\,\tilde\nu(\lambda,dx)<\infty$ for $s\in[0,s_0)$ and all $\lambda$;\\
(H2) $\mu\in C^2$ on $[\lambda_{\min},\lambda_{\max}]$, with at-most-linear growth and bounded derivatives;\\
(H3) $\nu$ is a $\lambda$-measurable Gamma mixture (component densities and weights locally bounded);
(H4) truncated domain $\lambda\in \Lambda$;\\
(H5) $Q$-point GL quadrature; (H6) explicit CFL for the gain (see Remark~\ref{rem:CFL-exp}). 
\medskip

\noindent Define the local truncation residual by injecting the \emph{exact} solution $F$ 
into \eqref{eq:scheme}:
\[
\tau_i^n:=
\frac{F(t_{n+1},\lambda_i)-F(t_n,\lambda_i)}{\Delta t}
+\mu_i\,D_h^{\rm up}\!\big[F(t_n,\cdot)\big](\lambda_i)
- r\,F(t_n,\lambda_i)
+\lambda_i\Big(\widehat Q_i^{\,n+1}\big|_{F\ \text{injected}}-F(t_{n+1},\lambda_i)\Big).
\]
Then there exists a constant $C<\infty$ independent of $\Delta t,h,Q,i,n$ such that
\begin{equation}\label{eq:tau-bound}
|\tau_i^n|
\;\le\;
C\Big(
\underbrace{\Delta t}_{\text{backward Euler}}
+\underbrace{h}_{\text{upwind drift}}
+\underbrace{h^2}_{\text{PCHIP interp.}}
+\underbrace{\varepsilon_{\rm GL}(Q)}_{\text{GL quadrature}}
+\underbrace{\varepsilon_{\rm EM}(i)}_{\text{measure}}
+\underbrace{\varepsilon_{\rm bord}(i)}_{\text{boundary extrap.}}
\Big).
\end{equation}
In particular, if $\Delta t\to0$, $h\to0$, $Q\to\infty$, the estimated measure converges to $\tilde\nu$
(or is kept fixed sufficiently close), and $\varepsilon_{\rm bord}\to0$ (large enough domain),
then $\max_{i,n}|\tau_i^n|\to0$: the scheme \eqref{eq:scheme} is \emph{consistent}.
\end{theorem}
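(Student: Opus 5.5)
The plan is to subtract the exact PIDE \eqref{eq:PIDE_F_eta}, evaluated at $(t_n,\lambda_i)$, from the residual $\tau_i^n$. This splits $\tau_i^n$ into independent pieces, and each piece is bounded with the regularity in Assumption~\ref{ass:regularity}. Concretely, I write
\[
\tau_i^n=T_1+T_2+T_3,
\]
where the three pieces are defined as follows.

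\emph{Time piece.} $T_1$ is the backward-difference quotient minus $\partial_tF(t_n,\lambda_i)$.

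\emph{Drift piece.} $T_2$ is $\mu_i D_h^{\rm up}F(t_n,\cdot)(\lambda_i)-\mu_i\partial_\lambda F(t_n,\lambda_i)$.

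\emph{Jump piece.} $T_3$ is $\lambda_i\big(\widehat Q_i^{n+1}-F(t_{n+1},\lambda_i)\big)-N_{\exp}[F(t_n,\cdot)](\lambda_i)$.

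For $T_1$, Taylor's formula with the bound on $\partial_{tt}F$ gives $|T_1|\le \tfrac12\Delta t\,Ce^{cT}$. For $T_2$, a Taylor expansion in $\lambda$ on the upwind side gives $|T_2|\le \tfrac12 h\,\|\mu\|_{\infty,\Lambda}\,Ce^{cT}$. Here (H2) and (H4) ensure $\mu$ is bounded on $\Lambda$. At the node $i=0$ the Neumann ghost point changes the stencil, and I fold that discrepancy into $\varepsilon_{\rm bord}$.

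The jump piece $T_3$ is the main work. I telescope it through a chain of intermediate quantities, each differing from the next by one approximation. First, $N_{\exp}$ applied to $F(t_n)$ is replaced by $N_{\exp}$ applied to $F(t_{n+1})$. By Lemma~\ref{lem:Q-Lipschitz} together with the bound on $\partial_tF$, this costs at most $\lambda_{\max}(C^*_\delta+1)\,\Delta t\,Ce^{cT}$. Second, the exact kernel $\tilde\nu$ is replaced by the estimated Gamma mixture $\widehat\nu$ while keeping the exact integral. Since $|e^{\eta x}F(\cdot,\lambda_i+\beta x)|\le \|F\|_\infty e^{\delta x}$, this difference is bounded by
\[
\lambda_{\max}\|F\|_\infty\,\|\tilde\nu(\lambda_i,\cdot)-\widehat\nu(\lambda_i,\cdot)\|_{TV,\delta}=:C\,\varepsilon_{\rm EM}(i).
\]
Third, $F(t_{n+1})$ is replaced by its clamped PCHIP interpolant $\Pi_{\Delta\lambda}F(t_{n+1})$. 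Corollary~\ref{cor:PCHIP_jump_error} with Lemma~\ref{lem:PCHIP_error} (constant from $\|\partial_{\lambda\lambda}F\|$) gives $O(h^2)$ inside $\Lambda$. The contribution of shifted arguments $\lambda_i+\beta x>\lambda_{\max}$, where clamping or extrapolation is used instead of the true $F$, is defined to be $\varepsilon_{\rm bord}(i)$; this is the weighted $\widehat\nu$-mass beyond $(\lambda_{\max}-\lambda_i)/\beta$ times $2\|F\|_\infty$. Fourth, for each mixture component the exact integral against $\Gamma(\widehat k_m,\widehat b_m)$ is replaced by the $Q$-point generalised Gauss--Laguerre sum. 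This is exactly $\widehat Q_i^{n+1}$, and the error is $\varepsilon_{\rm GL}(Q)$ by definition, summed with the weights $\widehat\pi_m\le1$.

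The main obstacle is making $\varepsilon_{\rm GL}(Q)$ meaningful, because the integrand $z\mapsto e^{\eta z/b_m}\Pi_{\Delta\lambda}F(\lambda_i+\beta z/b_m)$ is only $C^1$ after PCHIP and clamping. As a result, classical derivative-based Gauss--Laguerre bounds do not apply uniformly in $h$. I would first try the following route. Positivity of the weights, together with $\sum_q w_{m,q}e^{\delta z_{m,q}/b_m}\to\int z^{k_m-1}e^{-(1-\delta/b_m)z}dz$, gives uniform boundedness of the rules in the $e^{\delta x}$-weighted sup norm. Combined with density of polynomials times $e^{\eta z/b_m}$ (Weierstrass-type approximation on the Laguerre weighted space), this yields $\varepsilon_{\rm GL}(Q)\to0$ uniformly over the family of uniformly Lipschitz, bounded integrands. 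The needed Lipschitz bound comes from $\|\partial_\lambda F\|$ and the monotonicity of PCHIP.

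Collecting the bounds, every constant depends only on $Ce^{cT}$, $\lambda_{\max}$, $C^*_\delta(\Lambda)$ and $\|\mu\|_{\infty,\Lambda}$, and not on $\Delta t,h,Q,i,n$. This gives \eqref{eq:tau-bound}. The consistency statement then follows by letting each term tend to zero.
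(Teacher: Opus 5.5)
Your proof is correct and uses the same term-by-term telescoping as the paper. Your $T_1$, $T_2$ and the time-shift step inside $T_3$ are the paper's terms (i)--(iii), with the same Taylor and mean-value bounds, and your kernel swap is its $\varepsilon_{\rm EM}$ term (vi).

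The genuine difference is the order of the last two swaps. The paper first applies Gauss--Laguerre to the exact, smooth $F(t_{n+1},\cdot)$ (term (iv)). Only then does it replace $F$ by $\Pi_{\Delta\lambda}F$ \emph{inside the discrete sum} (term (v)). Positivity of the weights then transmits the $O(h^2)$ interpolation error with the factor $\lambda_i C_Q(i)\le L^\Delta_{\exp}(\delta)$, so the quadrature never acts on the $h$-dependent $C^1$ interpolant. Your order creates exactly the obstacle you identify.

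Your compactness repair works, but not in the $e^{\delta x}$-weighted norm you name. In that norm the family $\{e^{\eta z/\widehat b_m}g\}$, with $g$ bounded and Lipschitz, has no uniform tail decay. It is therefore neither compact nor inside the closure of the polynomials. Use a weight $e^{-\gamma z}$ with $\gamma\in(\delta/\widehat b_m,1)$ instead, which gives three things:
\begin{itemize}
\item the rules are uniformly bounded, since $\sum_q w_q e^{\gamma z_q}\le \Gamma(\widehat k_m)(1-\gamma)^{-\widehat k_m}$ for every $Q$ (Gauss rules underestimate integrands with positive even derivatives);
\item polynomials are dense in that weighted space;
\item the family is totally bounded there, with tails at most $B e^{-(\gamma-\delta/\widehat b_m)z}$ and Arzel\`a--Ascoli on compacts.
\end{itemize}
Together these give $\varepsilon_{\rm GL}(Q)\to 0$ uniformly in $h,i,n$.

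As for what each route buys: yours yields only a rate-free $\varepsilon_{\rm GL}(Q)$, but it needs nothing beyond the sup and Lipschitz bounds of Assumption~\ref{ass:regularity}. The paper's explicit remainder $R_Q(\alpha)\sup|g_{m,i}^{(2Q)}|$ looks sharper, but it tacitly requires $2Q$ derivatives of $F$, which the stated assumptions do not supply. It also picks up a factor of order $|\eta/\widehat b_m|^{2Q}$ from the exponential. Running your compactness argument in the paper's order would be cleanest: there the family $F(t,\lambda_i+\beta\,\cdot)$ is uniformly Lipschitz via the bound on $\partial_\lambda F$, and no uniformity in $h$ is needed.

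Finally, your $\varepsilon_{\rm bord}$ differs from the paper's (vii). Yours is the tail $e^{\delta x}$-mass of $\widehat\nu$ beyond $(\lambda_{\max}-\lambda_i)/\beta$ for the clamped extension, which vanishes as $\lambda_{\max}\to\infty$. The paper's bounds the short linear-extrapolation error by $C\,d_{\max}h+\tfrac12 d_{\max}^2\|F''\|$ via Appendix~\ref{app:PCHIP-boundary}. Since the statement leaves $\varepsilon_{\rm bord}$ undefined, both readings are admissible. Both also rely on the truncated-domain convention (H4) to control $F$ beyond $\lambda_{\max}$.
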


\begin{proof}[Proof (term-by-term decomposition and detailed bounds)]
Let
\(
Q_i^\star(t):=\int_0^\infty e^{\eta x}F\big(t,\lambda_i+\beta x\big)\,\tilde\nu(\lambda_i,dx)
\)
be the exact non-local gain under the underlying measure $\tilde\nu$.
Add/subtract appropriate terms to write
\begin{equation}\label{eq:tau-split}
\begin{aligned}
\tau_i^n
&=\underbrace{\Big(\frac{F(t_{n+1},\lambda_i)-F(t_n,\lambda_i)}{\Delta t}-\partial_t F(t_n,\lambda_i)\Big)}_{(i)\ \text{time discretization}}\
+\ \underbrace{\mu_i\Big(D_h^{\rm up}F(t_n,\lambda_i)-\partial_\lambda F(t_n,\lambda_i)\Big)}_{(ii)\ \text{space (implicit upwind)}}\\
&\quad+\underbrace{\lambda_i\Big( \big[Q_i^\star(t_{n+1};\tilde\nu)-F(t_{n+1},\lambda_i)\big] - \big[Q_i^\star(t_n;\tilde\nu)-F(t_n,\lambda_i)\big] \Big)}_{(iii)\ \text{explicit jump term shift}}\\
&\quad+\ \underbrace{\lambda_i\Big(Q_i^{\rm GL}(t_{n+1};\widehat\nu)-Q_i^\star(t_{n+1};\widehat\nu)\Big)}_{(iv)\ \text{GL quadrature}}\
+\underbrace{\lambda_i\Big(Q_i^{\rm GL,PCHIP}(t_{n+1};\widehat\nu)-Q_i^{\rm GL}(t_{n+1};\widehat\nu)\Big)}_{(v)\ \text{PCHIP interpolation}}\\
&\quad+\underbrace{\lambda_i\Big(Q_i^\star(t_{n+1};\widehat\nu)-Q_i^\star(t_{n+1};\tilde\nu)\Big)}_{(vi)\ \text{measure approximation}}\
+\ \underbrace{\lambda_i\,\mathcal E_{\rm bord}(i)}_{(vii)\ \text{boundary extrapolation}}.
\end{aligned}
\end{equation}

\smallskip
\noindent\textbf{(i) Time term.}
By Taylor with Cauchy reminder (see Assumption~\ref{ass:regularity} for the bounds of F),
\smallskip
$F(t_{n+1},\lambda_i)=F(t_n,\lambda_i)+\Delta t\,\partial_t F(t_n,\lambda_i)
+\tfrac12\Delta t^2\,\partial_{tt}F(t_n+\theta\Delta t,\lambda_i)$
for some $\theta\in(0,1)$, hence
\[
\big|(i)\big| \le \tfrac12\,\Delta t\,\|\partial_{tt}F\|_{\infty,\Lambda} \;=\; O(\Delta t).
\]

\smallskip
\noindent\textbf{(ii) Space (implicit upwind).}
Write the one–sided differences at $\lambda_i$:
\[
\frac{F(\lambda_{i+1})-F(\lambda_i)}{h}=\partial_\lambda F(\lambda_i)
+\tfrac{h}{2}\,\partial_{\lambda\lambda}F(\lambda_i+\theta_i^+ h),
\quad
\frac{F(\lambda_i)-F(\lambda_{i-1})}{h}=\partial_\lambda F(\lambda_i)
-\tfrac{h}{2}\,\partial_{\lambda\lambda}F(\lambda_i-\theta_i^- h),
\]
with $\theta_i^\pm\in(0,1)$.
For an upwind drift $\mu(\lambda_i)$,
\[
(D_h^{\rm up}F)(\lambda_i)=
\begin{cases}
\dfrac{F(\lambda_i)-F(\lambda_{i-1})}{h}, & \mu_i>0,\\[4pt]
\dfrac{F(\lambda_{i+1})-F(\lambda_i)}{h}, & \mu_i<0,
\end{cases}
\]
so in both cases
\(
\big|(D_h^{\rm up}F)-\partial_\lambda F\big|\le \frac{h}{2}\,\|\partial_{\lambda\lambda}F\|_{\infty,\Lambda}
\).
Therefore, by Assumption \ref{ass:regularity}
\[
\big|(ii)\big|\le |\mu_i|\,\frac{h}{2}\,\|\partial_{\lambda\lambda}F\|_{\infty,\Lambda}
\;=\; O(h).
\]

\smallskip
\noindent\textbf{(iii) Explicit time shift in the jump operator.}
Let $V(t) := Q_i^\star(t;\tilde\nu) - F(t,\lambda_i)$. By (H1), we may differentiate under the integral using dominated convergence (with bounding envelope $e^{\delta x}\|\partial_t F\|_\infty$). Thus:
\[
V'(t) = \int_0^\infty e^{\eta x} \partial_t F(t,\lambda_i+\beta x) \,\tilde\nu(\lambda_i,dx) - \partial_t F(t,\lambda_i).
\]
Taking the absolute value and bounding $|e^{\eta x}|$ by $e^{\delta x}$, we obtain $|V'(t)| \le \|\partial_t F(t,\cdot)\|_{\infty,\Lambda} \big(C_\delta^*(\Lambda) + 1\big)$. By the Mean Value Theorem,
\[
\big|(iii)\big| = \lambda_i \big| V(t_{n+1}) - V(t_n) \big| \le \lambda_{\max} \Delta t \sup_{s\in[t_n,t_{n+1}]} |V'(s)| \le \Delta t \lambda_{\max} \big(C_\delta^*(\Lambda) + 1\big) \sup_{s\in[t_n,t_{n+1}]}\|\partial_t F(s,\cdot)\|_{\infty,\Lambda}.
\]
Hence $\big|(iii)\big| = \mathcal{O}(\Delta t)$.

\smallskip
\noindent\textbf{(iv) Gauss--Laguerre quadrature error.}
Fix $m$ and set $\alpha=\widehat k_m-1>-1$. With the change of variables $x=y/\widehat b_m$,
write the $m$-component gain as
\[
I_{m,i}:=\frac{1}{\Gamma(\widehat k_m)}\int_0^\infty y^{\alpha}e^{-y}\,g_{m,i}(y)\,dy,
\qquad
g_{m,i}(y):=e^{\eta y/\widehat b_m}F\!\left(t_{n+1},\lambda_i+\beta\tfrac{y}{\widehat b_m}\right).
\]
Let $Y_{\max}=Y_{\max}(i,m)$ be such that $\lambda_i+\beta y/\widehat b_m\in \Lambda$ for all $y\in[0,Y_{\max}]$,
and split $I_{m,i}=I_{m,i}^{\rm in}+I_{m,i}^{\rm out}$ with
\[
I_{m,i}^{\rm in}:=\frac{1}{\Gamma(\widehat k_m)}\int_0^{Y_{\max}} y^{\alpha}e^{-y}\,g_{m,i}(y)\,dy,\qquad
I_{m,i}^{\rm out}:=\frac{1}{\Gamma(\widehat k_m)}\int_{Y_{\max}}^\infty y^{\alpha}e^{-y}\,g_{m,i}(y)\,dy.
\]
On $[0,Y_{\max}]$, the $Q$-point Gauss--Laguerre remainder yields
\[
\varepsilon_{\rm GL}(Q)\;\le\; \frac{R_Q(\alpha)}{\Gamma(\widehat k_m)}
\sup_{y\in[0,Y_{\max}]}\big|g_{m,i}^{(2Q)}(y)\big|,
\qquad
R_Q(\alpha)=\frac{\Gamma(Q+\alpha+1)\,Q!}{(2Q)!}=\mathcal O\!\big(4^{-Q}Q^{\alpha+1/2}\big),
\]
see \cite[Ch.~2 and Ch.~3.6]{DavisRabinowitz1984} or \cite[Ch.~3]{Gautschi2004}.

\smallskip
\noindent\textbf{(v) PCHIP interpolation inside the GL rule}
By linearity/ positivity of the weights,
\[
\big|Q_i^{\rm GL}-Q_i^{\rm GL,PCHIP}\big|
\le \sum_{m,q}\frac{\widehat \pi_m(\lambda_i)}{\Gamma(\widehat k_m)}\,
\widehat w_{m,q}^{(\widehat k_m-1)}\,e^{\delta \widehat x_{m,q}}\,
\|F(t_{n+1},\cdot)-\widetilde F^{\,n+1}\|_{\infty,\Lambda}
=: C_Q(i)\,\|F-\widetilde F\|_{\infty,\Lambda}.
\]
with $C_Q(i):=\sum_{m,q}\frac{\widehat \pi_m(\lambda_i)}{\Gamma(\widehat k_m)}\,
\widehat w_{m,q}^{(\widehat k_m-1)}\,e^{\delta \widehat x_{m,q}}$.
Using Lemma~\ref{lem:PCHIP_error}, $\|F-\widetilde F\|_{\infty,\Lambda}\le C_{\rm PCHIP}h^2$, and
noting that $\lambda_i\,C_Q(i)\le L_{\exp}^\Delta(\delta)$ (cf.\ Remark~\ref{rem:CFL-exp}), we get
$\big|(v)\big|=\lambda_i\big|Q_i^{\rm GL}-Q_i^{\rm GL,PCHIP}\big|
\le L_{\exp}^\Delta(\delta)\,C_{\rm PCHIP}\,h^2=O(h^2)$.

\smallskip
\noindent\textbf{(vi) Estimated measure (EM/Esscher)}
By linearity of the GL rule and $|e^{\eta x}|\le e^{\delta x}$,
\[
\big|Q_i^\star(t_{n+1};\tilde\nu)-Q_i^\star(t_{n+1};\widehat\nu)\big|
\le \| F^{\,n+1}\|_{\infty,\Lambda} \, \|\tilde\nu(\lambda_i,\cdot)-\widehat\nu(\lambda_i,\cdot)\|_{TV,\delta}\
=:\varepsilon_{\rm EM}(i).
\]
For Gamma mixtures under Esscher tilting (parameter $\theta$ so $b\mapsto b-\theta$; for the existence of the MGF $\delta<\min_m (b_m-\theta)$ on the compact set), Appendix~\ref{app:MoE}
proves the explicit bound
\[
\|\tilde\nu(\lambda_i)-\widehat\nu_{\theta}(\lambda_i)\|_{TV,\delta}\ \le\
\sum_{m=1}^M \big|\pi_m(\lambda)-\widehat\pi_m(\lambda)\big|\,M_{k_m,b_m}(\delta)
+\sum_{m=1}^M \widehat\pi_m(\lambda)\big(C_k|k_m-\widehat k_m|+C_b|b_m-\widehat b_m|\big),
\]
with $M_{k,b}(\delta)=(\tfrac{b}{b-\delta})^k$ and closed-form constants $C_k,C_b$ (Appendix~\ref{app:MoE}).
Thus $\varepsilon_{\rm EM}(i)$ is controlled by the softmax-weight error and the $(k,b)$ parameter error,
both vanishing under standard EM/MLE consistency on compact sets. See \cite{vanderVaart1998,White1982,NeweyMcFadden1994} for the asymptotics statistics and M-Estimation, \cite{McLachlanPeel2000} for the finite mixture model and the EM, and \cite{JacobsJordanNowlanHinton1991} for the MoE.

\smallskip
\noindent\textbf{(vii) Boundary/extrapolation error.}
If a GL node maps to $\lambda'_{i,m,q}=\lambda_i+\beta \widehat x_{m,q}\notin[\lambda_{\min},\lambda_{\max}]$,
we use a short linear extrapolation based on the PCHIP boundary slope $m_b$ (Fritsch--Carlson \cite{FritschCarlson1980} boundary formula).
Let $d>0$ be the signed distance from $\lambda_b\in\{\lambda_{\min},\lambda_{\max}\}$ to $\lambda'_{i,m,q}$.
By Taylor at the boundary,
\[
F(\lambda_b\pm d)
=F(\lambda_b)\pm d\,F'(\lambda_b) + \tfrac12 d^2\,F''(\xi_{b,d}),
\qquad \xi_{b,d}\in(\lambda_b,\,\lambda_b\pm d).
\]
The extrapolate is $E[\widetilde F](\lambda_b\pm d)=\widetilde F(\lambda_b)\pm d\,m_b$,
and since $\widetilde F(\lambda_b)=F(\lambda_b)$ (interpolation property),
\[
|E[\widetilde F](\lambda_b\pm d)-F(\lambda_b\pm d)|
\le d\,|m_b-F'(\lambda_b)| + \tfrac12 d^2\,\|F''\|_{\infty,\Lambda}.
\]
Appendix~\ref{app:PCHIP-boundary} proves $|m_b-F'(\lambda_b)|\le C h$ (both monotone and non-monotone cases).
Hence
\(
|\mathcal E_{\rm bord}(i)|\le C_Q(i)\big(C\,d_{\max}(i)\,h+\tfrac12 d_{\max}(i)^2\|F''\|_{\infty,\Lambda}\big)
=: \varepsilon_{\rm bord}(i).
\)
Choosing a domain large enough to keep $d_{\max}$ small (or zero) makes $\varepsilon_{\rm bord}$ negligible. 

\smallskip
\noindent Collecting (i)–(vii) and using that $\lambda_i$ is bounded on the compact domain, we obtain \eqref{eq:tau-bound}.
\end{proof}
\smallskip

\begin{theorem}[Convergence of the IMEX scheme]\label{thm:convergence}
Fix $\eta=\delta+iy$ with $\delta\in(0,s_0)$ and let $F(\cdot,\cdot,\eta)$ be a classical solution of
\eqref{eq:PIDE_F_eta} on $[0,T]\times\Lambda$, with $F(T,\lambda,\eta)\equiv 1$.
Let $\{F^n\}_{n=0}^N$ be produced by the discrete IMEX step \eqref{eq:IMEX_step}
(with $A=I-\Delta t\,L^\Delta_{\rm imp}$ and explicit gain $N^\Delta_{\exp}$), and impose the terminal condition exactly: $F^N_i=1$.

Assume the explicit CFL conditions \eqref{eq:CFL-exp-cond}.
Let $\tau^n$ be the local truncation residual obtained by injecting the exact solution into the scheme, and assume the uniform bound
\[
\max_{0\le n\le N-1}\|\tau^n\|_{\infty,\Lambda}\ \le\ \bar\tau,
\]
with $\bar\tau = C\big(\Delta t+h+h^2+\varepsilon_{\rm GL}+\varepsilon_{\rm EM}+\varepsilon_{\rm bord}\big)$ as in Thm.~\ref{thm:consistency}.
Then there exists a constant $C_T<\infty$, depending only on $T$, $r$ and $L_{\exp}(\delta)$, such that
for all $n=0,\dots,N$,
\[
\|F^n - F(t_n,\cdot,\eta)\|_{\infty,\Lambda}
\ \le\ C_T\,\bar\tau.
\]
In particular, the scheme is convergent as $\Delta t\to0$, $h\to0$, $Q\to\infty$, $\varepsilon_{\rm EM}\to0$ and $\varepsilon_{\rm bord}\to0$.
\end{theorem}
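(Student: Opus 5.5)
The argument is the standard Lax-type combination of consistency (Theorem~\ref{thm:consistency}) and one-step stability (Theorem~\ref{thm:IMEX-stability}), applied to the error grid function. Set $e^n_i:=F^n_i-F(t_n,\lambda_i,\eta)$ and $\mathcal F^n:=(F(t_n,\lambda_i,\eta))_{i}$. Since the terminal condition is imposed exactly, $e^N\equiv 0$. By definition of $\tau^n$, the exact solution satisfies the perturbed scheme
\[
A\,\mathcal F^n=\mathcal F^{n+1}+\Delta t\,N_{\exp}^\Delta[\mathcal F^{n+1}]+\Delta t\,\tau^n ,
\]
with $A=I-\Delta t\,L^\Delta_{\rm imp}$ as in Proposition~\ref{prop:M-matrix}. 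The numerical solution satisfies \eqref{eq:linear_system} with no residual. We have to be careful with the sign convention in $\tau^n$: it is defined with the $F(t_{n+1})-F(t_n)$ ordering, so multiplying by $\Delta t$ reproduces exactly the $A$-form, up to a sign that is irrelevant in norm.

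Subtracting the two relations gives
\[
A\,e^n=e^{n+1}+\Delta t\big(N_{\exp}^\Delta[F^{n+1}]-N_{\exp}^\Delta[\mathcal F^{n+1}]\big)-\Delta t\,\tau^n .
\]
We then apply $A^{-1}$ and use the Varah bound \eqref{eq:Varah}, $\|A^{-1}\|_\infty\le(1+\Delta t\,r)^{-1}$. Together with the Lipschitz estimate \eqref{eq:CFL-explicit} of Proposition~\ref{prop:CFL-exp}, valid under the CFL condition \eqref{eq:CFL-exp-cond} with the clamping/extrapolation convention, this yields the one-step recursion
\[
\|e^n\|_{\infty,\Lambda}\le \rho\,\|e^{n+1}\|_{\infty,\Lambda}+\frac{\Delta t}{1+\Delta t\,r}\,\bar\tau,\qquad \rho:=\frac{1+\Delta t\,L_{\exp}(\delta)}{1+\Delta t\,r}.
\]
This is the inhomogeneous analogue of \eqref{eq:IMEX-stability}.

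Unrolling the recursion from $e^N=0$ (discrete Gronwall) gives
\[
\|e^n\|_{\infty,\Lambda}\le \Delta t\,\bar\tau\sum_{k=0}^{N-n-1}\rho^k\le (N-n)\Delta t\,\rho^{N-n}\bar\tau\le T\,e^{L_{\exp}(\delta)T}\,\bar\tau ,
\]
using $\rho^k\le(1+\Delta t L_{\exp})^k\le e^{L_{\exp}T}$ as in the proof of Theorem~\ref{thm:IMEX-stability}. If $L_{\exp}\neq r$ one can sharpen this to the geometric-sum form $(\rho^{N-n}-1)/(\rho-1)$, giving $C_T=(e^{(L_{\exp}-r)^+T}-1)/(L_{\exp}-r)^+$ or $T$ in the degenerate case. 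Either way $C_T$ depends only on $T$, $r$ and $L_{\exp}(\delta)$, as claimed. Convergence then follows from Theorem~\ref{thm:consistency}, since $\bar\tau\to0$ under the stated limits.

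The main obstacle is the step where the exact nodal vector $\mathcal F^{n+1}$ is fed into the discrete operator $N^\Delta_{\exp}$. The Lipschitz bound must be applied to the \emph{discrete} operator, with PCHIP interpolation and Gauss--Laguerre nodes, acting on two grid functions. Strictly, PCHIP is nonlinear and not $1$-Lipschitz in sup norm, because the Fritsch--Carlson limiter depends on the data. My first approach is to take the route the paper adopts in Proposition~\ref{prop:CFL-exp}: assume the interpolant is sup-norm non-expansive, which holds exactly for piecewise-linear interpolation. For PCHIP I would absorb a bounded Lipschitz constant $L_\Pi$ into $L_{\exp}(\delta)$. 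All interpolation, quadrature, measure and boundary discrepancies between $N^\Delta_{\exp}[\mathcal F^{n+1}]$ and the exact gain are already inside $\tau^n$, so nothing else needs to be controlled.
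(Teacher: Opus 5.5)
Your proposal is correct and follows the paper's proof essentially step for step. Both arguments form the error equation $Ae^n=e^{n+1}+\Delta t\,(N^\Delta_{\exp}[F^{n+1}]-N^\Delta_{\exp}[F(t_{n+1},\cdot,\eta)])\pm\Delta t\,\tau^n$, apply the Varah bound and the Lipschitz estimate of Proposition~\ref{prop:CFL-exp}, and unroll from $e^N=0$ to get $C_T=T\,e^{L_{\exp}(\delta)T}$. Your caveat that PCHIP is not sup-norm non-expansive, so its Lipschitz constant must be absorbed into $L_{\exp}(\delta)$, is a fair point that the paper passes over.

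There is one small slip: the intermediate bound $\sum_{k=0}^{N-n-1}\rho^k\le (N-n)\rho^{N-n}$ fails when $\rho<1$ (i.e.\ $L_{\exp}(\delta)<r$). Write $\max(1,\rho)^{N-n}$ instead; this leaves your final constant unchanged.
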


\begin{proof}
Let $e^n := F^n - F(t_n,\cdot,\eta)$. Subtracting the injected exact-solution identity from the numerical step yields
\[
A e^n
=
e^{n+1}+\Delta t\Big(N^\Delta_{\exp}[F^{n+1}] - N^\Delta_{\exp}[F(t_{n+1},\cdot,\eta)]\Big)
+\Delta t\,\tau^n.
\]
Multiplying by $A^{-1}$, taking the supremum norm $\|\cdot\|_{\infty,\Lambda}$, and using the Lipschitz bound of the explicit operator (Proposition~\ref{prop:CFL-exp}) gives:
\[
\|e^n\|_{\infty,\Lambda} \le \|A^{-1}\|_{\infty} \Big( \big(1 + \Delta t\,L_{\exp}(\delta)\big) \|e^{n+1}\|_{\infty,\Lambda} + \Delta t\,\|\tau^n\|_{\infty,\Lambda} \Big).
\]
Applying the Varah bound $\|A^{-1}\|_\infty \le \frac{1}{1+\Delta t\,r} \le 1$ from Proposition~\ref{prop:M-matrix} yields the one-step error recurrence:
\[
\|e^n\|_{\infty,\Lambda}\ \le\ K\,\|e^{n+1}\|_{\infty,\Lambda} + \Delta t\,\|\tau^n\|_{\infty,\Lambda},
\qquad \text{where} \quad
K:=\frac{1+\Delta t\,L_{\exp}(\delta)}{1+\Delta t\,r}.
\]
Iterating this recurrence backward from $n=N-1$ down to $0$, and noting that the terminal error is exactly zero ($e^N \equiv 0$), we obtain:
\[
\|e^n\|_{\infty,\Lambda}\ \le\ \Delta t \sum_{k=n}^{N-1} K^{k-n}\|\tau^k\|_{\infty,\Lambda}
\ \le\ \Delta t\,\bar\tau \sum_{j=0}^{N-1} K^j.
\]
Finally, as shown in the proof of Theorem~\ref{thm:IMEX-stability}, $K \le \exp(\Delta t\,L_{\exp}(\delta))$. Therefore, for any $j \le N$, $K^j \le \exp(j\Delta t\,L_{\exp}(\delta)) \le \exp(T\,L_{\exp}(\delta))$.
Bounding each term in the sum by this maximum exponential growth gives:
\[
\|e^n\|_{\infty,\Lambda}\ \le\ \Delta t\,\bar\tau\,N \exp\big(T\,L_{\exp}(\delta)\big) \ =\ T \exp\big(T\,L_{\exp}(\delta)\big)\,\bar\tau \ =:\ C_T\,\bar\tau.
\]
This concludes the proof.
\end{proof}

\medskip

\section{Reconstruction of the solution and numerical results} \label{sec:reconstruction-error}

We reconstruct the option value from the modal solutions
$F(t,\lambda,\delta+i y)$ using the Bromwich/Fourier inversion formula of
Theorem~\ref{thm:inversion} (see also \eqref{eq:V_reconstruction}).
Numerically, the $y$--integral is truncated to a finite window and discretised on a symmetric grid \cite{CarrMadan1999,Lee2004,TrefethenWeideman2014}.

\subsection{Discrete reconstruction}\label{sec:discrete-reconstruction}

Fix $Y_{\max}>0$, $\Delta y>0$, $K:=Y_{\max}/\Delta y\in\mathbb N$, and define
$y_k:=k\Delta y$ for $-K\le k\le K$.
Given numerical approximations $F^{\mathrm{num}}(t_n,\lambda_i,\delta+i y_k)$ and
$\widehat f_\delta^{\mathrm{num}}(y_k)$, we set
\begin{equation}\label{eq:V-num}
V^{\mathrm{num}}(t_n,\lambda_i,u)
:=\frac{\Delta y\,e^{\delta u}}{2\pi}
\sum_{k=-K}^{K}
\widehat f_\delta^{\mathrm{num}}(y_k)\,F^{\mathrm{num}}(t_n,\lambda_i,\delta+i y_k)\,e^{i y_k u},
\qquad u\in[0,U_{\max}].
\end{equation}

Assume $f$ is real-valued and all coefficients and the kernel $\tilde\nu(\lambda,\cdot)$ are real.
Then $\widehat f_\delta(-y)=\overline{\widehat f_\delta(y)}$. Moreover, the complex conjugate
$\overline{F(t,\lambda,\delta+i y)}$ solves the same modal PIDE as $F(t,\lambda,\delta-i y)$ with the same
terminal condition; by uniqueness of the modal solution, $F(t,\lambda,\delta-i y)=\overline{F(t,\lambda,\delta+i y)}$.
Hence the exact inversion is real and \eqref{eq:V-num} can equivalently be written as
\[
V^{\mathrm{num}}
=\frac{\Delta y\,e^{\delta u}}{2\pi}\Big(
\widehat f_\delta^{\mathrm{num}}(0)\,F^{\mathrm{num}}(\delta)
+2\sum_{k=1}^{K}\Re\!\big\{\widehat f_\delta^{\mathrm{num}}(y_k)\,F^{\mathrm{num}}(\delta+i y_k)\,e^{i y_k u}\big\}\Big).
\]
Any residual imaginary part in $V^{\mathrm{num}}$ is therefore a discretisation/roundoff artefact
and provides a useful diagnostic.

\subsection{Global reconstruction error}\label{sec:global-reconstruction-error}

For fixed $(t_n,\lambda_i,u)$ define $g(y):=\widehat f_\delta(y)\,F(t_n,\lambda_i,\delta+i y)\,e^{i y u}$, and let
\[
V_Y(t_n,\lambda_i,u)
:=\frac{e^{\delta u}}{2\pi}\int_{-Y_{\max}}^{Y_{\max}} g(y)\,dy,
\]
\[
V_Y^{\mathrm{trap}}(t_n,\lambda_i,u)
:=\frac{\Delta y\,e^{\delta u}}{2\pi}
\left[
\frac12 g(y_{-K})
+\sum_{k=-K+1}^{K-1} g(y_k)
+\frac12 g(y_K)
\right],
\]
i.e.\ $V_Y$ is the truncated inversion and $V_Y^{\mathrm{trap}}$ is the trapezoidal rule applied to the
\emph{exact} integrand on $[-Y_{\max},Y_{\max}]$.

\begin{lemma}[Global reconstruction error]\label{lem:global-V-error}
Assume Assumption~\ref{ass:laplace} and the integrability hypotheses of Theorem~\ref{thm:inversion}.
Fix $Y_{\max}$ and $\Delta y$ as above. Then, for any $u\in[0,U_{\max}]$,
\begin{equation}\label{eq:global-V-error}
|V^{\mathrm{num}}(t_n,\lambda_i,u)-V(t_n,\lambda_i,u)|
\;\le\;
E_{\mathrm{tail}}(Y_{\max}) + E_{\mathrm{trap}}(\Delta y;Y_{\max})
+ C_{\mathrm{rec}}(Y_{\max})\big(E_{\mathrm{PIDE}}+E_{\widehat f}\big),
\end{equation}
where
\[
E_{\mathrm{PIDE}}:=\max_{|k|\le K}\big|F^{\mathrm{num}}(t_n,\lambda_i,\delta+i y_k)-F(t_n,\lambda_i,\delta+i y_k)\big|,
\qquad
E_{\widehat f}:=\max_{|k|\le K}\big|\widehat f_\delta^{\mathrm{num}}(y_k)-\widehat f_\delta(y_k)\big|,
\]
\[
E_{\mathrm{tail}}(Y_{\max})
:=\frac{e^{\delta u}}{2\pi}\int_{|y|>Y_{\max}}
\big|\widehat f_\delta(y)\,F(t_n,\lambda_i,\delta+i y)\big|\,dy
\;\xrightarrow[Y_{\max}\to\infty]{}\;0,
\]
and one may take $C_{\rm rec}(Y_{\max})
:=\frac{e^{\delta u}}{2\pi}(2Y_{\max}+\Delta y)\,\max\{M_F(Y_{\max}),M_{\widehat f}(Y_{\max})\}$, with 

\noindent $M_F(Y_{\max})
:=\sup_{\substack{t\in[0,T],\,\lambda\in\Lambda\\ |y|\le Y_{\max}}}
\big|F(t,\lambda,\delta+i y)\big|$ and $M_{\widehat f}(Y_{\max})
:=\sup_{|y|\le Y_{\max}}|\widehat f_\delta(y)|$.

\noindent Moreover, if $g\in C^2([-Y_{\max},Y_{\max}])$, the composite trapezoidal error satisfies ( \cite{DavisRabinowitz1984,PressTeukolskyVetterlingFlannery2007}) :
\begin{equation}\label{eq:trap-bound-C2}
E_{\mathrm{trap}}(\Delta y;Y_{\max})
:=|V_Y^{\mathrm{trap}}-V_Y|
\;\le\;\frac{e^{\delta u}}{2\pi}\cdot \frac{(2Y_{\max})}{12}\,(\Delta y)^2
\max_{|y|\le Y_{\max}}|g''(y)|.
\end{equation}
In particular, for fixed $Y_{\max}$, if $E_{\mathrm{PIDE}}\to0$, $E_{\widehat f}\to0$ and $\Delta y\to0$
then $V^{\mathrm{num}}\to V_Y$ uniformly on $u\in[0,U_{\max}]$; letting $Y_{\max}\to\infty$ then yields $V_Y\to V$.
\end{lemma}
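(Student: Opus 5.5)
The proof is a telescoping decomposition into four differences, each matched to one term on the right of \eqref{eq:global-V-error}. Fix $(t_n,\lambda_i,u)$ and write $V=V_\infty$, using the inversion formula \eqref{eq:V_reconstruction} of Theorem~\ref{thm:inversion} (absolutely convergent by the growth and integrability hypotheses there). Then split
\[
V^{\mathrm{num}}-V=(V^{\mathrm{num}}-V_Y^{\mathrm{trap}})+(V_Y^{\mathrm{trap}}-V_Y)+(V_Y-V).
\]
A small issue is that \eqref{eq:V-num} uses uniform weights $\Delta y$, while $V_Y^{\mathrm{trap}}$ uses half-weights at $y_{\pm K}$. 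I would either define the trapezoidal comparator with the same weights as \eqref{eq:V-num}, or absorb the two endpoint half-weights (each of size $\tfrac{\Delta y}{2}|g(y_{\pm K})|$) into the perturbation term. The factor $(2Y_{\max}+\Delta y)$ in $C_{\rm rec}$ suggests the authors simply bound by the total weight $(2K+1)\Delta y=2Y_{\max}+\Delta y$.

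\textbf{Tail term.} $|V_Y-V|=\frac{e^{\delta u}}{2\pi}\bigl|\int_{|y|>Y_{\max}}g\bigr|\le E_{\mathrm{tail}}(Y_{\max})$, using $|e^{iyu}|=1$. The tail vanishes as $Y_{\max}\to\infty$ by dominated convergence, since $|\widehat f_\delta F|\le C_K|\widehat f_\delta|(1+|y|)^p\in L^1$.

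\textbf{Quadrature term.} The bound \eqref{eq:trap-bound-C2} is the standard composite trapezoidal remainder $\frac{(b-a)}{12}\Delta y^2\max|g''|$ on $[-Y_{\max},Y_{\max}]$, applied separately to the real and imaginary parts of $g$ and multiplied by the prefactor $e^{\delta u}/(2\pi)$. Applying it componentwise could cost a factor $\sqrt2$. I would handle this by citing the complex version via the Peano kernel, which gives $|E|\le\frac{b-a}{12}h^2\sup|g''|$ directly because the kernel has a fixed sign.

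\textbf{Perturbation term.} This is the main step. For each node $k$, use the product perturbation identity
\[
\widehat f^{\mathrm{num}}F^{\mathrm{num}}-\widehat fF=(\widehat f^{\mathrm{num}}-\widehat f)F^{\mathrm{num}}+\widehat f(F^{\mathrm{num}}-F).
\]
The second piece is bounded by $M_{\widehat f}E_{\mathrm{PIDE}}$. The first is bounded by $E_{\widehat f}\,|F^{\mathrm{num}}|$, and $|F^{\mathrm{num}}|\le M_F+E_{\mathrm{PIDE}}$. This leaves a quadratic remainder $E_{\widehat f}E_{\mathrm{PIDE}}$, which I would either accept as higher order or remove by taking $M_F$ to dominate the numerical modes as well. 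The latter is available because the discrete stability of Theorem~\ref{thm:IMEX-stability} bounds $\|F^{\mathrm{num}}\|_\infty\le e^{L_{\exp}T}$, which mirrors Proposition~\ref{prop:mode-bounded}. Summing over the $2K+1$ nodes with weight $\Delta y e^{\delta u}/(2\pi)$ then gives the term $C_{\rm rec}(E_{\mathrm{PIDE}}+E_{\widehat f})$.

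\textbf{Final claim.} For fixed $Y_{\max}$, every term except the tail is uniform in $u\in[0,U_{\max}]$, because the factor $e^{\delta u}\le e^{\delta U_{\max}}$ and $g''$ involves at most $u^2\le U_{\max}^2$. Hence $V^{\mathrm{num}}\to V_Y$ uniformly in $u$ as $E_{\mathrm{PIDE}},E_{\widehat f},\Delta y\to0$. Letting $Y_{\max}\to\infty$ afterwards gives $V_Y\to V$ by the tail bound.
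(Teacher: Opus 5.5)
Your three-term split, the nodewise identity $\widehat f^{\mathrm{num}}F^{\mathrm{num}}-\widehat f F=(\widehat f^{\mathrm{num}}-\widehat f)F^{\mathrm{num}}+\widehat f(F^{\mathrm{num}}-F)$ and the count $(2K+1)\Delta y=2Y_{\max}+\Delta y$ are exactly the paper's proof. Two of your refinements are correct and repair loose spots in that proof.

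First, the Peano-kernel argument gives \eqref{eq:trap-bound-C2} for complex-valued $g$, where the mean-value form of the remainder is unavailable. On each cell the kernel is $-\tfrac12(y-y_k)(y_{k+1}-y)\le 0$, with total mass $2Y_{\max}(\Delta y)^2/12$.

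Second, you are right that $|F^{\mathrm{num}}|$ is not controlled by $M_F$ as defined; the paper just ``absorbs bounded factors''. Here, though, you are changing the statement and should say so. Either keep the extra term $\frac{e^{\delta u}}{2\pi}(2Y_{\max}+\Delta y)E_{\widehat f}E_{\mathrm{PIDE}}$ on the right-hand side, or replace $M_F$ in $C_{\mathrm{rec}}$ by $\max\{M_F(Y_{\max}),e^{L_{\exp}(\delta)T}\}$. The second option uses Theorem~\ref{thm:IMEX-stability} with $G\equiv 0$ under the CFL condition. That bound is uniform in $y$, but it need not lie below $M_F$.

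The genuine gap is your treatment of the endpoint weights. The mismatch you noticed is real, and the paper's proof overlooks it: its displayed remainder is written for $\Delta y\sum_{k=-K}^{K}g(y_k)$, which is not the trapezoidal rule. Neither of your remedies closes it, however.

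\begin{itemize}
\item With exact data ($E_{\mathrm{PIDE}}=E_{\widehat f}=0$) one has identically $V^{\mathrm{num}}-V_Y^{\mathrm{trap}}=\frac{e^{\delta u}\Delta y}{4\pi}\big(g(y_K)+g(y_{-K})\big)$. This is generally nonzero, so it cannot be absorbed into $C_{\mathrm{rec}}(Y_{\max})(E_{\mathrm{PIDE}}+E_{\widehat f})$.
\item If you instead compare with the uniform-weight sum, the same first-order endpoint term moves into the middle difference. Then \eqref{eq:trap-bound-C2} fails for that difference: it is $O(\Delta y)$, not $O((\Delta y)^2)$.
\end{itemize}

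So \eqref{eq:global-V-error} as stated does not follow from your argument. You can repair it in either of two ways.
\begin{itemize}
\item Give \eqref{eq:V-num} half weights at $k=\pm K$. Your proof then goes through verbatim, with total weight $2K\Delta y=2Y_{\max}\le 2Y_{\max}+\Delta y$.
\item Keep \eqref{eq:V-num} and add $\frac{e^{\delta u}\Delta y}{2\pi}M_{\widehat f}(Y_{\max})M_F(Y_{\max})$ to the right-hand side. This extra term is $O(\Delta y)$ uniformly for $u\in[0,U_{\max}]$, so your final convergence claim survives unchanged.
\end{itemize}
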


\begin{proof}
Decompose
\[
|V^{\mathrm{num}}-V|
\le |V^{\mathrm{num}}-V_Y^{\mathrm{trap}}| + |V_Y^{\mathrm{trap}}-V_Y| + |V_Y-V|.
\]
The last term is $E_{\mathrm{tail}}(Y_{\max})$ by definition.

\noindent The middle term is $E_{\mathrm{trap}}$. Since $g\in C^2([-Y_{\max},Y_{\max}])$, the standard remainder
formula for the composite trapezoidal rule on a finite interval yields
\[
\left|\int_{-Y_{\max}}^{Y_{\max}} g(y)\,dy - \Delta y\sum_{k=-K}^{K} g(y_k)\right|
\le \frac{(2Y_{\max})}{12}(\Delta y)^2 \max_{|y|\le Y_{\max}}|g''(y)|,
\]
which gives \eqref{eq:trap-bound-C2} after multiplying by $e^{\delta u}/(2\pi)$. For the first term, at each node $y_k$ write
\[
\widehat f_\delta^{\mathrm{num}}(y_k)F^{\mathrm{num}}(y_k)-\widehat f_\delta(y_k)F(y_k)
=\big(\widehat f_\delta^{\mathrm{num}}-\widehat f_\delta\big)F^{\mathrm{num}}
+\widehat f_\delta\big(F^{\mathrm{num}}-F\big),
\]
take absolute values and sum over $|k|\le K$. Using $(2K+1)\Delta y = 2Y_{\max}+\Delta y$ yields
\[
|V^{\mathrm{num}}-V_Y^{\mathrm{num}}|
\le \frac{e^{\delta u}}{2\pi}(2Y_{\max}+\Delta y)
\Big( E_{\widehat f}\,\max_{|k|\le K}|F^{\mathrm{num}}(t_n,\lambda_i,\delta+i y_k)|
      + \max_{|k|\le K}|\widehat f_\delta(y_k)|\,E_{\mathrm{PIDE}}\Big),
\]
and the claimed form follows by absorbing the bounded factors into $C_{\mathrm{rec}}(Y_{\max})$.
\end{proof}

\begin{remark}[Sharper trapezoidal convergence]\label{rem:sharper-spectral}
The $O((\Delta y)^2)$ bound \eqref{eq:trap-bound-C2} is the generic $C^2$ finite-interval estimate.
In many transform pricing problems, $y\mapsto g(y)$ extends analytically to a strip around the real axis,
in which case the equispaced trapezoidal rule can converge exponentially fast in $1/\Delta y$; see
\cite{TrefethenWeideman2014} and the error-control discussion in \cite{Lee2004}.
\end{remark}

\subsection{Pricing algorithm (summary)}
\begin{enumerate}
\item Choose a damping $\delta>0$, a cut-off $Y_{\max}$ and a frequency step $\Delta y$.
\item Compute $\widehat f_\delta(y_k)$ on $y_k=k\Delta y$ (closed form / GL / FFT).
\item For each $\eta_k=\delta+i y_k$, solve \eqref{eq:PIDE_F_eta} backward in $(t,\lambda)$ with the IMEX--GL--PCHIP scheme to obtain $F^{\rm num}(t_n,\lambda_i,\eta_k)$.
\item Reconstruct $V^{\rm num}(t_n,\lambda_i,u)$ by \eqref{eq:V-num} (or the real symmetric form).
\end{enumerate}

\section{Numerical experiments}\label{sec:numerics}
 
We illustrate the IMEX--Gauss--Laguerre--Bromwich pipeline on a capped call payoff
$f(u)=\min\bigl((u-K)^+,C\bigr)$ written on the accumulated mark process $U_T$ at maturity $T$. We solve it as described in Section~\ref{sec:IMEX}: the local drift (OU transport) and discount terms are treated implicitly, while the nonlocal jump operator is treated explicitly. 
The model state is $(U_t,\lambda_t)$, where $\lambda_t$ is the jump intensity (in year$^{-1}$) and
$U_t=\sum_{k\le N_t}X_k$ accumulates jump magnitudes. Between jumps, $\lambda_t$ follows the
mean-reverting ODE $\dot{\lambda}_t=\kappa(\bar{\lambda}-\lambda_t)$, and at each jump of size $X>0$,
we update $U\leftarrow U+X$ and $\lambda\leftarrow \lambda+\beta X$. Hence $\bar\lambda$ is the long-run
baseline activity, $\lambda_0$ the initial activity level, $\kappa$ the mean-reversion speed and $\beta$
the self-excitation parameter.
\\
In the numerical experiments we implement $\Pi_{\Delta\lambda}$ in \eqref{eq:Q_discrete_GL} with piecewise-linear interpolation and use a two-component Gamma mixture ($M=2$) with constant weights $\pi_1=p_{\mathrm{mix}}$ and $\pi_2=1-p_{\mathrm{mix}}$. These choices simplify the implementation without changing the IMEX--Gauss--Laguerre--Bromwich structure. In particular, replacing PCHIP by piecewise-linear interpolation preserves monotonicity and $\ell^\infty$-stability
(1-Lipschitz), at the cost of reducing the interpolation order from $O(\Delta\lambda^2)$ to $O(\Delta\lambda)$.
Likewise, Simpson composite rule is used for the Bromwich integral in place of the trapezoidal rule in Section~\ref{sec:reconstruction-error} with $N_y$ points.

\smallskip
\noindent Jump magnitudes follow a two-component Gamma mixture ($M=2$),
$X \sim p_{\mathrm{mix}}\,\Gamma(k_1,b_1) + (1-p_{\mathrm{mix}})\,\Gamma(k_2,b_2)$,
with shape parameters $k_m>0$ and \emph{rate} parameters $b_m>0$ (so that $\mathbb{E}[X]=k_m/b_m$ for each component). In all experiments we choose the Bromwich shift $\delta$ such that $0<\delta<\min(b_1,b_2)$,
which guarantees finiteness of the moment generating function $\mathbb{E}[e^{(\delta+i y)X}]$ (cf. Assumption~\ref{ass:laplace}) and hence well-posedness of the GL approximation in \eqref{eq:Q_discrete_GL}.

\smallskip
\noindent We also verify the sufficient stability condition discussed in Remark \ref{rq:stability-cdtn}, ensuring that mean reversion dominates the average self-excitation from marks.
In our experiments the mark law does not depend on $\lambda$ (constant mixture weights), hence
$\sup_{\lambda\in\Lambda}m_1^\theta(\lambda)=m_1^\theta$ is constant. Under the actuarial kernel
($\theta=0$) we have $m_1 = p_{\mathrm{mix}}\frac{k_1}{b_1} + (1-p_{\mathrm{mix}})\frac{k_2}{b_2}
=0.6\cdot 0.5 + 0.4\cdot 2.4 = 1.26$ so that even at $\beta=2$ we have $\beta m_1 \approx 2.52 \ll \kappa=8$.
We monitor the quadrature-weighted boundary-hit ratio (fraction of GL mass with $\lambda_i+\beta x>\lambda_{\max}$).
In all baseline runs it remains below $1\%$, so the boundary term $\varepsilon_{\rm bord}$ in
Theorem~\ref{thm:consistency} is negligible for $\lambda_{\max}=450$.
\medskip

\begin{table}[t]
\centering
\caption{Baseline model parameters and numerical settings (used in Fig.~\ref{fig:price_vs_beta_publishable}
and the convergence diagnostics).}
\label{tab:baseline_beta}
\begin{tabular}{ll}
\hline
\textbf{Model} &
$T=\frac{150}{365}$,
$r=0.02$,
$\kappa=8$,
$\bar\lambda=2$,
$\lambda_0=2.5$,
$u_0=0$ \\
\textbf{Marks (mixture)} &
$p_{\mathrm{mix}}=0.6$,
$(k_1,b_1)=(2,4)$,
$(k_2,b_2)=(6,2.5)$ \\
\textbf{Payoff} &
$K=1.2$,
$C=3.0$ \\
\textbf{Self-excitation} &
$\beta=1.0$ (varied in Fig.~\ref{fig:price_vs_beta_publishable}) \\
\hline
\textbf{PIDE grid} &
$\lambda_{\max}=450$,
$N_\lambda=600$,
$\Delta t=\frac{1}{365}$,
$Q=24$ \\
\textbf{Bromwich} &
$\delta=0.3$,
$Y_{\max}=120$,
$N_y=1537$ (Simpson) \\
\hline
\end{tabular}
\end{table}

\subsection{Sensitivity to self-excitation: price versus $\beta$}\label{subsec:price_beta}

We first investigate the impact of self-excitation on option values. Recall that, after a jump of size $X>0$,
the intensity is updated as $\lambda \leftarrow \lambda + \beta X$, so that larger $\beta$ amplifies the feedback
from marks to future event rates. Since the capped-call payoff $f(U_T)$ is nondecreasing in $U_T$, stronger self-excitation increases the likelihood of additional jumps prior to maturity and
therefore raises the option price.

\begin{figure}[t]
\centering
\includegraphics[width=0.60\textwidth]{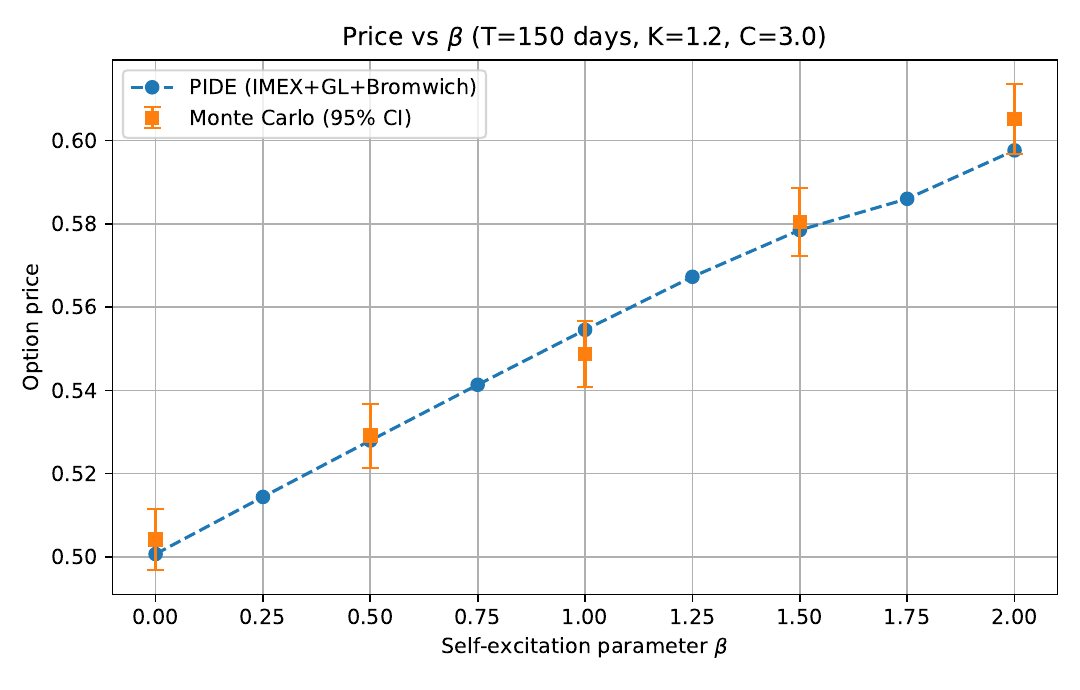}
\caption{Option price as a function of the self-excitation strength $\beta$ for $T=150$ days and $(K,C)=(1.2,3.0)$.
Dashed line with markers: IMEX--Gauss--Laguerre--Bromwich price.
Squares with $95\%$ error bars: Monte Carlo benchmark (exact PDMP simulation).
All remaining parameters are fixed.}
\label{fig:price_vs_beta_publishable}
\end{figure}

\noindent Figure~\ref{fig:price_vs_beta_publishable} reports the price as a function of $\beta$ for a representative contract
with maturity $T=150$ days and payoff parameters $(K,C)=(1.2,3.0)$, chosen so that the option is sufficiently
sensitive to changes in the distribution of $U_T$. The IMEX--Gauss--Laguerre--Bromwich prices are benchmarked against
Monte Carlo estimates (exact PDMP simulation) with $95\%$ confidence intervals. The agreement between both methods
confirms the correctness of the numerical pipeline, and the monotone increase in $\beta$ illustrates the economic role
of self-excitation in amplifying accumulated jump activity.

\subsection{Convergence diagnostics: time step and spatial grid}\label{subsec:conv_dt_space}

We assess numerical convergence by reporting \emph{absolute} errors with respect to a finer \emph{numerical}
reference value $V_{\mathrm{ref}}$, computed with a substantially refined discretisation in time, space and in the
Bromwich inversion (larger window and higher frequency resolution).
We stress that $V_{\mathrm{ref}}$ is \emph{not} an exact price (no closed form is available in general), but rather an
\emph{internal} high-accuracy benchmark. Consequently, the reported quantities
$|V(\cdot)-V_{\mathrm{ref}}|$ should be interpreted as \emph{convergence indicators} rather than absolute errors.

\paragraph{Time refinement.}
In Figure~\ref{fig:error_vs_dt}, we refine the time step $\Delta t$ while keeping all other parameters fixed, and
plot the error $|V(\Delta t)-V_{\mathrm{ref}}|$.
The error decreases steadily as $\Delta t\to 0$, displaying the first order $\mathcal{O}(\Delta t)$ asymptotic behavior. This is perfectly consistent with IMEX-Euler time stepping (implicit Euler for the local drift/discount part combined with an explicit treatment of the nonlocal jump operator).
This experiment confirms that, for the baseline contract, time discretisation error can be reduced predictably by
temporal refinement.

\begin{figure}[t]
\centering

\begin{subfigure}[t]{0.48\textwidth}
    \centering
    \includegraphics[width=\textwidth]{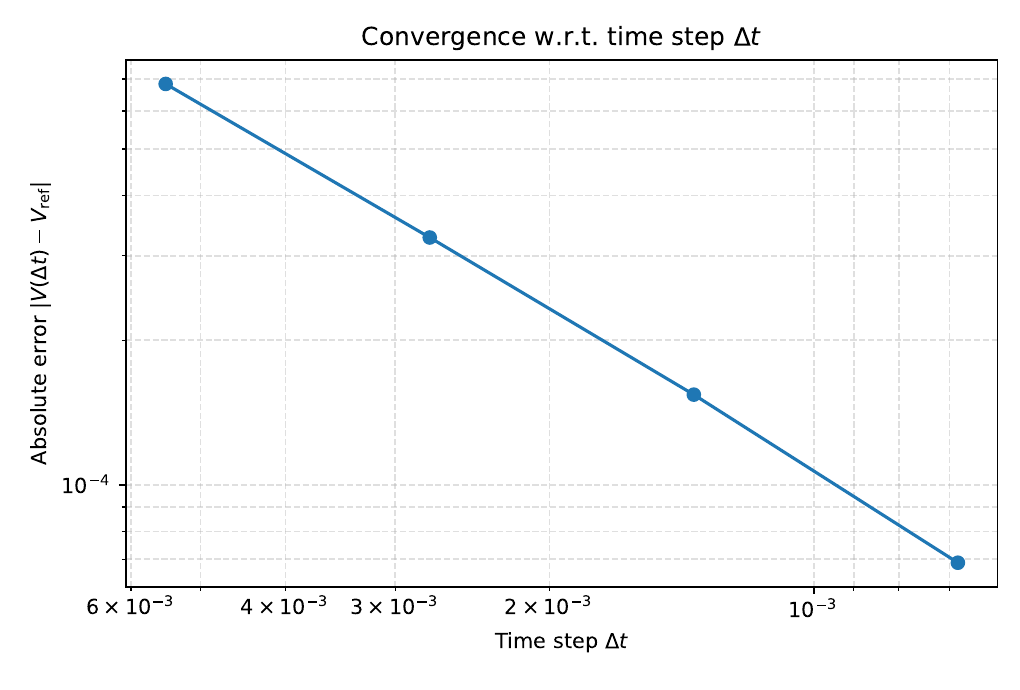}
    \caption{Convergence with respect to the time step $\Delta t$.}
    \label{fig:error_vs_dt}
\end{subfigure}
\hfill
\begin{subfigure}[t]{0.48\textwidth}
    \centering
    \includegraphics[width=\textwidth]{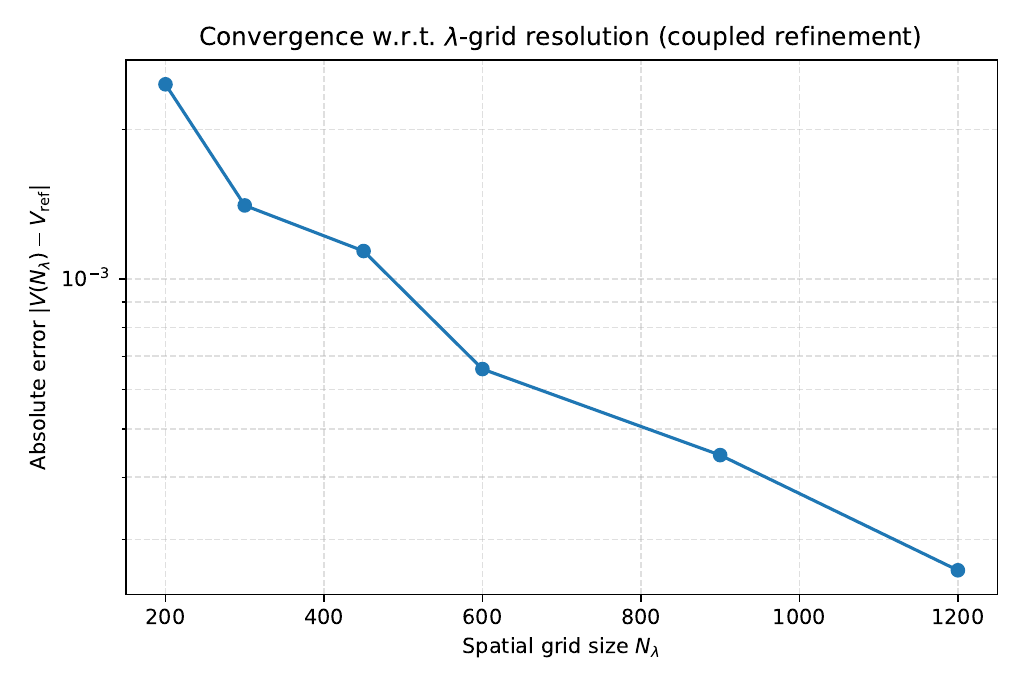}
    \caption{Convergence with respect to the $\lambda$-grid resolution.}
    \label{fig:error_vs_Nlam}
\end{subfigure}

\caption{Convergence diagnostics against a finer internal numerical reference $V_{\mathrm{ref}}$. Left: time refinement. Right: spatial refinement along a coupled path where $\Delta t$ decreases as $N_\lambda$ increases.}
\label{fig:conv_dt_space}
\end{figure}

\paragraph{Spatial refinement.}
In Figure~\ref{fig:error_vs_Nlam}, we study the impact of the $\lambda$-grid resolution at fixed $\lambda_{\max}$.
Since the scheme involves off-grid evaluations at shifted points $\lambda+\beta x$ (handled here by
piecewise-linear interpolation) and since the overall price also depends on the discretisation of the Bromwich
integral, varying $N_\lambda$ alone may lead to mixed error regimes.
To obtain a clean and interpretable trend, we therefore refine along a \emph{coupled} path: as $N_\lambda$
increases, we simultaneously decrease $\Delta t$ (and keep the Bromwich configuration fixed) so that the time
discretisation does not dominate the observed behaviour.
Along this refinement path, the error $|V(N_\lambda)-V_{\mathrm{ref}}|$ decreases with $N_\lambda$, indicating that the
spatial discretisation in $\lambda$ converges through the theoretical first order $\mathcal{O}(1/N_\lambda)=\mathcal{O}(\Delta\lambda)$ error and is not the limiting factor in the baseline configuration.

\subsection{Bromwich inversion diagnostics}\label{subsec:bromwich_diagnostics}

We next assess the numerical error introduced by the Bromwich inversion used to reconstruct the option price from the transformed solution. Since no closed form is available for the full IMEX--GL--Bromwich pipeline, we compare against an \emph{internal numerical reference} $V_{\mathrm{ref}}$ computed with a wider Bromwich window and a finer frequency grid, while keeping the PIDE discretisation fixed at a sufficiently fine level. Consequently, the reported absolute differences $|V(N_y)-V_{\mathrm{ref}}|$ should be interpreted as practical convergence indicators rather than exact errors.

\paragraph{Discretisation of the Bromwich integral.}
\noindent To largely decouple the discretisation error from the truncation error, we fix a large truncation level ($Y_{\max}=200$) and increase the number of frequency points $N_y$ in the composite Simpson rule.
Figure~\ref{fig:error_vs_Ny} reports the resulting absolute differences $|V(N_y)-V_{\mathrm{ref}}|$.
The error decreases very rapidly as $N_y$ increases, which is consistent with the high regularity (often analyticity) of the Bromwich integrand encountered in transform-based pricing.
In our experiments, $N_y$ in the range $2000$--$3000$ already yields errors below $10^{-4}$, and further refinement pushes the discretisation error into the $10^{-6}$ regime. This indicates that, for the baseline configuration, the frequency discretisation is not a numerical bottleneck compared to the PIDE resolution.

\noindent In contrast, varying $Y_{\max}$ at fixed step $\Delta y$ may lead to mild non-monotonic behaviour due to the
oscillatory nature of the truncated Bromwich integral; in our experiments the dominant effect is the frequency
discretisation, hence we report convergence in $N_y$.

\begin{figure}[H]
\centering
\includegraphics[width=0.60\textwidth]{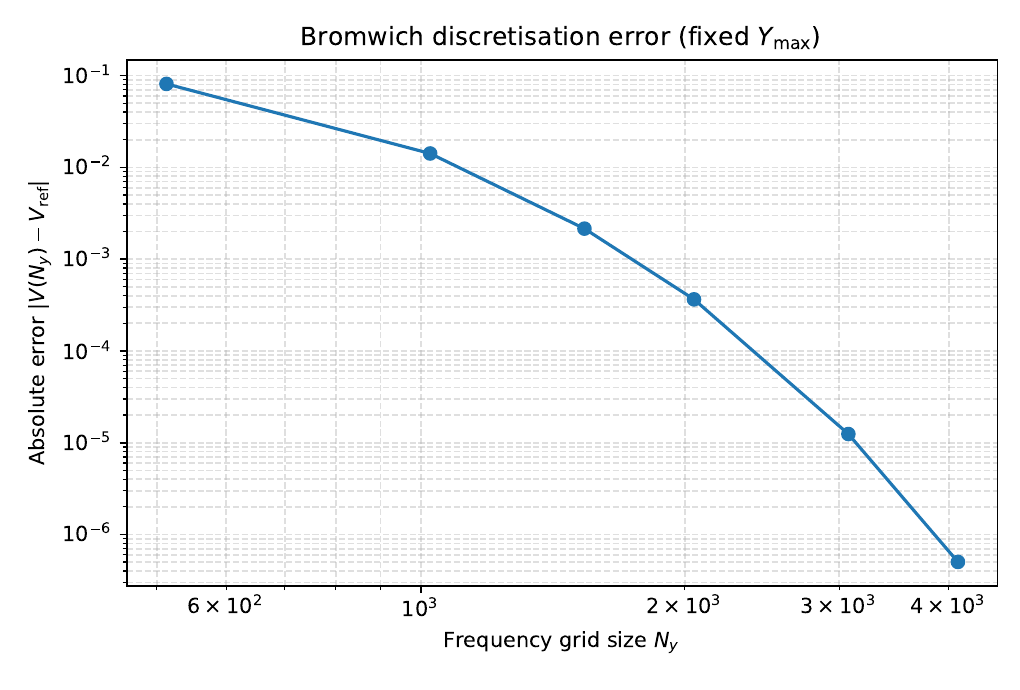}
\caption{Convergence of the Bromwich inversion with respect to the frequency grid size $N_y$ for a fixed truncation
level $Y_{\max}$. The error is measured against an internal numerical reference $V_{\mathrm{ref}}$ computed with a
larger $(Y_{\max},N_y)$.}
\label{fig:error_vs_Ny}
\end{figure}

\subsection{Sensitivity to the initial intensity $\lambda_0$}\label{subsec:greeks_lambda0}

We report the sensitivity of option values to the initial jump intensity $\lambda_0$, which
controls the short-term activity level of the marked point process. Since a larger $\lambda_0$
increases the expected number of jumps prior to maturity, it increases the accumulated mark
$U_T$ in distribution and therefore raises the capped-call payoff $f(U_T)=\min((U_T-K)^+,C)$.

\begin{figure}[H]
\centering
\includegraphics[width=0.60\textwidth]{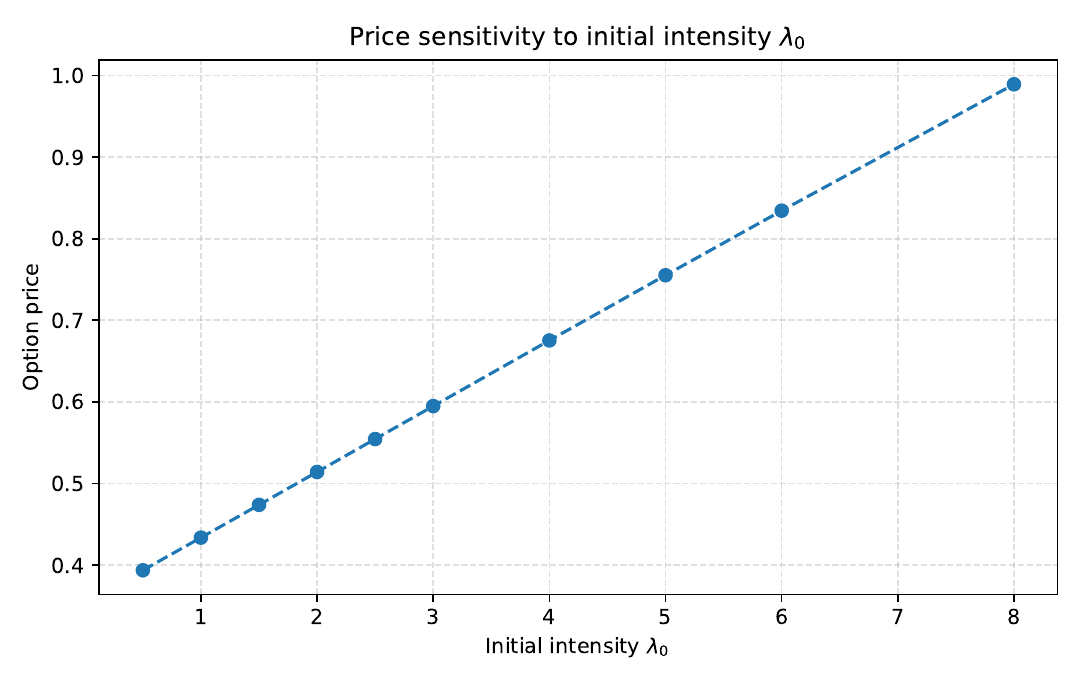}
\caption{Option price as a function of the initial jump intensity $\lambda_0$. The monotonic increase reflects the higher probability of accumulating large marks prior to maturity.}
\label{fig:price_vs_lambda0}
\end{figure}

\noindent Figure~\ref{fig:price_vs_lambda0} shows that the price is monotonically increasing in $\lambda_0$ over the
range considered. This behaviour is perfectly consistent with the economic intuition above.

\begin{figure}[H]
\centering
\includegraphics[width=0.60\textwidth]{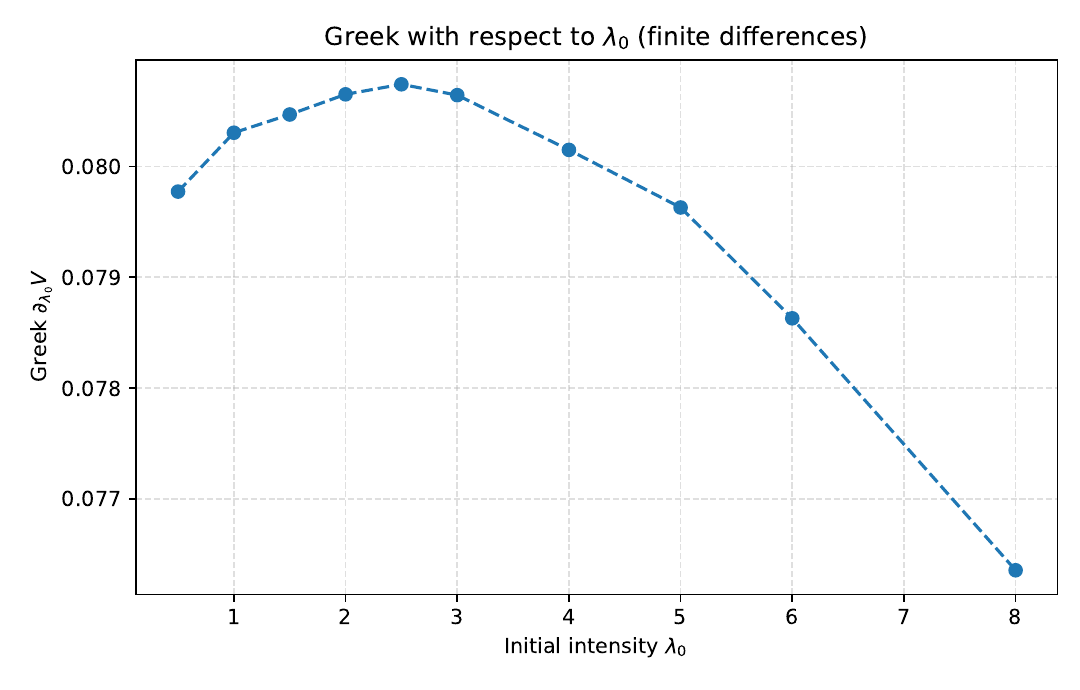}
\caption{The ``intensity Greek'' (Delta) $\partial_{\lambda_0}V$ estimated via centred finite differences. The smooth, bell-shaped profile highlights the diminishing marginal impact of extreme initial intensities (saturation effect).}
\label{fig:delta_lambda0}
\end{figure}

\noindent We further estimate the ``intensity Greek'' $\partial_{\lambda_0}V$ by centred finite differences,
\[
\partial_{\lambda_0}V(\lambda_0)\approx \frac{V(\lambda_0+h)-V(\lambda_0-h)}{2h},
\]
with a small relative step $h$. Figure~\ref{fig:delta_lambda0} shows that $\partial_{\lambda_0}V$
is positive and relatively flat (around $8\times 10^{-2}$ in our baseline). Notably, it exhibits a slight bell-shaped profile: it reaches a local maximum around $\lambda_0 \approx 2.5$ before mildly decreasing as $\lambda_0$ becomes larger. This captures the diminishing marginal impact of the initial intensity, which is expected due to the payoff cap $C$ (saturation effect) and the mean-reversion of $\lambda_t$.

\section{Conclusion}\label{sec:conclusion}

We developed a pricing framework for cumulative jump claims under a marked self-exciting intensity model with mean reversion. On the theoretical side, we derived the backward PIDE representation under both actuarial and Esscher-type pricing conventions and established consistency and stability properties for an IMEX discretisation combined with Gauss--Laguerre quadrature and Bromwich inversion. On the numerical side, the experiments confirm the accuracy and robustness of the pipeline: prices behave consistently with the economic role of self-excitation, the IMEX discretisation converges as expected in time and space, and the Bromwich inversion is not a numerical bottleneck once the frequency grid is sufficiently refined. 

Several extensions may be explored in future work, including state-dependent mixture weights, alternative interpolation and inversion procedures, and calibration to real precipitation data under statistical and pricing measures.

\newpage 
\clearpage
\appendix

\section{PCHIP interpolation error (details)}\label{app:PCHIP_error_detail}

Let $F\in C^2([\lambda_{\min},\lambda_{\max}])$ and let $\widetilde F=\Pi_{\Delta\lambda}[F]$
be the PCHIP (Fritsch--Carlson) interpolant on a grid with maximal step $h$.
Then there exists $C>0$ (depending on $\|F''\|_{\infty}$ and mesh regularity) such that
$\|\widetilde F-F\|_\infty\le Ch^2$.

On each cell $[\lambda_j,\lambda_{j+1}]$, write $\lambda=\lambda_j+\theta h_j,\qquad \theta\in[0,1],\qquad h_j=\lambda_{j+1}-\lambda_j$.
The cubic Hermite representation of the PCHIP interpolant is
\begin{equation}\label{eq:Hermite_form}
\widetilde F(\lambda)
= H_{00}(\theta)F_j + H_{10}(\theta)h_j m_j
+ H_{01}(\theta)F_{j+1} + H_{11}(\theta)h_j m_{j+1},
\end{equation}
where $H_{00},H_{10},H_{01},H_{11}$ are the standard cubic Hermite basis functions.

Next, Taylor expansion with Cauchy remainder gives
\[
F(\lambda)=F_j+\theta h_j\,s_j+\frac12\theta^2 h_j^2\,F''(\varepsilon_\theta),
\qquad
F_{j+1}=F_j+h_j\,s_j+\frac12 h_j^2\,F''(\varepsilon_\beta),
\]
for some $\varepsilon_\theta,\varepsilon_\beta\in(\lambda_j,\lambda_{j+1})$, where $s_j:=F'(\lambda_j)$.

Substituting these expansions into \eqref{eq:Hermite_form}, and using $H_{00}(\theta)+H_{01}(\theta)=1$, $\theta-H_{01}(\theta)=H_{10}(\theta)+H_{11}(\theta)$,
we obtain
\[
F(\lambda)-\widetilde F(\lambda)
=
h_j\Big(H_{10}(\theta)\big(s_j-m_j\big)+H_{11}(\theta)\big(s_j-m_{j+1}\big)\Big)
+\frac12 h_j^2\Big(\theta^2F''(\varepsilon_\theta)-H_{01}(\theta)F''(\varepsilon_\beta)\Big).
\]
Hence
\[
F(\lambda)-\widetilde F(\lambda)
= \mathcal O(h_j^2)\|F''\|_\infty
+ h_j\big(|F'(\lambda_j)-m_j|+|F'(\lambda_{j+1})-m_{j+1}|\big),
\]
where $m_j$ are the PCHIP nodal slopes.
It suffices to show the first-order slope accuracy $|m_j-F'(\lambda_j)|=\mathcal O(h)$.
If $d_{j-1}d_j>0$ (locally monotone data), $m_j$ is the weighted harmonic mean of secants
$d_{j-1},d_j$ and Taylor expansion gives $d_{j\pm1}=F'(\lambda_j)+\mathcal O(h)$, hence
$m_j=F'(\lambda_j)+\mathcal O(h)$; see Fritsch--Carlson~\cite{FritschCarlson1980} (and Hyman~\cite{Hyman1983}).
If $d_{j-1}d_j\le 0$ (non-monotone), PCHIP sets $m_j=0$ and the mean value theorem implies the
existence of $\gamma\in(\lambda_{j-1},\lambda_{j+1})$ with $F'(\gamma)=0$, so
$|F'(\lambda_j)|\le \|F''\|_\infty|\lambda_j-\gamma|=\mathcal O(h)$, hence again $|m_j-F'(\lambda_j)|=\mathcal O(h)$.
Combining gives $\|F-\widetilde F\|_\infty=\mathcal O(h^2)$.

\section{Detailed a-priori bounds and $C^{1,2}$ regularity}\label{app:regularity}

Fix $\eta=\delta+iy$ with $\delta\in(0,s_0)$ and work on $\Lambda=[\lambda_{\min},\lambda_{\max}]$.
Consider the modal PIDE
\[
\partial_t F + \mu(\lambda)\partial_\lambda F + \lambda\,\mathcal J_\eta[F] - rF = 0,
\qquad F(T,\lambda,\eta)=1,
\]
with
\[
\mathcal J_\eta[F](t,\lambda):=\int_0^\infty\Big(e^{\eta x}F(t,\lambda+\beta x,\eta)-F(t,\lambda,\eta)\Big)\tilde\nu(\lambda,dx).
\]
The existence of a classical solution is assumed; the bounds below follow from standard
characteristic estimates and weighted-TV continuity of the jump kernel, in line with the PIDE
literature (see e.g. \cite{ContTankov2004,BarlesImbert2008,CrandallIshiiLions1992}).
\begin{assumption}\label{ass:reg_suff}
(i) (\textbf{Exponential moment}) $M_0:=\sup_{\lambda\in\Lambda}\int_0^\infty e^{\delta x}\tilde\nu(\lambda,dx)<\infty$.

(ii) (\textbf{Kernel regularity}) $\lambda\mapsto \tilde\nu(\lambda,\cdot)$ is differentiable in weighted TV and \newline
$\sup_{\lambda\in\Lambda}\|\partial_\lambda \tilde\nu(\lambda,\cdot)\|_{TV,\delta}\le L_\nu<\infty$.

(iii) $\mu\in C^1(\Lambda)$ with $\|\mu'\|_{\infty,\Lambda}<\infty$ and $\lambda\in\Lambda \Rightarrow 0\le \lambda\le \lambda_{\max}$.
\end{assumption}

\begin{lemma}[Derivative of the jump operator]\label{lem:J_eta_derivative}
Under Assumption~\ref{ass:reg_suff}, for $F\in C^1(\Lambda)$,
\[
\|\partial_\lambda \mathcal J_\eta[F](t,\cdot)\|_{\infty,\Lambda}
\le 2M_0\|F_\lambda(t,\cdot)\|_{\infty,\Lambda}+2L_\nu\|F(t,\cdot)\|_{\infty,\Lambda}.
\]
\end{lemma}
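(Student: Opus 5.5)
We differentiate the jump operator directly in $\lambda$ and bound the pieces separately. Writing
\[
\mathcal J_\eta[F](t,\lambda)=\int_0^\infty e^{\eta x}F(t,\lambda+\beta x)\,\tilde\nu(\lambda,dx)\;-\;F(t,\lambda),
\]
where we use that $\tilde\nu(\lambda,\cdot)$ is a probability kernel, $\lambda$ enters in three places: in the shifted argument $\lambda+\beta x$, in the kernel $\tilde\nu(\lambda,dx)$, and in the local term $F(t,\lambda)$. The product rule, in the weighted-TV sense of Assumption~\ref{ass:reg_suff}(ii), gives formally
\[
\partial_\lambda \mathcal J_\eta[F]
=\int_0^\infty e^{\eta x}F_\lambda(t,\lambda+\beta x)\,\tilde\nu(\lambda,dx)
+\int_0^\infty e^{\eta x}F(t,\lambda+\beta x)\,\partial_\lambda\tilde\nu(\lambda,dx)
-F_\lambda(t,\lambda).
\]
Off $\Lambda$, the values $F(\lambda+\beta x)$ are understood through the clamped extension $E_{\rm clamp}$, as in Lemma~\ref{lem:Q-Lipschitz}. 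That extension is Lipschitz with the same constant $\|F_\lambda\|_{\infty,\Lambda}$ and has the same sup norm, so all sup-norms may be taken over $\Lambda$.

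\textbf{Step 1 (justifying the differentiation).} Form the difference quotient in $\lambda$ and split it as
\[
\int e^{\eta x}\,\frac{F(\lambda'+\beta x)-F(\lambda+\beta x)}{\lambda'-\lambda}\,\tilde\nu(\lambda',dx)
\;+\;\int e^{\eta x}F(\lambda+\beta x)\,\frac{\tilde\nu(\lambda',dx)-\tilde\nu(\lambda,dx)}{\lambda'-\lambda}.
\]
\begin{itemize}
\item For the first integral, the integrand is dominated by $e^{\delta x}\|F_\lambda\|_\infty$ by the mean value theorem. Since $\int e^{\delta x}\tilde\nu(\lambda',dx)\le M_0$, dominated convergence applies. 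A small additional point is needed here: the measure also moves with $\lambda'$, which is absorbed by weighted-TV continuity (a consequence of (ii)).
\item For the second integral, $|e^{\eta x}F|\le e^{\delta x}\|F\|_\infty$. Convergence of the kernel difference quotient to $\partial_\lambda\tilde\nu$ in $\|\cdot\|_{TV,\delta}$ then gives the limit, using the duality characterisation of $\|\cdot\|_{TV,\delta}$ from Section~\ref{sec:notations}.
\end{itemize}
This step, the interchange of limit and integral when the measure itself depends on $\lambda$, is the main technical obstacle. We handle it entirely through the sup-formula for the weighted TV norm.

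\textbf{Step 2 (bounds).} The three terms are bounded as follows:
\begin{itemize}
\item first term: $\le M_0\|F_\lambda\|_{\infty,\Lambda}$;
\item local term: $\le \|F_\lambda\|_{\infty,\Lambda}\le M_0\|F_\lambda\|_{\infty,\Lambda}$, since $M_0\ge \tilde\nu(\lambda,\mathbb R_+)=1$;
\item kernel term: $\le \|F\|_{\infty,\Lambda}\,\|\partial_\lambda\tilde\nu(\lambda,\cdot)\|_{TV,\delta}\le L_\nu\|F\|_{\infty,\Lambda}$.
\end{itemize}
Summing and taking the supremum over $\lambda\in\Lambda$ yields $2M_0\|F_\lambda\|+L_\nu\|F\|$, which is below the stated bound.

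The factor $2$ on $L_\nu$ leaves room for a variant in which the loss term is not simplified via $\int\tilde\nu=1$. In that variant one differentiates $F(\lambda)\int\tilde\nu(\lambda,dx)$ and picks up a second kernel contribution bounded by $L_\nu\|F\|$. This covers the case where we keep the general form of $\mathcal J_\eta$ and avoid relying on exact normalisation of the estimated kernel. Either route gives the claimed inequality.
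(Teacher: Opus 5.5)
Your proof is correct. It uses the same basic device as the paper: split the difference quotient into a moving-argument part and a moving-kernel part, bound the first by the mean value theorem against $e^{\delta x}\tilde\nu$, and bound the second by the weighted-TV derivative. The decomposition, however, is different.

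The paper keeps gain and loss together in one integrand $\psi(\lambda,x)$ integrated against $\tilde\nu(\lambda,dx)$, and never uses $\tilde\nu(\lambda,\mathbb R_+)=1$. Bounding $|\partial_\lambda\psi|\le 2\|F_\lambda\|_\infty$ and $|\psi|\le 2\|F\|_\infty$, each against the weight $e^{\delta x}$, is exactly where both factors of $2$ in the statement come from.

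You instead use normalisation to reduce the loss term to $-F(t,\lambda)$, so the kernel derivative only hits the gain term. This gives the sharper bound $(M_0+1)\|F_\lambda\|_{\infty,\Lambda}+L_\nu\|F\|_{\infty,\Lambda}$, which you then close with $M_0\ge 1$.

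What each route buys:
\begin{itemize}
\item The paper's version is more robust: it survives an unnormalised or estimated kernel, and your closing remark recovers that case too.
\item Yours gives a tighter constant and an exact derivative formula that matches the definition of $\mathcal J_\eta$. The paper's sketch attaches $e^{\eta x}$ to the loss term as well, which is not literally $\mathcal J_\eta$, though it is harmless for the bound since $e^{\delta x}\ge 1$.
\item Yours also treats off-domain evaluations explicitly through $E_{\rm clamp}$, which the paper leaves implicit.
\end{itemize}

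One fine point in your Step 1: $E_{\rm clamp}F$ has a kink at $\lambda_{\max}$. Pointwise convergence of the first difference quotient can therefore fail at the single $x$ with $\lambda+\beta x=\lambda_{\max}$. This is immaterial for Gamma mixtures, which have no atoms. In any case, the sup-norm estimate only needs the difference quotients to be bounded, not to converge.
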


\begin{proof}[Proof sketch]
Write $\mathcal J_\eta[F](\lambda)=\int e^{\eta x}\psi(\lambda,x)\tilde\nu(\lambda,dx)$ with
$\psi(\lambda,x)=F(\lambda+\beta x)-F(\lambda)$ and use difference quotients:
the term where $\tilde\nu$ is frozen is bounded by $2M_0\|F_\lambda\|_\infty$ by MVT, and the term
where $\psi$ is frozen is controlled by $2\|F\|_\infty \|\partial_\lambda \tilde\nu\|_{TV,\delta}$.
\end{proof}

\begin{prop}[A-priori bounds]\label{prop:apriori_bounds}
Assume $F(\cdot,\cdot,\eta)$ is a classical solution. Under Assumption~\ref{ass:reg_suff},
there exist constants $C,c>0$ such that for all $t\in[0,T]$,
\[
\|F(t,\cdot)\|_{\infty,\Lambda}+\|F_\lambda(t,\cdot)\|_{\infty,\Lambda}+\|F_t(t,\cdot)\|_{\infty,\Lambda}
\le C e^{c(T-t)}.
\]
If moreover $\sup_{\lambda\in\Lambda}\|\partial_{\lambda\lambda}\tilde\nu(\lambda,\cdot)\|_{TV,\delta}<\infty$
and $\mu\in C^2(\Lambda)$, analogous bounds hold for $\|F_{\lambda\lambda}\|_\infty$ and $\|F_{tt}\|_\infty$.
\end{prop}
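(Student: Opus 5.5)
The approach is an energy-free, sup-norm Grönwall argument along backward time, applied successively to $F$, to $F_\lambda$ and to $F_t$. Write the modal PIDE as $\partial_t F+\mu\partial_\lambda F=rF-\lambda\mathcal J_\eta[F]$ and integrate along the characteristics $\dot\lambda=\mu(\lambda)$ of the transport part. On the truncated domain $\Lambda$, with the clamping convention used throughout, the flow stays in $\Lambda$ or reaches its boundary. Along a characteristic, $\frac{d}{ds}F(s,\lambda(s))=rF-\lambda\mathcal J_\eta[F]$. Using $|e^{\eta x}|=e^{\delta x}$ and Assumption~\ref{ass:reg_suff}(i), we get $\|\lambda\mathcal J_\eta[F](s,\cdot)\|_{\infty,\Lambda}\le\lambda_{\max}(M_0+1)\|F(s,\cdot)\|_{\infty,\Lambda}$, exactly as in Lemma~\ref{lem:Q-Lipschitz}. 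With $\phi(t):=\|F(t,\cdot)\|_{\infty,\Lambda}$, $\phi(T)=1$, integrating from $t$ to $T$ and taking suprema gives $\phi(t)\le 1+\int_t^T (r+\lambda_{\max}(M_0+1))\phi(s)\,ds$. Grönwall then yields $\phi(t)\le e^{c_0(T-t)}$. Alternatively, the probabilistic bound of Proposition~\ref{prop:mode-bounded} can be quoted directly.

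Next, differentiate the PIDE in $\lambda$; this is legitimate for a classical solution, or formally via difference quotients. Setting $G:=F_\lambda$ gives
\[
\partial_t G+\mu\partial_\lambda G=(r-\mu')G-\mathcal J_\eta[F]-\lambda\,\partial_\lambda\mathcal J_\eta[F],
\]
with terminal condition $G(T,\cdot)=0$ since $F(T)\equiv1$. This is again a transport equation along the same characteristics, with source controlled by Lemma~\ref{lem:J_eta_derivative}:
\[
\|\lambda\partial_\lambda\mathcal J_\eta[F]\|\le\lambda_{\max}\big(2M_0\|G\|+2L_\nu\|F\|\big).
\]
With $\psi(t):=\|G(t,\cdot)\|_{\infty,\Lambda}$, this gives $\psi(t)\le\int_t^T\big[(r+\|\mu'\|_\infty+2\lambda_{\max}M_0)\psi(s)+(M_0+1+2\lambda_{\max}L_\nu)\phi(s)\big]ds$. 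Inserting the bound on $\phi$ and applying Grönwall gives $\psi(t)\le C_1e^{c(T-t)}$ with $c$ enlarged. The $F_t$ bound then needs no further Grönwall step, since the equation itself gives $F_t=-\mu F_\lambda+rF-\lambda\mathcal J_\eta[F]$. Hence $\|F_t\|\le \|\mu\|_{\infty,\Lambda}\psi+(r+\lambda_{\max}(M_0+1))\phi$, where $\mu$ is bounded on $\Lambda$ by (iii).

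For the second-order claim, repeat the procedure once more. Differentiate twice in $\lambda$ to get a transport equation for $F_{\lambda\lambda}$ with zero terminal data. Its coefficient is $r-2\mu'$ and its source involves $\mu''F_\lambda$, $\partial_\lambda\mathcal J_\eta$ and $\lambda\partial^2_{\lambda}\mathcal J_\eta[F]$. The last term is bounded by $C(M_0\|F_{\lambda\lambda}\|+L_\nu\|F_\lambda\|+L_{\nu,2}\|F\|)$ using the extra hypothesis on $\partial_{\lambda\lambda}\tilde\nu$; the argument is the same difference-quotient one as in Lemma~\ref{lem:J_eta_derivative}, where the shift $\lambda+\beta x$ contributes $F_{\lambda\lambda}(\lambda+\beta x)$ weighted by $e^{\delta x}$. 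Grönwall closes the bound. For $F_{tt}$, differentiate the identity for $F_t$ in $t$. This expresses $F_{tt}$ through $F_{t\lambda}$ and $\mathcal J_\eta[F_t]$. The term $F_{t\lambda}$ is itself obtained by differentiating $F_t=-\mu F_\lambda+\dots$ in $\lambda$, which involves $F_{\lambda\lambda}$ and the quantities already bounded.

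The main obstacle is the boundary of the truncated domain. Characteristics or shifted points $\lambda+\beta x$ leave $\Lambda$, so $\|\cdot\|_{\infty,\Lambda}$ is not automatically closed under the jump operator. I would handle this with the clamped extension, for which sup norms are preserved. The catch is that the clamped function is only Lipschitz, not $C^1$, across $\lambda_{\max}$. For the derivative bounds I would therefore take suprema over the half-line $[\lambda_{\min},\infty)$ of the extended solution, or equivalently assume the a-priori estimates for the localized problem, where the bound $\lambda\le\lambda_{\max}$ holds as in Proposition~\ref{prop:mode-bounded}. A second delicate point is justifying the differentiation of the PIDE in $\lambda$ for a merely classical solution. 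I would avoid this by working with finite-difference quotients $(F(t,\lambda+\epsilon)-F(t,\lambda))/\epsilon$, deriving the same Grönwall inequality uniformly in $\epsilon$, and passing to the limit.
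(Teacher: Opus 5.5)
Your proposal is correct and takes the same approach as the paper's proof sketch: a Gr\"onwall bound for $\|F\|_{\infty,\Lambda}$ along the characteristics $\dot\lambda=\mu(\lambda)$, then differentiation of the PIDE in $\lambda$ combined with Lemma~\ref{lem:J_eta_derivative} to close a Gr\"onwall inequality for $\|F_\lambda\|_{\infty,\Lambda}$, then $F_t$ read off from the equation, and finally the second-order bounds by one more differentiation. Your treatment of the boundary of $\Lambda$ (clamping, the kink at $\lambda_{\max}$, difference quotients) covers points the paper's sketch passes over silently, so your version is, if anything, more careful than the original.
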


\begin{proof}[Proof sketch]
Along characteristics $\dot\Lambda=\mu(\Lambda)$, the PIDE gives
$\frac{d}{ds}F(s,\Lambda(s))=rF-\Lambda(s)\mathcal J_\eta[F]$.
Using $|e^{\eta x}|\le e^{\delta x}$ and Assumption~\ref{ass:reg_suff}(i) yields
$|\mathcal J_\eta[F]|\le (M_0+1)\|F\|_\infty$, hence a Gr\"onwall bound for $\|F\|_\infty$.
Differentiating the PIDE in $\lambda$ and using Lemma~\ref{lem:J_eta_derivative} gives a closed integral inequality for
$\|F_\lambda\|_\infty$ of Gr\"onwall type. Finally $F_t$ is bounded from the equation.
Second-order bounds follow similarly under the additional assumptions.
\end{proof}

\section{MoE/Esscher: full weighted-TV bound and consistency}\label{app:MoE}
We use the weighted total-variation seminorm $\|\cdot\|_{TV,\delta}$ defined in
Section~\ref{sec:notations}, with $\delta\in(0,s_0)$. On a compact parameter set $K=\{k\in[k_-,k_+],\ b\in[b_-,b_+]\}$, $b_->\delta$,
define the exponential-moment factor $M_{k,b}(\delta):=\Big(\frac{b}{b-\delta}\Big)^k<\infty$.

\begin{lemma}[Gamma weighted-TV Lipschitz bound]\label{lem:gamma_wtv_lip}
For any $(k,b),(\hat k,\hat b)\in K$, the Gamma densities satisfy
\[
\int_0^\infty e^{\delta x}\big|f_{k,b}(x)-f_{\hat k,\hat b}(x)\big|\,dx
\ \le\ C_k\,|k-\hat k|+C_b\,|b-\hat b|,
\]
where $C_k,C_b<\infty$ depend only on $(k_\pm,b_\pm,\delta)$.
\end{lemma}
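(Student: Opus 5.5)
The argument is a mean-value estimate in parameter space combined with dominated differentiation under the integral. Write $f_{k,b}(x)=\frac{b^k}{\Gamma(k)}x^{k-1}e^{-bx}$ and join the two parameter points by the segment $(k_s,b_s)=(\hat k,\hat b)+s\,(k-\hat k,\,b-\hat b)$, $s\in[0,1]$. Since $K$ is a rectangle, hence convex, the segment stays in $K$. For each fixed $x>0$ the fundamental theorem of calculus gives
\[
f_{k,b}(x)-f_{\hat k,\hat b}(x)=\int_0^1\Big((k-\hat k)\,\partial_k f_{k_s,b_s}(x)+(b-\hat b)\,\partial_b f_{k_s,b_s}(x)\Big)\,ds .
\]
Multiplying by $e^{\delta x}$, integrating in $x$, and applying Tonelli reduces the claim to the two uniform estimates
\[
C_k:=\sup_{(k,b)\in K}\int_0^\infty e^{\delta x}|\partial_k f_{k,b}(x)|\,dx<\infty,\qquad
C_b:=\sup_{(k,b)\in K}\int_0^\infty e^{\delta x}|\partial_b f_{k,b}(x)|\,dx<\infty .
\]

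Next I compute the logarithmic derivatives, which are explicit:
\[
\partial_b f_{k,b}=f_{k,b}\Big(\frac{k}{b}-x\Big),\qquad
\partial_k f_{k,b}=f_{k,b}\big(\log b+\log x-\psi(k)\big),
\]
where $\psi$ is the digamma function. The key observation is
\[
e^{\delta x}f_{k,b}(x)=M_{k,b}(\delta)\,f_{k,b-\delta}(x),
\]
so every weighted integral becomes $M_{k,b}(\delta)$ times an unweighted moment of a Gamma$(k,b-\delta)$ law. Here $b-\delta\ge b_--\delta>0$ is the essential use of $b_->\delta$. For the $b$-derivative this gives
\[
\int_0^\infty e^{\delta x}|\partial_b f_{k,b}|\,dx\le M_{k,b}(\delta)\Big(\frac{k}{b}+\frac{k}{b-\delta}\Big).
\]
For the $k$-derivative it gives
\[
\int_0^\infty e^{\delta x}|\partial_k f_{k,b}|\,dx\le M_{k,b}(\delta)\Big(|\log b|+|\psi(k)|+\E|\log Y|\Big),\qquad Y\sim\Gamma(k,b-\delta).
\]

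The main obstacle, mild as it is, is bounding $\E|\log Y|$ uniformly, because of the singularity of $\log x$ near $0$. I would split at $x=1$:
\[
\E|\log Y|\le \E[Y]+\E\big[\log^-Y\big].
\]
The first term equals $k/(b-\delta)$. For the second I would use $x^{k-1}\le x^{k_--1}$ on $(0,1)$, so that
\[
\E\big[\log^-Y\big]\le\frac{(b-\delta)^k}{\Gamma(k)}\int_0^1|\log x|\,x^{k_--1}\,dx=\frac{(b-\delta)^k}{\Gamma(k)\,k_-^2}.
\]
This is finite because $k_->0$.

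Finally I take suprema over $K$. All the quantities involved, namely $M_{k,b}(\delta)$, $k/b$, $k/(b-\delta)$, $|\log b|$, $|\psi(k)|$, $(b-\delta)^k$ and $1/\Gamma(k)$, are continuous on the compact set $K$ with $b>\delta$ and $k\ge k_->0$. Hence they are bounded by constants depending only on $(k_\pm,b_\pm,\delta)$, which gives the stated $C_k$ and $C_b$.

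The Tonelli step needs measurability and nonnegativity of the integrand after taking absolute values inside the $s$-integral. This is immediate since the integrand is jointly continuous on $(0,1)\times(0,\infty)$.
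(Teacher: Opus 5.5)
Your proof is correct and follows the paper's own argument. The shared steps are the fundamental theorem of calculus along the parameter segment, the explicit derivatives $\partial_b f_{k,b}=(k/b-x)f_{k,b}$ and $\partial_k f_{k,b}=(\log b+\log x-\psi(k))f_{k,b}$, and the identity $e^{\delta x}f_{k,b}=M_{k,b}(\delta)f_{k,b-\delta}$, which turns the weighted integrals into expectations under $\Gamma(k,b-\delta)$; you also correctly fill in the details the paper leaves to ``standard moment formulas'', in particular the uniform bound on $\E|\log Y|$ obtained by splitting at $x=1$.
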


\begin{proof}[Proof]
Use the fundamental theorem of calculus along the segment between parameters and the identities
$\partial_b f_{k,b}=(k/b-x)f_{k,b}$ and $\partial_k f_{k,b}=(\ln b+\ln x-\psi(k))f_{k,b}$, where $\psi$ denotes the digamma function.
After weighting by $e^{\delta x}$, rewrite $e^{\delta x}f_{k,b}(x)=M_{k,b}(\delta)\, f_{k,b-\delta}(x)$
and bound the resulting expectations under $\Gamma(k,b-\delta)$ using standard moment formulas.
\end{proof}

\begin{prop}[MoE weighted-TV bound]\label{prop:moe_wtv_bound}
Let $\nu(\lambda,dx)=\sum_{m=1}^M \pi_m(\lambda)\Gamma(k_m,b_m)(dx)$ and
$\hat\nu(\lambda,dx)=\sum_{m=1}^M \hat\pi_m(\lambda)\Gamma(\hat k_m,\hat b_m)(dx)$, with all
$(k_m,b_m),(\hat k_m,\hat b_m)\in K$. Then for each $\lambda\in\Lambda$,
\[
\|\nu(\lambda,\cdot)-\hat\nu(\lambda,\cdot)\|_{TV,\delta}
\ \le\ \sum_{m=1}^M |\pi_m(\lambda)-\hat\pi_m(\lambda)|\,M_{k_m,b_m}(\delta)
\ +\ \sum_{m=1}^M \hat\pi_m(\lambda)\big(C_k|k_m-\hat k_m|+C_b|b_m-\hat b_m|\big).
\]
\end{prop}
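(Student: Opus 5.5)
The plan is to reduce the mixture bound to Lemma~\ref{lem:gamma_wtv_lip} by a triangle inequality in the weighted total variation norm. Write $\gamma_m:=\Gamma(k_m,b_m)$ and $\hat\gamma_m:=\Gamma(\hat k_m,\hat b_m)$. For fixed $\lambda\in\Lambda$, add and subtract the mixed measure $\sum_m \hat\pi_m(\lambda)\gamma_m$, so that
\[
\nu(\lambda,\cdot)-\hat\nu(\lambda,\cdot)
=\sum_{m=1}^M\big(\pi_m(\lambda)-\hat\pi_m(\lambda)\big)\gamma_m
+\sum_{m=1}^M\hat\pi_m(\lambda)\big(\gamma_m-\hat\gamma_m\big).
\]
Since $\|\cdot\|_{TV,\delta}$ is a norm on signed measures with finite weighted variation (Section~\ref{sec:notations}), I will apply the triangle inequality and absolute homogeneity term by term. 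Both pieces have finite weighted variation because $b_m,\hat b_m\ge b_->\delta$.

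For the first sum, each $\gamma_m$ is a positive measure. Its weighted total variation is therefore its exponential moment:
\[
\int_0^\infty e^{\delta x}\gamma_m(dx)=\Big(\frac{b_m}{b_m-\delta}\Big)^{k_m}=M_{k_m,b_m}(\delta).
\]
This is the Laplace transform formula from Definition~\ref{def:moe}, evaluated at $-\delta$, or equivalently the computation used in Lemma~\ref{lem:esscher-gamma}. This gives the term $\sum_m|\pi_m-\hat\pi_m|\,M_{k_m,b_m}(\delta)$.

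For the second sum, the weights satisfy $\hat\pi_m(\lambda)\ge0$, so $\|\hat\pi_m(\gamma_m-\hat\gamma_m)\|_{TV,\delta}=\hat\pi_m\int_0^\infty e^{\delta x}|f_{k_m,b_m}-f_{\hat k_m,\hat b_m}|\,dx$. Both parameter pairs lie in $K$, so Lemma~\ref{lem:gamma_wtv_lip} applies directly and bounds this by $\hat\pi_m\big(C_k|k_m-\hat k_m|+C_b|b_m-\hat b_m|\big)$.

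Summing the two contributions gives the claimed bound. The only point needing care is that the constants $C_k,C_b$ are uniform over $m$. This holds because Lemma~\ref{lem:gamma_wtv_lip} gives constants depending only on $(k_\pm,b_\pm,\delta)$, not on the particular pair. Apart from that, the argument is routine and I do not expect a real obstacle. The substantive analytic work, namely differentiating the Gamma density in $k$ and $b$ and controlling the weighted moments, is already contained in Lemma~\ref{lem:gamma_wtv_lip}.
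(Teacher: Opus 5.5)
Your proposal is correct and matches the paper's proof: add and subtract $\sum_m\hat\pi_m(\lambda)\Gamma(k_m,b_m)$, apply the triangle inequality in $\|\cdot\|_{TV,\delta}$, and evaluate the positive-measure terms as the exponential moments $M_{k_m,b_m}(\delta)$. The perturbation terms are then bounded componentwise by Lemma~\ref{lem:gamma_wtv_lip}, whose constants are uniform over $K$.
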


\begin{proof}
Triangle inequality and Lemma~\ref{lem:gamma_wtv_lip} applied componentwise.
\end{proof}

We remark that for softmax gates $\pi(\lambda;\beta)=\mathrm{softmax}(\beta^\top B(\lambda))$ with bounded features $B$,
$\|\pi(\lambda;\beta)-\pi(\lambda;\beta^*)\|_1 \le L_\pi \|B\|_\infty\|\beta-\beta^*\|$ for a universal $L_\pi$.
Under standard compactness/identifiability assumptions, EM/MLE yields $\hat\theta\to\theta^*$ in probability
(e.g.\ \cite{McLachlanPeel2000,vanderVaart1998,White1982}).

\section{Boundary PCHIP slopes and short extrapolation}\label{app:PCHIP-boundary}

Let $\lambda_b\in\{\lambda_{\min},\lambda_{\max}\}$ and consider the neighbouring nodes
$\lambda_0=\lambda_b<\lambda_1<\lambda_2$ with steps $h_0=\lambda_1-\lambda_0$, $h_1=\lambda_2-\lambda_1$,
and $h:=\max(h_0,h_1)$. Let $d_0=(F_1-F_0)/h_0$, $d_1=(F_2-F_1)/h_1$ and let $m_b$ be the
Fritsch--Carlson boundary slope \cite{FritschCarlson1980}.

\begin{lemma}[First-order accuracy of the boundary slope]\label{lem:pchip_boundary_slope}
If $F\in C^2(\Lambda)$, then
\[
|m_b-F'(\lambda_b)|\ \le\ C\,h\,\|F''\|_{\infty,\Lambda},
\]
for a constant $C$ depending only on mesh regularity.
\end{lemma}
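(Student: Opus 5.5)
The plan is to write the Fritsch--Carlson boundary slope as a clipped version of the one-sided three-point formula. We first show that the unclipped formula is first-order accurate. We then show that each clipping rule replaces the value by a number lying between two quantities that are already $\mathcal O(h)$-close to $F'(\lambda_b)$, so accuracy is preserved. Concretely, at $\lambda_b=\lambda_{\min}$ the (non-centred, shape-preserving) end-point rule of \cite{FritschCarlson1980,FritschButland1984} takes
\[
\widehat m:=\frac{(2h_0+h_1)\,d_0-h_0\,d_1}{h_0+h_1},
\]
and then sets $m_b=0$ if $\operatorname{sign}(\widehat m)\ne\operatorname{sign}(d_0)$. Otherwise, if $\operatorname{sign}(d_0)\ne\operatorname{sign}(d_1)$ and $|\widehat m|>3|d_0|$, it sets $m_b=3d_0$. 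In all other cases $m_b=\widehat m$. The case $\lambda_b=\lambda_{\max}$ is identical after the reflection $\lambda\mapsto\lambda_{\min}+\lambda_{\max}-\lambda$, which maps $F'$ to $-F'$ and leaves $\|F''\|_{\infty,\Lambda}$ unchanged.

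\textbf{Step 1 (secants).} By Taylor's theorem with Lagrange remainder about $\lambda_b$, using only $F\in C^2(\Lambda)$,
\[
d_0=F'(\lambda_b)+\tfrac{h_0}{2}F''(\xi_0),
\qquad
d_1=\frac{F(\lambda_2)-F(\lambda_1)}{h_1}=F'(\lambda_b)+\mathcal O\big((h_0+h_1)\|F''\|_\infty\big).
\]
The bound for $d_1$ follows by expanding both $F(\lambda_2)$ and $F(\lambda_1)$ around $\lambda_b$. Alternatively, note that $d_1=F'(\zeta)$ for some $\zeta\in(\lambda_1,\lambda_2)$ and apply the mean value theorem to $F'$. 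Either way $|d_j-F'(\lambda_b)|\le 2h\|F''\|_\infty$ for $j=0,1$.

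\textbf{Step 2 (unlimited formula).} The coefficients $\alpha:=(2h_0+h_1)/(h_0+h_1)$ and $-\gamma:=-h_0/(h_0+h_1)$ satisfy $\alpha-\gamma=1$, and both are bounded by $2$, independently of mesh regularity. Therefore
\[
\widehat m-F'(\lambda_b)=\alpha\big(d_0-F'(\lambda_b)\big)-\gamma\big(d_1-F'(\lambda_b)\big),
\]
which gives $|\widehat m-F'(\lambda_b)|\le 3\cdot 2h\|F''\|_\infty$. Hence $\varepsilon:=\max\{|\widehat m-F'(\lambda_b)|,|d_0-F'(\lambda_b)|\}\le C h\|F''\|_\infty$. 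With a $C^2$ function we do not attempt to capture the second-order cancellation; first order is all the statement requires.

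\textbf{Step 3 (limiters).} This is the step where care is needed, and the key observation is a betweenness argument. If $m_b=0$ because $\widehat m$ and $d_0$ have opposite signs (or one of them vanishes), then $0$ lies in the closed interval between $\widehat m$ and $d_0$. If $m_b=3d_0$, then $d_0$ and $\widehat m$ have the same sign and $|d_0|\le 3|d_0|<|\widehat m|$, so $3d_0$ again lies between $d_0$ and $\widehat m$. Since both endpoints of that interval lie in $[F'(\lambda_b)-\varepsilon,\,F'(\lambda_b)+\varepsilon]$, so does every point between them. Hence $|m_b-F'(\lambda_b)|\le\varepsilon$ in every case. Combining with Step 2 yields $|m_b-F'(\lambda_b)|\le C h\|F''\|_{\infty,\Lambda}$.

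The only point I expect to need checking against the implementation is the exact form of the clipping rules, since some variants use a harmonic-mean end slope or the Fritsch--Butland variant. The betweenness argument is robust to this: it applies verbatim to any rule that outputs either a fixed convex/affine combination of the secants with bounded coefficients summing to one, or a value clipped to lie between such a value and $0$ or a multiple of a nearby secant.
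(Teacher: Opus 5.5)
Your proof is correct, but it handles the limiter cases differently from the paper's sketch. Both arguments start from first-order secant bounds $|d_j-F'(\lambda_b)|\le 2h\|F''\|_{\infty,\Lambda}$, $j=0,1$. From there the routes diverge.

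\emph{The paper} splits on the sign of $d_0d_1$. When $d_0d_1\le 0$ it asserts $m_b=0$ and uses the intermediate value theorem to place a zero of $F'$ in $(\lambda_0,\lambda_2)$. In the monotone case it simply ``inserts the bounds into the FC formula''.

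\emph{You} follow the actual branches of the non-centred three-point end rule. In place of the zero-of-$F'$ step, you observe that each clipped output ($0$ or $3d_0$) lies between the two $\mathcal O(h)$-accurate values $\widehat m$ and $d_0$.

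This gives you genuinely better coverage. For the rule you wrote down (the one implemented in SLATEC PCHIM, MATLAB and SciPy), $d_0d_1<0$ forces $\widehat m=\alpha d_0-\gamma d_1$ to have the sign of $d_0$. So a sign change never produces $m_b=0$; the output is $\widehat m$ or $3d_0$. Conversely, the clip to $0$ can only occur when $d_0d_1>0$ or $d_0=0$. The paper's dichotomy ($m_b=0$ exactly at a sign change) therefore does not describe this rule. Your betweenness argument handles all of its branches uniformly, with an explicit mesh-ratio-free constant $C=6$.

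The paper's version is shorter and parallels the interior-slope argument of Appendix~\ref{app:PCHIP_error_detail}, but it models the boundary limiter on the interior one. Its IVT step is in fact a special case of your idea, since $0$ lies between $d_0=F'(\epsilon_0)$ and $d_1=F'(\epsilon_1)$.

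Two minor remarks.
\begin{itemize}
\item The mesh-independent bound on $d_1$ comes from your mean-value route. Expanding $F(\lambda_1)$ and $F(\lambda_2)$ about $\lambda_b$ instead introduces the ratio $h_0/h_1$.
\item Your closing robustness claim is too broad as worded. The argument needs each clip target to lie between two accurate quantities, and for $0$ and $3d_0$ this is guaranteed only by the specific triggering conditions.
\end{itemize}
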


\begin{proof}[Proof sketch]
By MVT, $d_0=F'(\epsilon_0)$ and $d_1=F'(\epsilon_1)$ for some $\epsilon_0\in(\lambda_0,\lambda_1)$,
$\epsilon_1\in(\lambda_1,\lambda_2)$, hence $|d_0-F'(\lambda_0)|\le h_0\|F''\|_\infty$ and
$|d_1-F'(\lambda_0)|\le (h_0+h_1)\|F''\|_\infty$.
If $d_0d_1>0$, insert these bounds into the FC formula for $m_b$.
If $d_0d_1\le 0$, then $m_b=0$ and a zero of $F'$ exists in $(\lambda_0,\lambda_2)$, giving
$|F'(\lambda_0)|\le (h_0+h_1)\|F''\|_\infty$.
\end{proof}

\begin{corollary}[Short linear extrapolation error]\label{cor:short_extrap}
For $d>0$ small and $E[\widetilde F](\lambda_b\pm d):=\widetilde F(\lambda_b)\pm d\,m_b$,
\[
|E[\widetilde F](\lambda_b\pm d)-F(\lambda_b\pm d)|
\le d\,C h\,\|F''\|_{\infty,\Lambda}+\tfrac12 d^2\|F''\|_{\infty,\Lambda}.
\]

\noindent where $E[\widetilde F]$ denotes the short linear extrapolation of the interpolant introduced in (vii) of Theorem \ref{thm:consistency}
\end{corollary}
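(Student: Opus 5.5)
The plan is to combine the exact Taylor expansion of $F$ at the boundary node with the slope accuracy already established in Lemma~\ref{lem:pchip_boundary_slope}. Fix $d>0$ small enough that $\lambda_b\pm d$ lies in a neighbourhood on which $F$ is still $C^2$. I read $\|F''\|_{\infty,\Lambda}$ as bounding $F''$ on the segment between $\lambda_b$ and $\lambda_b\pm d$; this holds if $F$ is extended in $C^2$ slightly beyond $\Lambda$, or if the sup-norm is taken over the enlarged interval. Taylor's theorem with Lagrange remainder gives
\[
F(\lambda_b\pm d)=F(\lambda_b)\pm d\,F'(\lambda_b)+\tfrac12 d^2 F''(\xi_{b,d}),
\]
for some $\xi_{b,d}$ strictly between $\lambda_b$ and $\lambda_b\pm d$.

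Next I use that the PCHIP interpolant reproduces nodal values, so $\widetilde F(\lambda_b)=F(\lambda_b)$. Subtracting the expansion from the definition of $E[\widetilde F]$ then gives the exact error identity
\[
E[\widetilde F](\lambda_b\pm d)-F(\lambda_b\pm d)=\pm d\,\bigl(m_b-F'(\lambda_b)\bigr)-\tfrac12 d^2F''(\xi_{b,d}).
\]

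Taking absolute values and applying the triangle inequality leaves two terms. The second is bounded by $\tfrac12 d^2\|F''\|_{\infty,\Lambda}$. For the first I invoke Lemma~\ref{lem:pchip_boundary_slope}, $|m_b-F'(\lambda_b)|\le Ch\|F''\|_{\infty,\Lambda}$, which bounds it by $d\,Ch\|F''\|_{\infty,\Lambda}$. Together these give the stated bound.

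The only delicate point is the first one: controlling $F''$ outside $\Lambda$. Everything else is a direct substitution, and the constant $C$ is inherited unchanged from Lemma~\ref{lem:pchip_boundary_slope}.
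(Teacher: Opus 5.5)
Your proposal is correct and follows the paper's argument exactly. The paper does the same in step (vii) of Theorem~\ref{thm:consistency}: Taylor expansion of $F$ at $\lambda_b$, the interpolation identity $\widetilde F(\lambda_b)=F(\lambda_b)$, the triangle inequality, and Lemma~\ref{lem:pchip_boundary_slope} for $|m_b-F'(\lambda_b)|\le Ch\|F''\|_{\infty,\Lambda}$. Your caveat is also right: $\xi_{b,d}$ lies outside $\Lambda$, so the $\tfrac12 d^2$ term really needs $F''$ controlled on the enlarged interval, a point the paper passes over silently.
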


\newpage

\bibliographystyle{plain} 
\bibliography{SEbib.bib}     


\end{document}